\DeclareRobustCommand{\cev}[1]{%
  \mathpalette\do@cev{#1}%
}
\newcounter{theoremcounter}
\newtheorem{theorem}{Theorem}
\newtheorem{lemma}{Lemma}
\newtheorem{corollary}{Corollary}
\newenvironment{proof}[1][Proof]{\begin{trivlist}
\item[\hskip \labelsep {\bfseries #1}]}{\end{trivlist}}
\newcommand{\qed}{\hfill $\blacksquare$}
\newcommand{\bra}[1]{\langle #1|}
\newcommand{\ket}[1]{|#1\rangle}
\newcommand{\braket}[2]{\langle #1|#2\rangle}
\newcommand{\expval}[3]{\langle #1|#2|#3\rangle}
\newcommand{\ketbra}[2]{\ket{#1}\bra{#2}}
\DeclareMathOperator{\Tr}{Tr}
\let\Re\relax
\DeclareMathOperator{\Re}{\text{Re}}
\newcommand{\vast}{\bBigg@{4}}
\newcommand{\Vast}{\bBigg@{5}}
\begin{document}

\title{More Optimal Simulation of Universal Quantum Computers}
\author{Lucas Kocia}
\affiliation{Sandia National Laboratories, Livermore, California 94550, U.S.A.}
\author{Genele Tulloch}
\affiliation{North Carolina Central University, Durham, North Carolina 27707, U.S.A.}
\begin{abstract}
Validating whether a quantum device confers a computational advantage often requires classical simulation of its outcomes. The worst-case sampling cost of \(L_1\)-norm based simulation has plateaued at \(\le(2+\sqrt{2})\xi_t \delta^{-1}\) in the limit that \(t \rightarrow \infty\), where \(\delta\) is the additive error and \(\xi_t\) is the stabilizer extent of a \(t\)-qubit magic state. We reduce this prefactor \(68\)-fold by a leading-order reduction in \(t\) through correlated sampling. The result exceeds even the average-case of the prior state-of-the-art and current simulators accurate to multiplicative error. Numerical demonstrations support our proofs. The technique can be applied broadly to reduce the cost of \(L_1\) minimization.
\end{abstract}
\maketitle

Quantum computers hold great promise for solving many key problems of interest that do not scale favorably on classical computers\cite{Lloyd96,Grover96,Shor99}. On the way to fault-tolerant implementations~\cite{Shor96,Steane99,Raussendorf07}, it is vital to establish thresholds for validating demonstrations of quantum advantage~\cite{Ried15,Riste17,Bravyi18,Arute19,Bravyi20,Zhong20,Haferkamp20,Maslov21,Wu21_2}. This often requires directly simulating the outcomes of a noisy quantum device on classical computers~\cite{Harrow17,Terhal18,Preskill18}, which is widely considered to scale exponentially asymptotically with a universal resource~\cite{Lloyd95,Divincenzo95,Knill04,Bravyi05}.

A great deal of recent work on classical simulation has focused on optimizing tensor-network methods~\cite{Markov08,Pednault19,Huang20_3,Pan20,Zhou20,Gray21,Pan21,Liu21,Feng22}, which scale w.r.t.~the treewidth of a graph corresponding to a circuit's connectivity. There is also a push for \(L_1\)-norm based sampling~\cite{Bravyi16_1,Howard18,Seddon20,Pashayan21}, which scale w.r.t.~the number of magic states when a circuit is reexpressed as a measurement-based computation~\cite{Raussendorf01}. These are complementary efforts; contracted graph states produced from the former can serve as the input for \(L_1\)-norm based outcome estimation~\cite{Markov08}. It is important to reduce the cost of both steps as near-term quantum devices grow. These classical algorithms are frequently called weak simulators, since they are accurate to additive error in contrast to multiplicative error (i.e. strong simulation).

Frequently, few-qubit tensored magic states are sufficient for efficient decomposition of universal circuits~\cite{Trout15,Campbell17}. Classical simulators often decompose magic states into stabilizer states with minimal \(L_1\) norm because their inner products can be calculated particularly efficiently~\cite{Aaronson04,Bravyi16_1,Bravyi16_2,Howard18}. These methods use independent and identically distributed (i.i.d.)~sampling of stabilizer states. Early algorithms required \(\mathcal O(\xi_t \delta^{-2})\) samples~\cite{Bravyi16_1,Bravyi16_2,Howard18}, where \(\delta\) is the additive error and \(\xi_t\) is the stabilizer extent of a \(t\)-qubit magic state (e.g.~\(\xi_t = 2^{\sim 0.228 t}\) for the \(T\) gate). Their requirements were reduced to \((2+\sqrt{2})\xi_t \delta^{-1}\) samples~\cite{Seddon20}, which is the current state-of-the-art. Asymptotic worst-case reductions have since abated leaving open whether further reduction is possible. 

We will show how supplementing i.i.d.~samples with correlated samples produces a leading-order reduction in \(t\) that greatly lowers the asymptotic prefactor of weak simulation. This increases the size of universal quantum circuits that are simulatable by classical computers by an asymptotic factor of \({\sim} 68\). Figure~\ref{fig:scaling_improvement} summarizes how this reduction outperforms the prior state-of-the-art for the \(T\) gate, and even outperforms current strong simulators.

\begin{figure}[H]
  \includegraphics[scale=0.28]{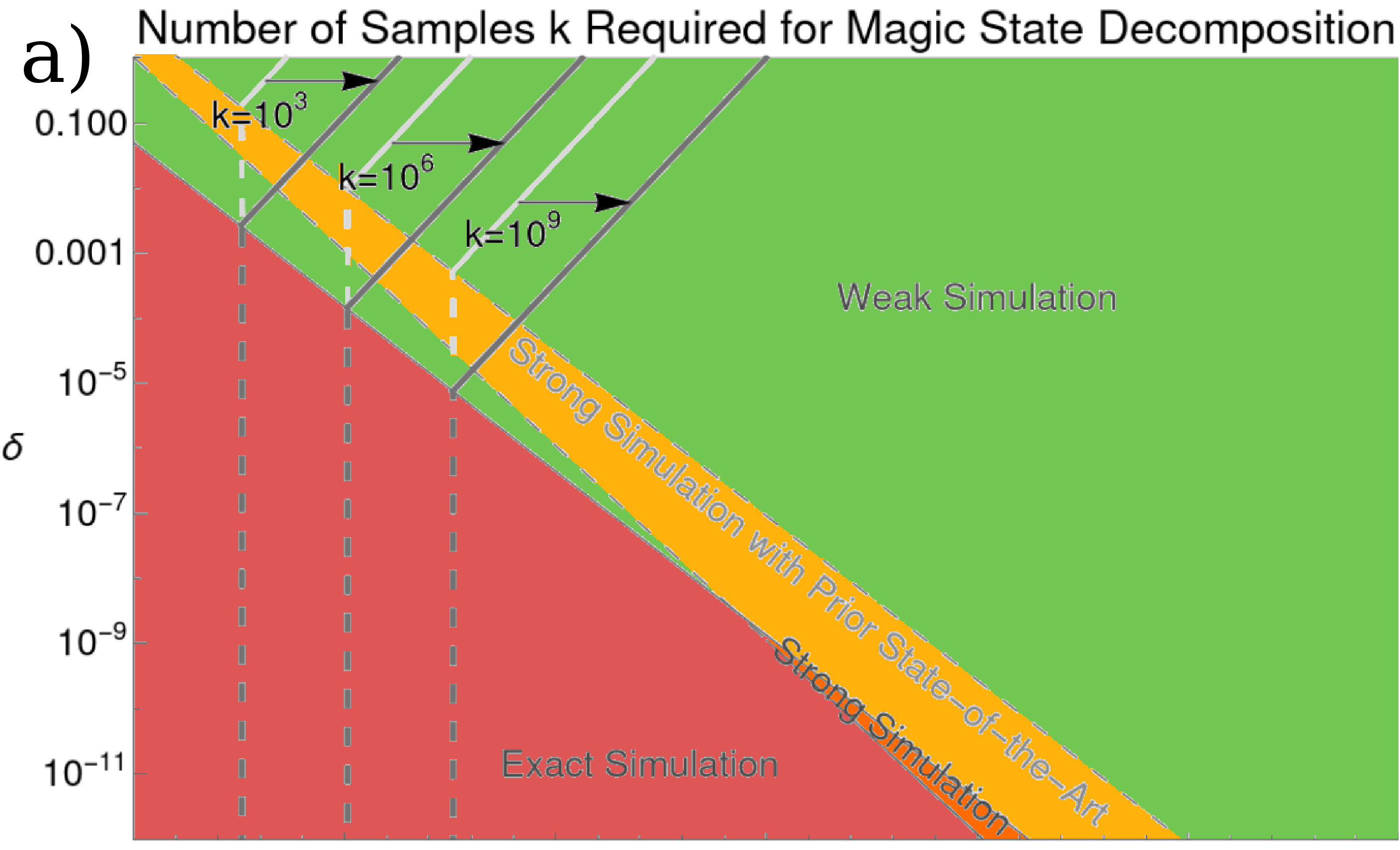}
  \includegraphics[scale=0.284]{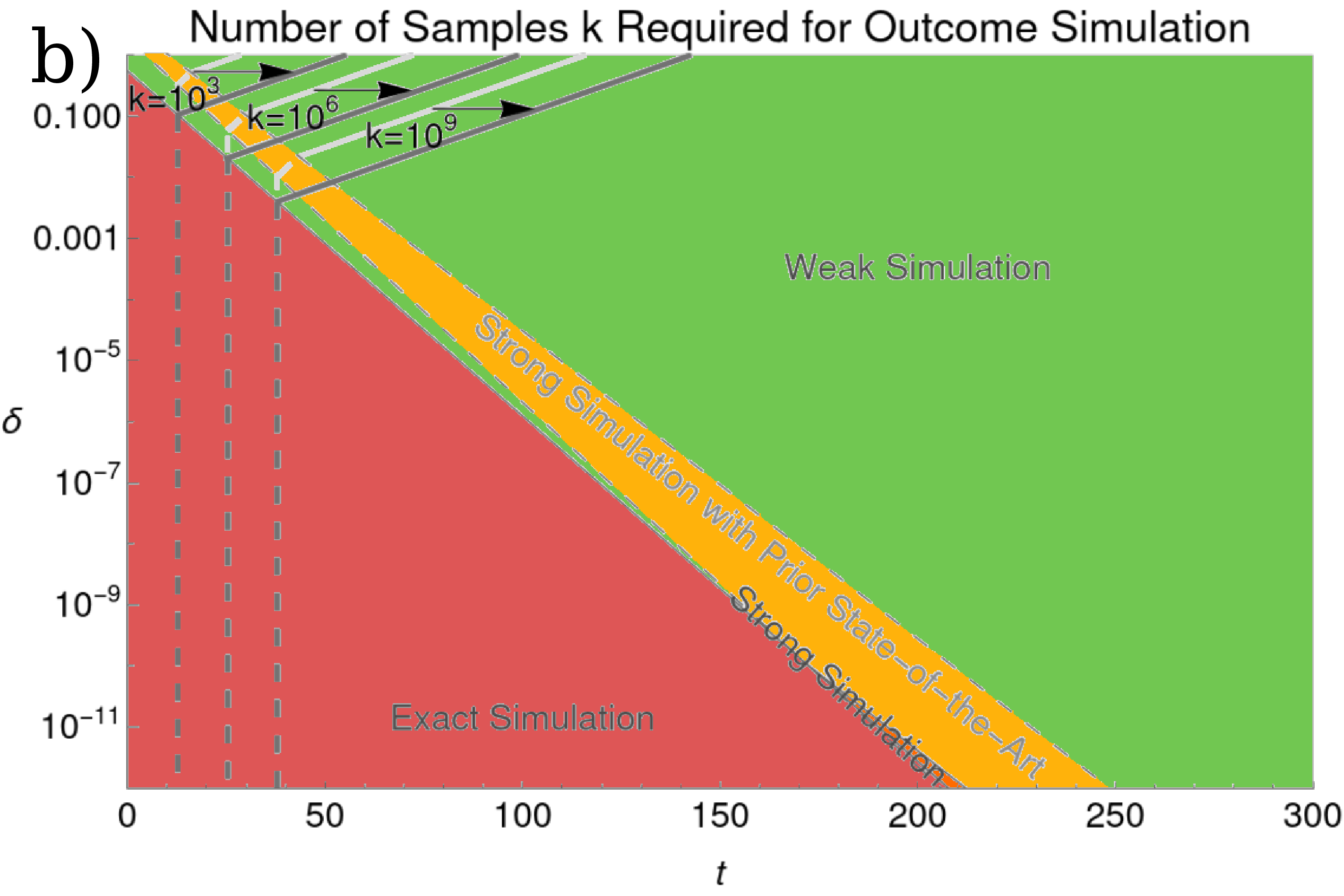}
  \caption{A plot of classical simulation cost w.r.t.~\(t\) and \(\delta\) showing where it scales optimally under the leading weak, strong and exact simulation algorithms in terms of a) decomposing \(T\) gate magic states into stabilizer states and b) estimating outcomes from this decomposition. The grey curves correspond to simulation with a constant number of stabilizer states and show the reduction by \(t\approx 27\) magic states by the new algorithm (a \(68\)-fold decrease). Though these plots assume asymptotic scaling for all algorithms at all \(t\) and \(\delta\), numerical results demonstrate that they are nearly indiscernible from more accurate plots (see Figure~\ref{fig:numerics} and Figure~\ref{fig:numerics2}).}
\label{fig:scaling_improvement}
\end{figure}

\section{Weak Simulation Exposition}
\label{exposition}

Quantum computers output outcomes that are effectively correct only up to an additive error. Any interactive test using polynomial resources finds quantum devices efficiently indistinguishable even when their outputs have additive error~\cite{Aaronson11,Aaronson14,Fefferman15,Morimae17,Pashayan20}. This greatly simplifies their classical simulation compared to systems that are accurate to multiplicative error. In particular, weak simulation should be more efficient in the worst-case when errors are not too small since its complexity class, BQP-complete, is contained in strong simulation's complexity class (\#P-hard)~\cite{Pashayan20}.

One way to see how additive error constraints simplify problems is through the way that they reduce simulation optimization to an \(L_1\) minimization~\cite{Bravyi16_1}. \(L_1\) minimizations, which favor relaxed (convex) sparse support, are known to scale more favorably than similar \(L_0\) minimizations~\cite{Natarajan95,Ge11}, which more strongly favor sparse support and are more relevant for multiplicative error. Given any state decomposed into an orthonormal basis, \(\ket{\Psi} = \sum_i c_i \ket{\phi_i}\), the \(L_1\) norm of the state is the sum of the absolute values of its coefficients, \(\|\ket{\Psi}\|_1 \equiv \sum_i |c_i|\). This is a multiplicative quantity, \(\|\ket{\Psi}^{\otimes t}\|_1 = \|\ket{\Psi}\|_1^t\), and is basis-independent.

Here we will be interested in stabilizer state decompositions of given states \(\ket \Psi\). Stabilizer states are generally neither orthogonal nor are their decompositions of states unique~\cite{Gottesman98}. Therefore, since norms are basis-independent, we cannot strictly associate their decompositions with an \(L_1\) norm. Instead, we refer to the \emph{stabilizer extent}, \(\xi\): the minimum sum squared of absolute values of the coefficients of stabilizer states decomposing a given state, minimized over all such decompositions~\cite{Howard18}.

The stabilizer extent shares an important property with true minimal \(L_1\) norms: it is multiplicative, though for only one-, two-, and three-qubit tensored states~\cite{Howard18,Heimendahl21}. Often few-qubit tensored magic states are sufficient for efficient decomposition of universal circuits~\cite{Trout15,Campbell17}, so this multiplicativity property can be usefully applied.

For this reason and for easier exposition, we initially concern ourselves with \(t\)-tensored one-qubit magic states, \(\ket{\Psi_\phi}^{\otimes t}\). These can be parametrized by their angle \(\phi\) w.r.t.~the top pole of the Bloch sphere,
\begin{equation}
\label{eq:diagstate}
\ket{\Psi_\phi}^{\otimes t} \equiv (2 \nu)^{-t} \sum_{\substack{x \equiv x_1\cdots x_t\\x_i \in \{\tilde 0,\tilde 1\}}} \ket{x_1} \otimes \cdots \otimes \ket{x_t},
\end{equation}
where 
\begin{eqnarray}
\ket{\tilde 0} \equiv \frac{i}{\sqrt{2}} (-i + e^{-\pi i/4}) (-i + e^{i \phi}) \ket{0},\\
\ket{\tilde 1} \equiv \frac{i}{\sqrt{2}} (1 + e^{-\pi i/4}) (1 - e^{i \phi}) \frac{1}{\sqrt{2}} (\ket{0} + \ket{1}),
\end{eqnarray}
and \(\nu \equiv \cos \pi/8\). The stabilizer decomposition into \(\ket{\tilde 0}\) and \(\ket{\tilde 1}\) saturates \(\ket{\Psi_\phi}^{\otimes t}\)'s stabilizer extent,
\begin{equation}
\left(\sum_{x \in \{\tilde 0, \tilde 1\}^t} |c_{x}|\right)^2 = \left(\sqrt{1 - \sin\phi} + \sqrt{1 - \cos\phi}\right)^{2t} \equiv \xi_t(\phi).
\end{equation}
As a result, it is the most efficient stabilizer state decomposition to use when representing \(\ket{\Psi_\phi}^{\otimes t}\) up to some additive error \(\delta\) using \(L_1\)-norm sampling techniques.

\section{Sparsification}
\label{sparsification}

Additive error not only simplifies things by picking out the most efficient stabilizer decomposition, it also permits us to sparsify the stabilizer state decomposition of \(\ket{\Psi_\phi}^{\otimes t}\) given by Eq.~\ref{eq:diagstate} to produce an approximation, \(\ket{\psi}\). We only need to keep enough \(\ket{x}\) states such that \(\|\ket{\psi} - \ket{\Psi_\phi}^{\otimes t}\|_1 \le \delta\).

\begin{figure}[h]
\input{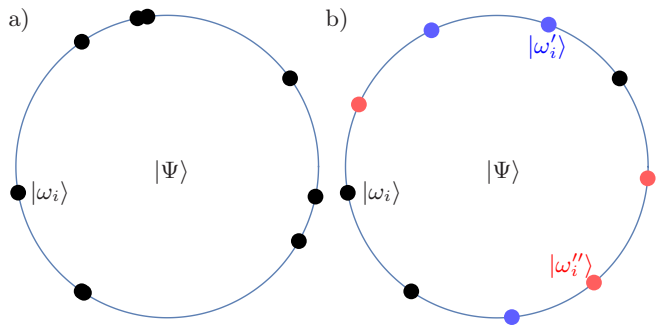}
\caption{Sketch of the different ensembles produced by (a) i.i.d.~samples and (b) correlated \(L_1\) samples. The expectation value of the i.i.d.~stabilizer state samples, \(\{\ket{\omega_i}\}_i\), will approach the desired state \(\ket{\Psi}\). However, any finite set will on average deviate from this expectation value. This ensemble can be transformed to a correlated one by supplementing a subset with mutually dissimilar states, \(\ket{\omega_i'}\) and \(\ket{\omega_i''}\), and discarding the rest. The expectation value of any such correlated finite set will be closer to \(\ket{\Psi}\).}
\label{fig:sketch}
\end{figure}

A simple way for selecting which \(\ket{x}\) to keep, is to treat the set \(\{\ket{x}\}\) as an ensemble defined by an i.i.d.~random variable. Given an ensemble of \(k\) i.i.d.~stabilizer states \(\{\ket{x}\}\) such that \(\mathbb E (\ket{x}) = \|c\|_1^{-2} \ket{\Psi_\phi}^{\otimes t}\), it follows that~\cite{Bravyi16_1, Bravyi16_2}
\begin{eqnarray}
  \label{eq:expval}
  \mathbb E (\| \ket{\Psi_\phi}^{\otimes t} - \ket{\psi}\|^2) &\le& \frac{\xi_t(\phi)}{k} - \frac{\gamma}{k},
\end{eqnarray}
where \(\gamma\) defines the contribution from the inner product cross terms. For i.i.d.~sampling, these cancel out and \(\gamma = 1\). The number of i.i.d.~samples must be \(k \in \mathcal O(\xi_t \delta^{-2})\) for this expectation value to be less than or equal to \(\delta^2\).

If i.i.d.~sampling is supplemented with correlated sampling, it is possible to obtain this expectation value with fewer samples. In particular, by supplementing each independently sampled state in \(k/f_t\) subsets of the ensemble with \(f_t\) other stabilizer states and discarding the rest, \(\gamma\) becomes
\begin{equation}
  \label{eq:gamma}
 \gamma = 1 + f_t - \sum_j^{f_t} \xi_t(\phi) \mathbb E(\braket{x}{y(x;j)}),
\end{equation}
where \(y(x;j)\) maps \(x\) to the \(f_t\) bitstrings corresponding to the states \(\{\ket{y(x;j)}\}_{j=1}^{f_t}\) that \(\ket{x}\) is correlated with. If the third term in Eq.~\ref{eq:gamma} can become negligible, then \(\gamma\) will decrease the expectation value in Eq.~\ref{eq:expval} by \(f_t\). This decreases the number of samples required to satisfy Eq.~\ref{eq:expval} down to \(k \in \mathcal O((\xi_t-f_t)\delta^{-2})\).

For the third term in Eq.~\ref{eq:gamma} to become asymptotically negligible compared to \(1{+}f_t\), the average inner product between correlated supplemental states must be upper bounded by the inverse stabilizer extent,
\begin{equation}
  \label{eq:constraint}
\mathbb E(\braket{x}{y(x;j)}) \in \mathcal O(\xi_t(\phi)^{-1}).  
\end{equation}
This means that the supplemental states should have the smallest possible inner product with each other and be as numerous as possible (i.e. be maximally mutually ``dissimilar'' from each other).
A sketch of the phenomenon can be found in Figure~\ref{fig:sketch}. With such a supplemental set of correlated states, the expectation value of the ensemble converge more quickly to \(\ket{\Psi_\phi}^{\otimes t}\) on average.

To choose optimal correlated sets of supplemental stabilizer states, we can distinguish between candidate stabilizer state subsets by their different average inner products. For one-qubit magic states, it is useful to take advantage of the property that these are incurred by mutual bitflips among correlated \(\ket{x}\) and \(\ket{y(x;j)}\) states. The number of bits, \(\alpha t\), that supplemental \(t\)-bitstrings must differ by can be determined by bounding the \(\alpha t\) product of bitflipped one-qubit states with Eq.~\ref{eq:constraint},
\begin{align}
\braket{\tilde 0}{\tilde 1}^{\alpha t} =& \left((2^{-\frac{1}{2}} +1) (\sin \phi +\cos \phi -1) \right)^{\alpha t} \le \xi^{-1}_t(\phi),
\end{align}
which is satisfied when
\begin{equation}
  \label{eq:bitflip_constraint}
  \alpha \le \frac{2 \left(2 \log 2+\log \left(1-\frac{1}{\sqrt{2}}\right)\right)}{\log 2} \approx 0.46,
\end{equation}
corresponding to mutually differing by approximately \(t/2\) bits on average. 

It is possible to generate a supplemental set of \(2t-1\) \(t\)-bitstrings that satisfy Eq.~\ref{eq:bitflip_constraint} with no asymptotic overhead:
\setcounter{lemma}{0}
\begin{lemma}[\(2t-1\) Generation]
  \label{le:linear_generation}
Given a \(t\)-bitstring for \(t\) a power of \(2\), there exist \(2t-1\) additional bitstrings that mutually differ from each other and the given bitstring by at least \(t/2\) bitflips and can be found in subquadratic time.
\end{lemma}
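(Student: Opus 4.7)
The plan is to realize the $2t-1$ required supplemental bitstrings as shifts of the codewords of the first-order Reed–Muller code $RM(1,\log_2 t)$, equivalently the binary Hadamard code built from the $t\times t$ Sylvester–Hadamard matrix. Given the input bitstring $x$, I would output the $2t-1$ strings $x \oplus c$, one for each nonzero codeword $c$ of the code. Because bitwise XOR is an isometry of the Hamming metric on $\{0,1\}^t$, both the pairwise Hamming distances among the outputs and the distances between each output and $x$ are inherited exactly from the Hamming weights of the nonzero codewords and of their pairwise XORs.

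The first key step is the distance bound. I would construct the Sylvester–Hadamard matrix recursively via $H_1 = (1)$ and $H_{2m} = \bigl(\begin{smallmatrix} H_m & H_m \\ H_m & -H_m \end{smallmatrix}\bigr)$, map $\pm 1 \mapsto 0/1$, and adjoin bitwise complements to obtain $2t$ codewords of length $t$. Mutual orthogonality of distinct rows of $H_t$ translates directly into a Hamming distance of exactly $t/2$ between noncomplementary codewords, while complementary pairs differ in all $t$ positions. Linearity of the code upgrades this to the statement that every nonzero codeword has weight at least $t/2$ and every pairwise XOR of distinct codewords is itself a nonzero codeword of weight at least $t/2$, which is exactly the hypothesis needed to satisfy Eq.~\ref{eq:bitflip_constraint}.

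The second key step is the runtime. I would rely on the closed form $(H_t)_{ij} = (-1)^{\langle i, j \rangle}$, where $\langle i, j \rangle$ denotes the bitwise inner product of the $\log_2 t$-bit binary representations of the indices $i$ and $j$. A fast Walsh–Hadamard-style recursion then emits all $2t$ codewords using $O(t \log t)$ additions on $t$-bit registers, and the final XORs with $x$ contribute only $O(t)$ more register-level operations, giving $O(t \log t)$ word-level operations overall, which is subquadratic in $t$.

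The main obstacle I anticipate is the precise interpretation of ``subquadratic time'': the full binary output already contains $\Theta(t^2)$ bits, so the claim can only hold if time is measured in word-level operations on $t$-bit registers (or equivalently as overhead beyond simply emitting the output), rather than in raw bit operations. Once that accounting convention is fixed, both the distance bound and the runtime follow from entirely standard Hadamard/Reed–Muller facts, so there should be no substantive technical difficulty beyond carefully stating the model of computation.
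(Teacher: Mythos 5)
Your proposal is correct, and in fact it constructs \emph{exactly the same set} of bitstrings as the paper: the paper's clone-and-complement recursion (Algorithm~\ref{alg:2tbitflips} and Table~\ref{tab:binarytree}) is precisely the Sylvester recursion for the Hadamard/first-order Reed--Muller code, and XORing the paper's supplemental strings for the all-ones input against $\mathbf{1}$ recovers the $2t-1$ nonzero codewords of $RM(1,\log_2 t)$ (one can check this explicitly at $t=4$ and $t=8$ against Table~\ref{tab:binarytree}). The difference is in how the distance property is certified. The paper proves it by induction, splitting the $t=2^{k+1}$ pairs into cases ($xx\cdot y_iy_i$, $y_iy_i\cdot y_jy_j$, etc.) and bounding each inner product by hand; you instead invoke linearity plus the standard minimum-weight fact for $RM(1,m)$, which collapses all the casework into the single observation that every pairwise XOR of distinct codewords is a nonzero codeword of weight at least $t/2$. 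Your route is shorter, less error-prone, and immediately suggests generalizations (e.g.\ other codes with large minimum distance give other supplemental families), while the paper's bespoke induction is self-contained and maps directly onto its R implementation. You also correctly flag the one genuine subtlety in the statement, which the paper handles only implicitly: the raw output is $\Theta(t^2)$ bits, so ``subquadratic'' must be read as cost per supplemental bitstring (the paper divides its $\mathcal O(t^2\log t)$ total by the $2t-1$ strings produced to get $\mathcal O(t\log t)$ each), which matches your word-RAM accounting.
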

See Appendix~\ref{app:linear_scaling} for the constructive proof.

An algorithm that generates \(2t{-}1\) additional bitstrings that differ by at least \(t/2\) bitflips (such as those given in Table~\(1\) in Appendix~\ref{app:linear_scaling}), is given in Algorithm~\ref{alg:2tbitflips}. 

\begin{algorithm}[H]
  \KwData{$k$ such that $t=2^{k}$.}
  \KwResult{\emph{bitstring} array.}
  \Begin{
    \emph{yis} \(\leftarrow\) \(\{\alpha = 10, \beta = 01, \gamma = 00\}\)\;
    \For{treedepth \(\leftarrow 1\) \KwTo \(k-1\)}{
      \emph{newyis} \(\leftarrow\) \(\{\}\)\;
      \For{i \(\leftarrow 1\) \KwTo \(length(yis)\)}{
        \emph{newyis}[$i$] \(\leftarrow\) clone \emph{yis}[$i$]\;
        \emph{newyis}[$i+length(yis)$] \(\leftarrow\) complement lower half bits of \emph{newyis}[i]\;
      }
      \emph{newyis}[$(2\times length(yis)+1)$] \(\leftarrow\) complement upper half bits of \emph{newyis}[$3$]\;
      \emph{yis} \(\leftarrow\) \emph{newyis}\;
    }
  }
  \caption{Generate additional bitstrings that differ from the \(t\)-bitstring of all \(1\)s by at least \(t/2\) bitflips.}
  \label{alg:2tbitflips}
\end{algorithm}

The outer loop is over \(log(t)-1\) elements and the inner loop is over \(2(t/2)-1\) elements that involve operations on \(t\) bits, together producing \(\mathcal O(t^2 \log t)\) elements. Therefore, the runtime is \(\mathcal O(t^2 \log t)\) Since this runtime generates \(2t-1\) bitstrings, the overall scaling should be divided by this number. This means that the overall runtime is \(\mathcal O(t \log t)\) to generate each supplemental bitstring. This is confirmed by numerical evaluation as depicted in Figure~\ref{fig:2tgenerationtiming}.

Since this cost is added to the overall cost of weak simulation, which is exponential in \(t\), it is asymptotically negligible.

\begin{figure}[H]
\includegraphics[scale=0.3]{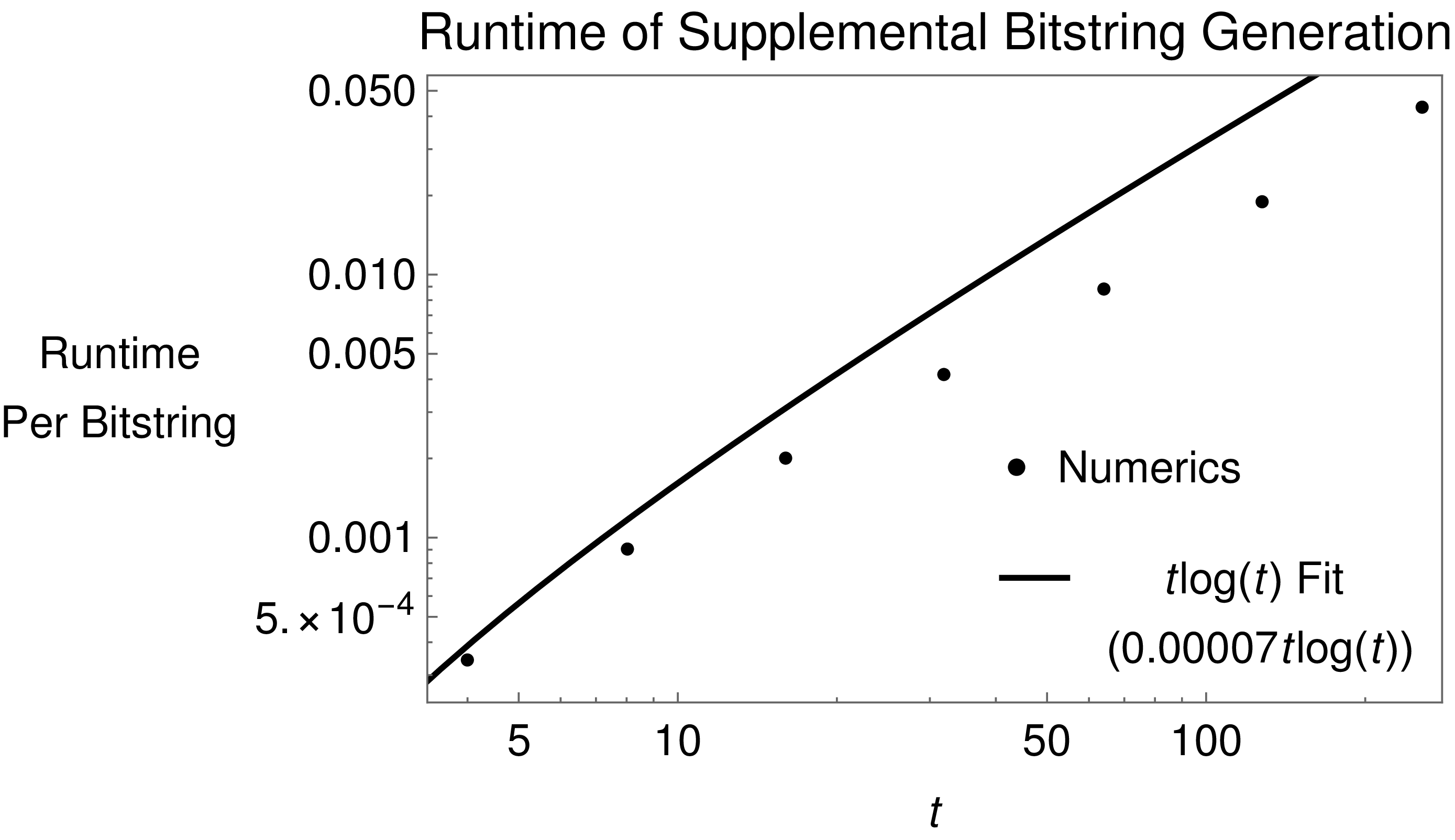}
\caption{Plot depicting \(t\log t\) runtime of Algorithm~\ref{alg:2tbitflips} implemented by the R code in Appendix~\ref{app:linear_scaling_code}.}
\label{fig:2tgenerationtiming}
\end{figure}

More than \(2t-1\) \(t\)-bitstrings can be effectively generated by using longer bitstrings and tracing out the extra bits at the end of the calculation, but this comes with cubic asymptotic overhead (see Appendix~\ref{app:superlinear_scaling} for Lemma~\ref{le:general_generation}). Figure~\ref{fig:supplemental_generation} shows that this allows for supplementing \(t \lesssim 24\) one-qubit magic states for \(\delta = 0.1\) with no overhead, and a larger range of \(t\) for smaller values of \(\delta\). These results can be trivially extended to generate supplemental states for all even \(t\) (see Appendix~\ref{app:superlinear_scaling} for Corollary~\ref{co:general_generation}).

\begin{figure}[t]
  \includegraphics[scale=0.6]{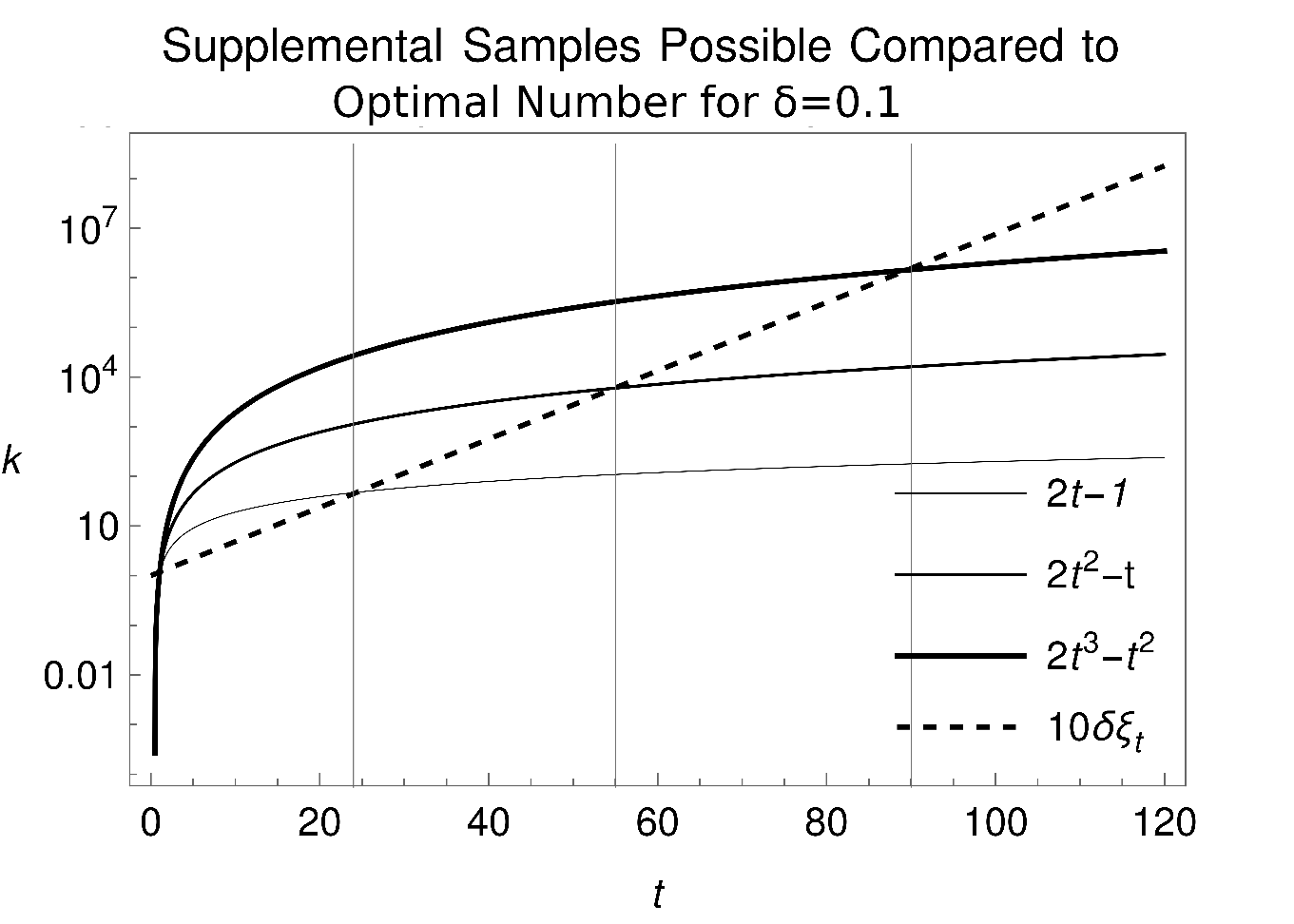}
\caption{The number of correlated supplemental states subject to the constraint \(\mathbb E(\braket{\tilde x}{\tilde y(x;j)}) \in o(\xi_t(\phi)^{-1})\) that can be produced is \(2t-1\) (from Lemma~\ref{le:linear_generation}). This can be further increased by \(\alpha\) according to Lemma~\ref{le:general_generation} at the expense of a prefactor of \(\alpha^3\). Supplemental state scaling of \(2t-1\), \((2t-1)t\) and \((2t-1)t^2\) are plotted as solid curves. The number required for optimal weak simulation of \(\rho_1\) and the \(T\) gate matic state is \(10 \delta \xi_t\) (dashed line is for \(\delta = 0.1\)). Therefore, an adequate number of supplemental correlated states can be produced by Lemma~\ref{le:linear_generation} methods for \(\delta=0.1\) when \(t\lesssim 22\), and higher \(t\) values must use Lemma~\ref{le:general_generation}. For \(\delta<0.1\), the required number (dashed curve) is reduced and so the supplementable domain of Lemma~\ref{le:linear_generation} of \(t\) increases.}
  \label{fig:supplemental_generation}
\end{figure}

\section{Correlated Weak Simulation}
\label{correlated_weak_sim}

Though supplementing with correlated stabilizer states that satisfy \(\mathbb E(\braket{\tilde x}{\tilde y(x;j)}) \in o(\xi_t(\phi)^{-1})\) reduces the expectation value given in Eq.~\ref{eq:expval}, it also increases its variance. If the distribution is already sufficiently peaked because \(\delta\) is small enough, this increase is negligible but as we have already seen in Section~\ref{sparsification}, this still only leads to a next-order in \(t\) reduction, \(\mathcal O((\xi_t - f_t)\delta^{-2})\), which is asymptotically negligible. This is formalized in the following theorem:
  \begin{theorem}[Correlated Sampling of \(\ket{\Psi}\)]
  \label{th:sparsifyupperbound}
  Given sets of \(f_t+1\) correlated stabilizer states \(\{\omega_i\}_{i=1}^{f_t}\) that satisfy \(\mathbb E(\braket{\omega_i}{\omega_j}) < \xi_t^{-1}\) and decompose the \(t\)-qubit state \(\ket{\Psi}\) with coefficients \(\|c\|_1^2\) saturating its stabilizer extent \(\xi_t\), a classical algorithm exists that creates a \(\delta\)-approximate stabilizer decomposition \(\ket{\psi}\) of \(\ket{\Psi}\),
  \begin{equation}
    \label{eq:th1}
    \|\ket{\Psi} - \ket{\psi}\| \le \delta,
  \end{equation}
  with \(\mathcal O((\xi_t{-}f_t) \delta^{-2} )\) states for \(t\) sufficiently large such that \(\delta^2 \gg (\xi_t -f_t)^{-1}\).
\end{theorem}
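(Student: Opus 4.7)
The plan is to follow the sparsification recipe sketched in Section~\ref{sparsification} by combining i.i.d.~sampling of stabilizer states (drawn with probability $|c_i|/\|c\|_1$ from the optimal decomposition) with blocks of correlated supplemental states, then compute the expected squared error of the resulting mixed estimator and apply a concentration inequality to turn the expectation bound into the stated norm bound on $\|\ket{\Psi}-\ket{\psi}\|$.

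First I would formalize the estimator: partition the budget of $k$ states into $k/(f_t+1)$ independent blocks, each seeded by one i.i.d.~stabilizer sample $\ket{x}$ and completed by its $f_t$ designated partners $\{\ket{y(x;j)}\}_{j=1}^{f_t}$ supplied by Lemma~\ref{le:linear_generation} (or its Appendix strengthening). Defining $\ket{\psi}$ as the rescaled empirical average so that $\mathbb{E}\ket{\psi}=\ket{\Psi}$ and expanding $\mathbb{E}\|\ket{\Psi}-\ket{\psi}\|^2$ into diagonal and cross terms within each block recovers Eq.~\ref{eq:expval} with the cross-term contribution equal to the $\gamma$ of Eq.~\ref{eq:gamma}. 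The hypothesis $\mathbb{E}\braket{\omega_i}{\omega_j}<\xi_t^{-1}$ then makes $\sum_{j=1}^{f_t}\xi_t\mathbb{E}\braket{x}{y(x;j)}<f_t$, and under the slightly stronger $o(\xi_t^{-1})$ version of Eq.~\ref{eq:constraint} the sum is $o(1{+}f_t)$, so $\gamma \ge (1+f_t)(1-o(1))$ and $\mathbb{E}\|\ket{\Psi}-\ket{\psi}\|^2 \le (\xi_t-f_t)(1+o(1))/k$. Setting this $\le \tfrac12 \delta^2$ immediately yields $k \in \mathcal{O}((\xi_t-f_t)\delta^{-2})$.

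To lift the expectation bound to a statement about the norm itself, I would apply Markov's inequality to $\|\ket{\Psi}-\ket{\psi}\|^2$, or Chebyshev after a second-moment estimate; amplifying to arbitrary success probability costs only an outer logarithmic factor and does not change the leading $(\xi_t-f_t)\delta^{-2}$ scaling. The side condition $\delta^2 \gg (\xi_t-f_t)^{-1}$ enters precisely here: because supplemental states within a block are deterministically tied to their seed, block-internal fluctuations no longer telescope as in the fully i.i.d.~case, and the extra variance contributed by the within-block cross terms $\braket{y(x;j)}{y(x;j')}$ is subleading only when this large-$t$ inequality holds.

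The main obstacle I anticipate is exactly this variance analysis of the correlated estimator. The expectation calculation is a direct expansion once Eqs.~\ref{eq:expval}--\ref{eq:gamma} are in hand, and the inner-product hypothesis does the rest of the mean bound. A careful second-moment bound, however, must control $\mathbb{E}\bigl[\sum_{j,j'}\braket{y(x;j)}{y(x;j')}\bigr]$ and show it does not overwhelm the mean-squared error; this is where the hypothesis on the supplemental set (each pair dissimilar, not merely each pair correlated with the seed) is used in full, and where the regime $\delta^2(\xi_t-f_t)\gg 1$ is both sufficient and essentially necessary for the concentration step to close.
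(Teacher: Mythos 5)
Your proposal matches the paper's proof in all essentials: the same block construction of one i.i.d.~seed plus $f_t$ correlated partners, the same expansion of $\mathbb E\|\ket{\Psi}-\ket{\psi}\|^2$ into cross terms yielding $\gamma \rightarrow 1+f_t$ under the inner-product hypothesis, and the same placement of the condition $\delta^2 \gg (\xi_t-f_t)^{-1}$ at the concentration step. The only cosmetic difference is that the paper lifts the expectation bound via the exponential sparsification tail bound of Howard et al.~(with post-selection or FASTNORM to certify $\braket{\psi}{\psi}\approx 1$) rather than Markov/Chebyshev with amplification, which changes nothing in the stated scaling.
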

See Appendix~\ref{app:nextorderintsamplingbounds} for more details of this derivation and Appendix~\ref{app:numericsofnextorderint} for numerical results that support the proofs.

To increase the effect from correlated sampling, we might consider renormalizing the sampling ensemble. Heuristically, this will upgrade next-order effects to leading-order. In fact, renormalization produces the current state-of-the-art. By using the renormalized density matrix, \(\rho_1 \equiv \mathbb E \left(\frac{\ketbra{\psi}{\psi}}{\braket{\psi}{\psi}}\right)\), the number of i.i.d.~samples required to obtain \({\le} \delta\) error decreases from \(\mathcal O(\xi_t \delta^{-2})\) to \((2+\sqrt{2})\xi_t\delta^{-1}\) for Clifford magic states~\cite{Seddon20}. The transition to density matrices changes the equivalent error threshold from \(\delta^2\) to \(\delta\) in Eq.~\ref{eq:th1}~\cite{Seddon20}.

This cost reduction by \(\delta^{-1}\) compared to prior methods is due to the renormalization averaging out coherent errors in each i.i.d.~ensemble~\cite{Seddon20}. Correlated samples should enhance this effect by producing coherent errors that, instead of averaging out, constructively add to decrease the total error.
We find this to be the case and that renormalized correlated sampling produces a leading-order reduction in \(t\): 
\begin{theorem}
  \label{th:ensemblebound}
  Given \(f_t\) correlated states that mutually satisfy \(\mathbb E(\braket{\tilde x_i}{\tilde y_{\sigma_i(j)}}) < \xi_t^{-1}\), then \(\lesssim 0.05 \xi_t \delta^{-1}\) states are sufficient in the sparsification of \(\ket{\psi}\) in order for \(\|\mathbb E\left[\frac{\ketbra{\psi}{\psi}}{\braket{\psi}{\psi}}\right] - \ketbra{\Psi}{\Psi}\| \le \delta\).
\end{theorem}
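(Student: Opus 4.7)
The plan is to extend Seddon et al.'s renormalized-density-matrix analysis by replacing each i.i.d.\ sample with a correlated block of $f_t{+}1$ stabilizer states, then to track how the coherent errors internal to each block suppress the leading variance of $\rho_1 \equiv \mathbb E[\ketbra{\psi}{\psi}/\braket{\psi}{\psi}]$.

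First I would organise the $k$ drawn states into $k/(f_t{+}1)$ independent blocks, with the root index of each block drawn i.i.d.\ from the extent-saturating distribution and the remaining $f_t$ indices fixed deterministically relative to the root by the construction of Lemma~\ref{le:linear_generation}. Expanding $\ketbra{\psi}{\psi}$ and $\braket{\psi}{\psi}$ as double sums over all $k$ states then splits into inter-block terms, which factor into products of marginals, and intra-block terms, which involve the controlled inner products $\braket{\tilde x_i}{\tilde y_{\sigma_i(j)}}$.

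Next, following Seddon, I would Taylor-expand $1/\braket{\psi}{\psi}$ about its mean to first nontrivial order --- the step that converts the $\delta^{-2}$ scaling of Theorem~\ref{th:sparsifyupperbound} to $\delta^{-1}$ and that produces the $(2{+}\sqrt 2)$ prefactor in the i.i.d.\ case. With correlated blocks, the inter-block contributions reproduce Seddon's bound of $(2{+}\sqrt 2)\xi_t/[k\delta]$, while the intra-block contributions generate an additional coherent term whose sign structure, fixed by the deterministic correlations of Lemma~\ref{le:linear_generation} and bounded by the hypothesis $\mathbb E(\braket{\tilde x_i}{\tilde y_{\sigma_i(j)}})<\xi_t^{-1}$, \emph{subtracts} from the dominant noise rather than adding to its variance. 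This coherent suppression is the mechanism explicitly anticipated in the paragraph preceding the theorem. Setting $\|\rho_1-\ketbra{\Psi}{\Psi}\|\le\delta$ then yields a required sample count of the form $k\le c(f_t)\,\xi_t\,\delta^{-1}$; optimising $c(f_t)$ over $f_t$, subject to the generation capacity of Lemmas~\ref{le:linear_generation} and~\ref{le:general_generation}, places the minimum near $f_t\sim 10\delta\xi_t$ (the dashed curve of Figure~\ref{fig:supplemental_generation}), at which $c(f_t)\approx 0.05$ --- a ${\sim}68$-fold improvement over Seddon's $(2{+}\sqrt 2)\approx 3.41$.

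The principal obstacle is showing that the intra-block cross-terms survive the renormalization as a coherent, \emph{negative} contribution rather than being washed out by the nonlinearity of $1/\braket{\psi}{\psi}$. This requires extending Seddon's second-moment calculation to a correlated ensemble, checking (i) that the deterministic sign correlations inside a block commute through the Taylor expansion of the normalisation, and (ii) that the inequality $\mathbb E(\braket{\tilde x_i}{\tilde y_{\sigma_i(j)}})<\xi_t^{-1}$ is tight enough that the extra variance each correlated block injects is subdominant to its coherent benefit. The numerical evidence of Appendix~\ref{app:numericsofnextorderint} would then serve to calibrate the $0.05$ prefactor against the bound obtained analytically.
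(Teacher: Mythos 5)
Your proposal follows essentially the same route as the paper's Appendix~\ref{app:ensemblesamplingbounds}: blocks of \(f_t+1\) states with one i.i.d.\ root, a splitting of the double sums into inter-block terms that factor into products of marginals and intra-block terms controlled by \(\theta<\xi_t^{-1}\), a coherent \emph{negative} \(O(f_t/k)\) correction to the leading variance, and optimisation at \(f_t\sim 10\delta\xi_t\) yielding \(k\gtrsim(\sqrt{402}-20)\xi_t\delta^{-1}\approx 0.05\,\xi_t\delta^{-1}\). Two small points of divergence: the paper implements your ``Taylor expansion of \(1/\braket{\psi}{\psi}\)'' rigorously as a triangle inequality through \(\rho_2=\mathbb E(\ketbra{\psi}{\psi})/\mathbb E(\braket{\psi}{\psi})\) together with \(\|\rho_1-\rho_2\|_1\le\sqrt{\mathrm{Var}\braket{\psi}{\psi}}\) (bounding \(\|\rho_2-\ketbra{\Psi}{\Psi}\|_1\) by a constant that is dropped as \(t\to\infty\)), and the optimal \(f_t\) is set not by the generation capacity of Lemmas~\ref{le:linear_generation}--\ref{le:general_generation} (that is only a feasibility side-condition) but by balancing the \(-4f_t k^{-1}\) term of the variance against the \(+O(f_t^2k^{-2})\) and \(+O(f_t^3k^{-3})\) terms that the correlations also inject.
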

The proof is given in Appendix~\ref{app:ensemblesamplingbounds}. We note that this result applies for all states \(\ket{\Psi}\). However, when taken with Lemmas~\ref{le:linear_generation} and~\ref{le:general_generation} and Corollary~\ref{co:general_generation}, it is a \emph{constructive} proof only for one-qubit magic states \(\ket{\Psi_\phi}\) and mixed states of one-qubit magic states.

As shown in its proof, Theorem~\ref{th:ensemblebound} produces a reduction by a factor of \({\sim} 68\) over the current state-of-the-art upper bound~\cite{Seddon19}. This is equivalent to removing \({\sim}27\) magic gates for the \(T\) gate (\(\phi = \pi/4\)),
\begin{equation}
68 = \xi^t(\pi/4) = 2^{\sim 0.228 t} \Leftrightarrow t \approx 27. 
\end{equation}
A numerical demonstration of reduced scaling are provided in Figure~\ref{fig:numerics2}.

\section{Performance Analysis}
\label{performance}

\begin{figure}[h]
\includegraphics[scale=0.35]{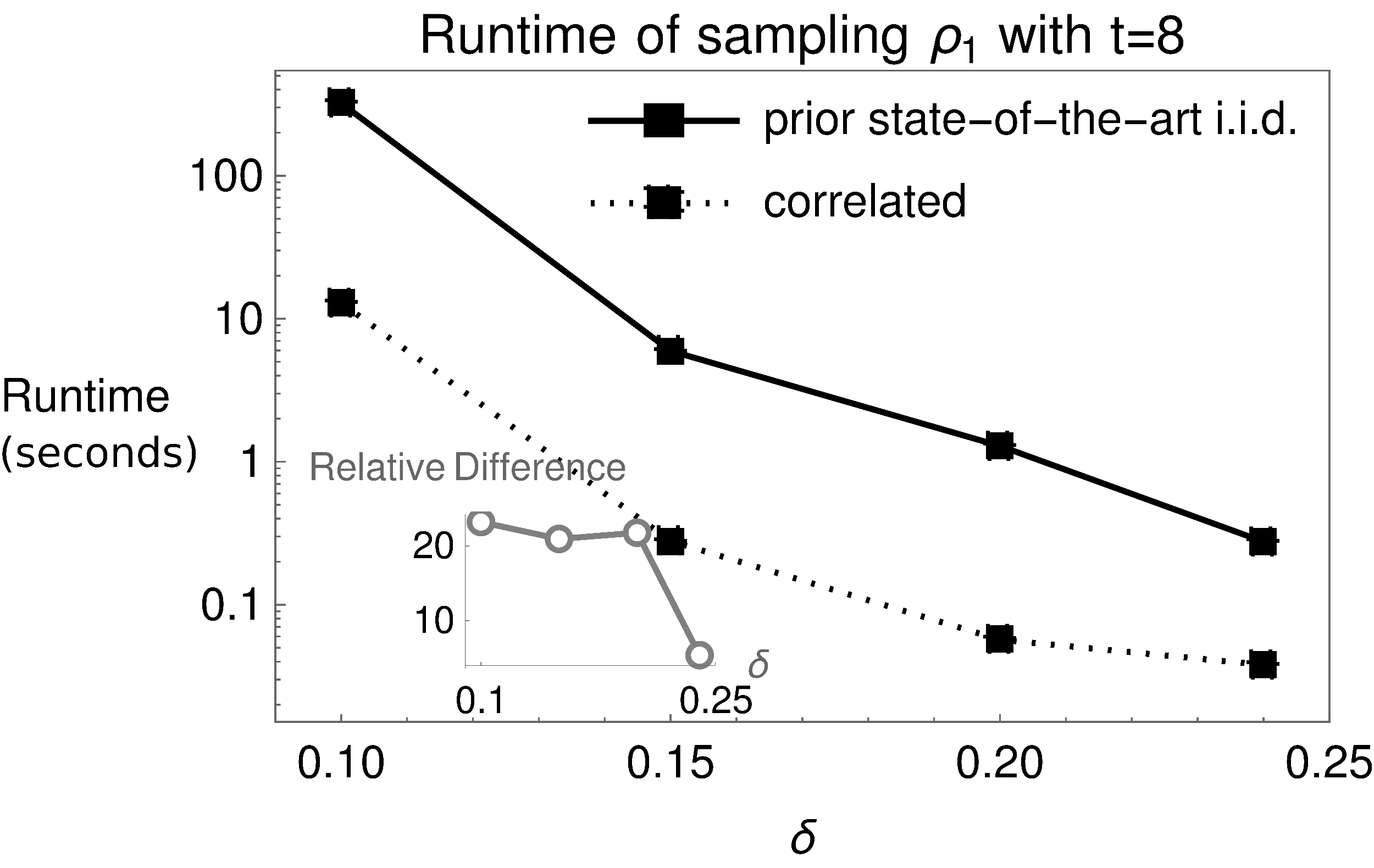}
\caption{Worst case runtime of prior state-of-the-art error of estimated probabilities after \(1000\) Cliffords and two marginal measurements vs \(\delta\) for \(t=8\).}
\label{fig:numerics2}
\end{figure}

Converged numerical studies of worst-case scaling of methods that include i.i.d.~sampling (which the correlated runs also use) are difficult to probe; the number of possible i.i.d. sampling combinations grows exponentially with both \(t\) and \(\delta^{-1}\). These results are not indicative of average-case performance.

As a result, Figure~\ref{fig:numerics2} presents worst-cast performance numerics for a low value of \(t\) and low values of \(\delta^{-1}\) only, which is well before asymptotic scaling behavior has set it. Nevertheless, Figure~\ref{fig:numerics2} shows significant scaling reduction of correlated sampling (Lemma~\ref{th:ensemblebound_general}) compared to the prior state-of-the-art. As can be seen in inset of Figure~\ref{fig:numerics2}, in this limited interval of \(\delta \in [0.1,0.24]\), correlated sampling worst-case performance is more efficent than the prior state-of-the-art worst-case performance by a factor of \({\sim}32\) at the smallest value \(\delta = 0.1\).

The correlated sampling algorithm described by Theorem~\ref{th:ensemblebound} is so significant that it even outperforms current state-of-the-art strong simulators. Since probabilites range from \(0\) to \(1\), additive error \(\delta\) must always be less than or equal to multiplicative error, \(\epsilon\), to express the same value. The lowest worst-case cost of weak simulators occurs when this inequality is saturated and \(\epsilon = \delta\). As indicated earlier, if \(\delta\) becomes too small, it should become more efficient to use strong simulators instead of weak simulators. At the other end, if \(\epsilon\) gets too small, then it should become more efficient to use exact simulators instead of strong simulators.

\begin{figure}[h]
  \includegraphics[scale=0.35]{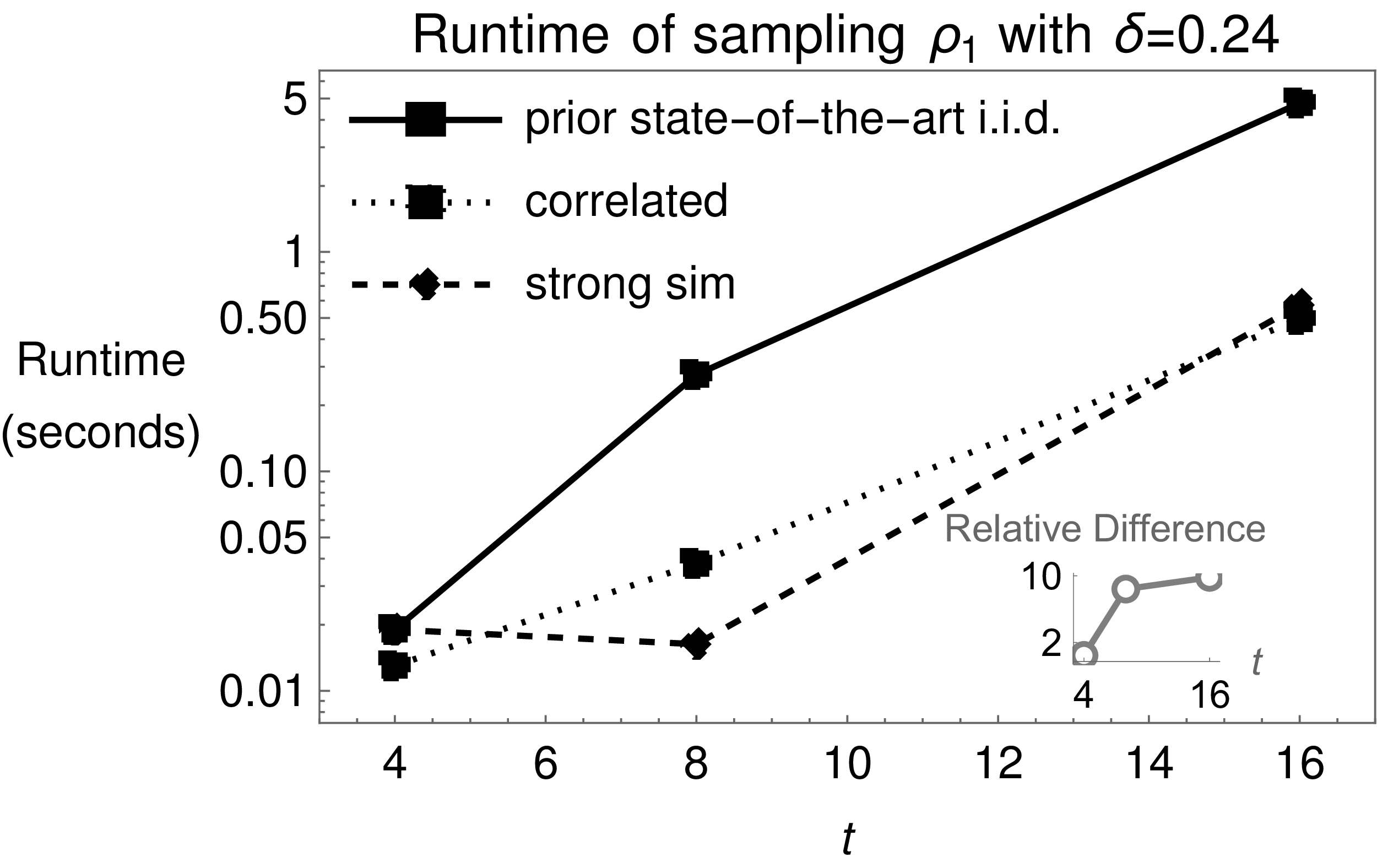}
  \includegraphics[scale=0.35]{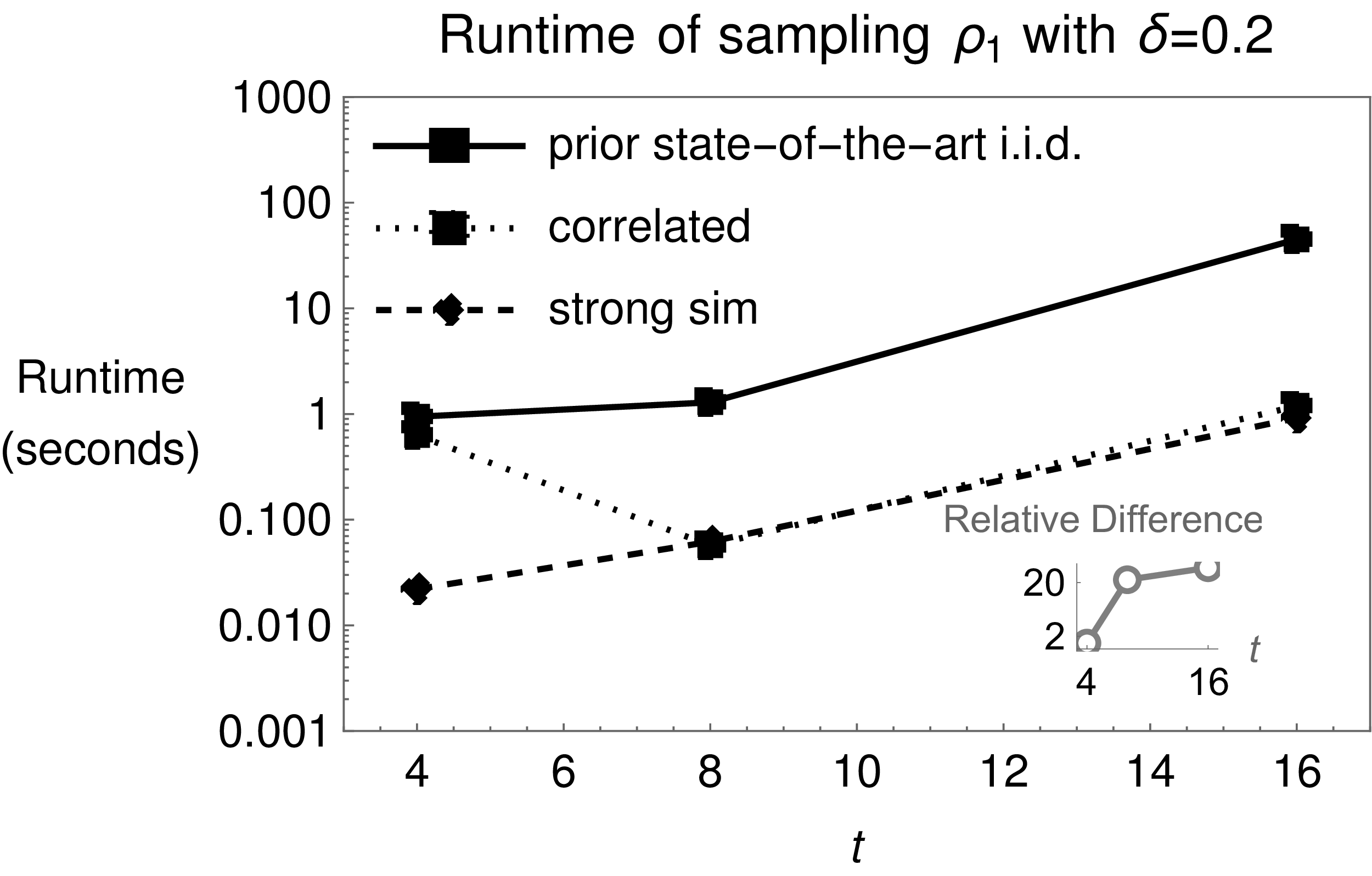}
  \includegraphics[scale=0.35]{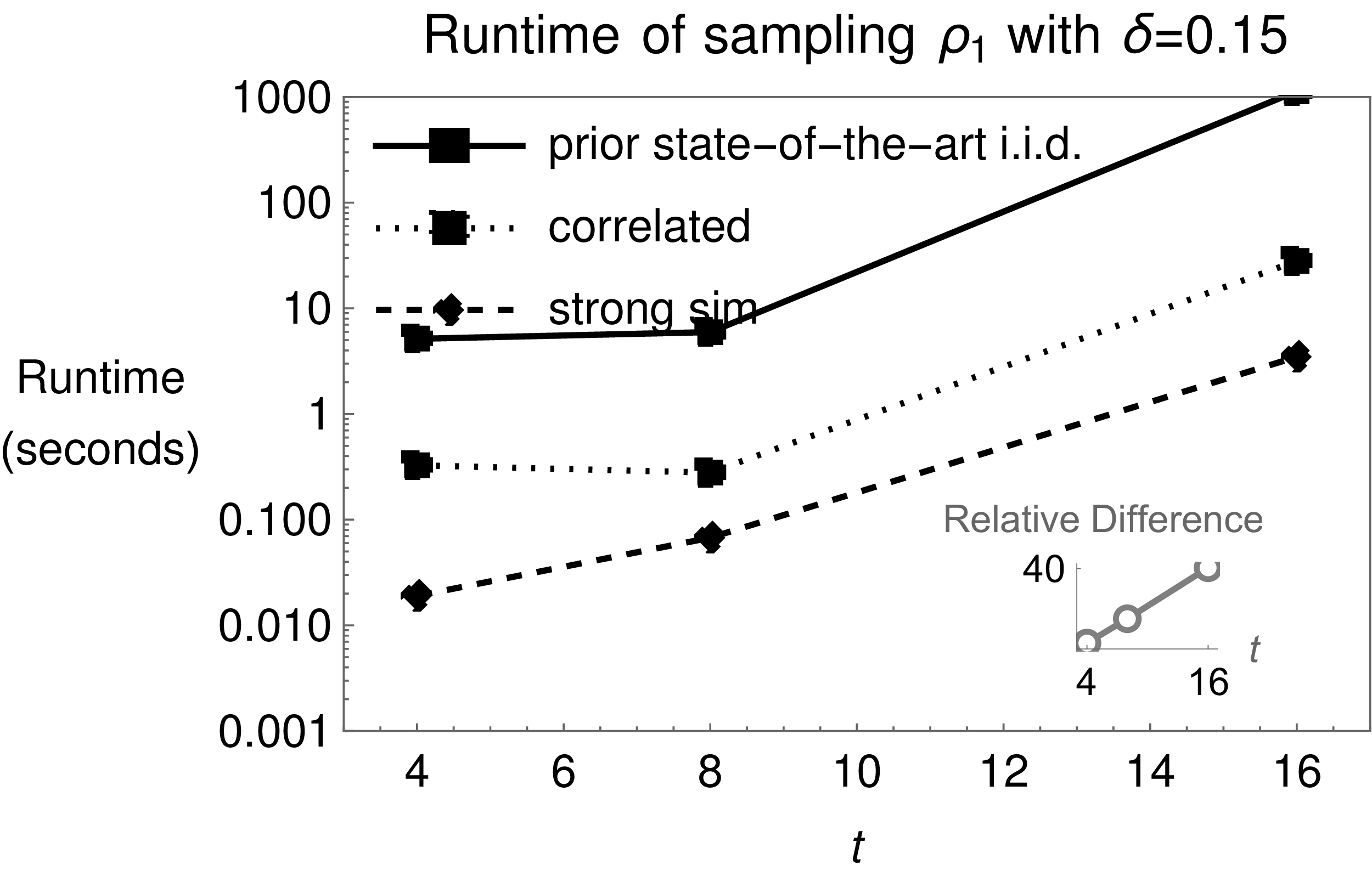}
\caption{Worst case runtime of prior state-of-the-art error of estimated probabilities after \(1000\) Cliffords and two marginal measurements vs \(t\) for \(\delta=0.24\) (top), \(\delta=0.2\) (middle), and \(\delta = 0.15\) (bottom). The results shown are for \(\> 10000\) random two \(t\)-Pauli measurements probability estimations calculated as two marginals for \(t=\{4,8,16\}\). For the strong simulation runs, we used the lowest known stabilizer rank of \(\chi_4 = 4\), \(\chi_8 = 12\), and \(\chi_{16} = 108\)~\cite{Qassim21}.}
\label{fig:numerics}
\end{figure}

\begin{figure}[h]
  \includegraphics[scale=0.35]{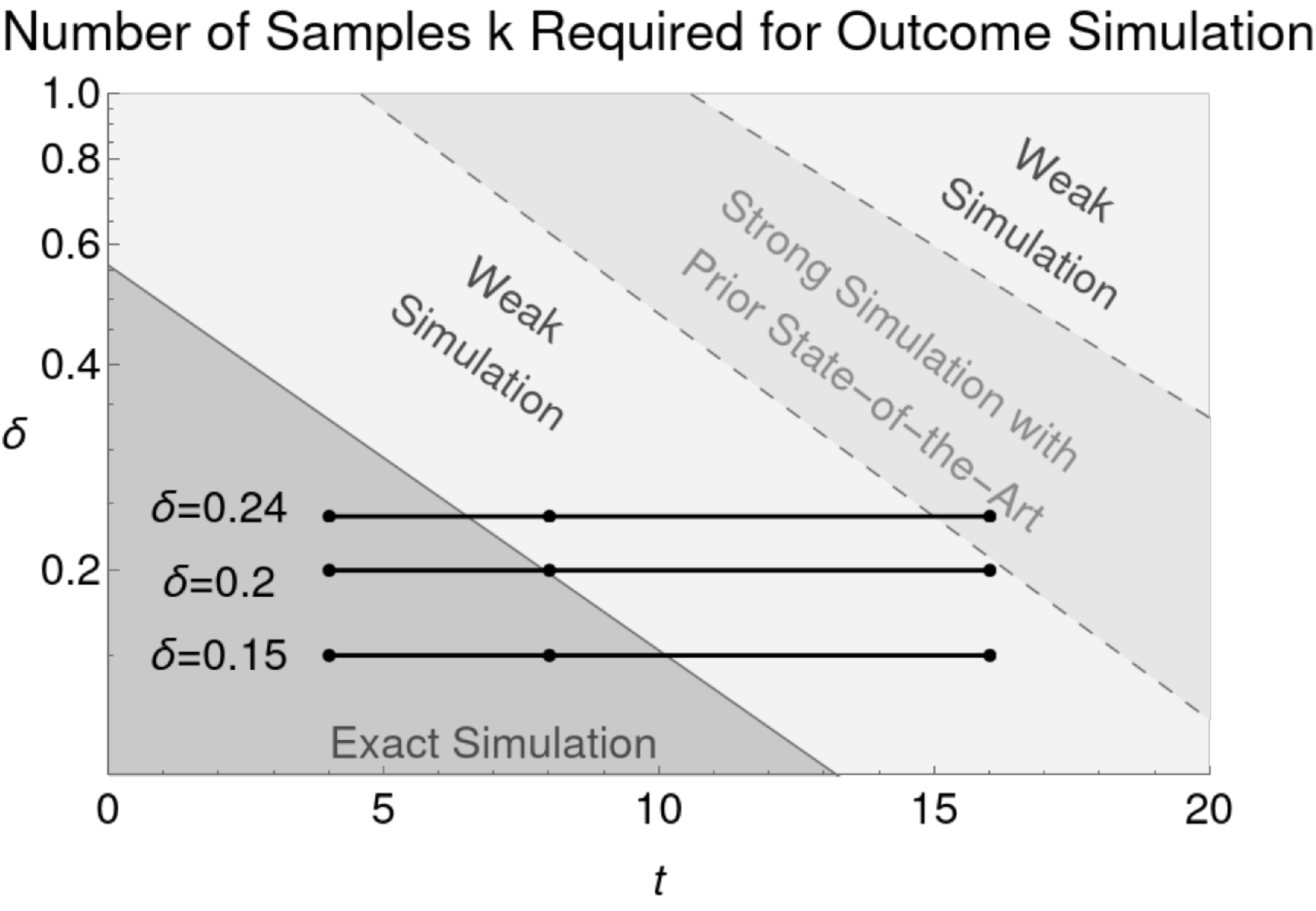}
\caption{Cross-section of Figure~\ref{fig:scaling_improvement}b explored by the runs plotted in Figure~\ref{fig:numerics}.}
\label{fig:crosssection}
\end{figure}

Applying this property to the current state-of-the-art strong simulation algorithm scaling~\cite{Bravyi16_1,Seddon20} means that the highest threshold for when strong and exact simulation can be more efficient than weak simulators occurs when
\begin{align}
  \label{eq:strongsimfewersamples}
  \chi_t <& (0.05)\xi_t \delta^{-1} \quad \substack{\text{strong simulation can use fewer states}\\ \text{than weak simulation}},\\
  \label{eq:exactsimfewersamples}
  \chi_t^2 <& (0.05)\xi_t \delta^{-3} \quad \substack{\text{exact simulation can use fewer states}\\ \text{than weak simulation}},
\end{align}
and more usefully,
\begin{align}
  \label{eq:strongsimlowercost}
  \frac{\chi_t/\delta^2}{2+\sqrt{2}} < (0.05&)\xi_t \delta^{-3} \, \substack{\text{strong simulation of outcomes}\\ \text{can cost less than weak simulation}},\\
  \label{eq:exactsimlowercost}
  \chi_t^2 < \frac{12 (0.05)}{2+\sqrt{2}}&\xi_t \delta^{-3} \, \substack{\text{exact simulation of outcomes}\\\text{can cost less than weak simulation}},
\end{align}
where \(\chi_t\) is the minimal exact stabilizer decomposition of the magic state~\cite{Bravyi16_1}. 
Remarkably, with the lowest current asymptotic upper bound on \(\chi_t = 2^{\sim 0.396 t}\) for \(T\) gates~\cite{Qassim21}, Eq.~\ref{eq:exactsimfewersamples} and Eq.~\ref{eq:exactsimlowercost} are satisfied at a higher value of \(\delta\) than Eq.~\ref{eq:strongsimfewersamples} and Eq.~\ref{eq:strongsimlowercost}, respectively, for \(t \lesssim 150\). Strong simulation with \(\chi_t = 2^{\sim 0.396 t}\) for \(t>150\) requires \(> 10^{17}\) stabilizer states and so is not practical. This means that the correlated sampling of \(\rho_1\) formally outperforms current strong simulators in any practically implementable regime.  This is not the case if the prior state-of-the-art weak simulation prefactor is used.

This comparison can be seen in Figure~\ref{fig:scaling_improvement} shown earlier, which plots Eqs.~\ref{eq:strongsimfewersamples}-\ref{eq:exactsimlowercost} w.r.t.~\(t\) and \(\delta\) with the current state-of-the-art for the \(T\) gate. Moreover, we find that the correlated sampling of \(\rho_1\) outperforms even the average-case runtime of prior state-of-the-art \(L_1\)-norm based simulators (see Appendix~\ref{app:otherstateoftheart}).

Applying Eqs.~\ref{eq:strongsimfewersamples}-\ref{eq:exactsimlowercost} at small \(t\) and \(\delta^{-1}\) assumes sufficiently similar or faster onset of asymptotic behavior for the correlated sampling of \(\rho_1\) compared to the state-of-the-art weak and strong simulation algorithms. Numerical results at values of \(t\) and \(\delta\) that cross through the domains when weak simulation, strong simulation, or exact simulation should be optimal support this assumption as shown in Figure~\ref{fig:numerics} for additive error \(\delta = 0.24\) (top), \(\delta = 0.2\) (middle) and \(\delta = 0.15\) (bottom) for \(t \in \{4,8,16\}\). These values of \(\delta\) and \(t\) trace cross-sections through a patch of Figure~\ref{fig:scaling_improvement} that is shown in the close-up version given by Figure~\ref{fig:crosssection}. For this last value \(\delta = 0.15\), correlated sampling is more efficient at \(t=16\) by a factor \(\gtrsim 40\).

This performance allows correlated weak sampling to outperform the state-of-the-art strong simulation algorithm for \(t \lesssim 150\) as predicted by its asymptotic scaling. This is clear from the dashed ``strong sim'' curve plotted in Figure~\ref{fig:numerics}; the state-of-the-art strong simulation algorithm is less efficient than the new correlated sampling weak simulation algorithm for \(\delta = 0.24\) and \(\delta = 0.2\), but becomes more efficient at \(\delta = 0.15\), for the values \(t=\{4, 8, 16\}\). This means that Figure~\ref{fig:scaling_improvement} is essentially indiscernible from a more accurate plot that does not use asymptotic scaling at low values of \(t\) and \(\delta^{-1}\).


Moreover, Figures~\ref{fig:numerics2}-\ref{fig:crosssection} show that the difference in worst-case cost between the correlated weak simulation algorithm and the prior state-of-the-art does not approach asymptotic behavior until \(t\gtrsim 8\) and \(\delta \lesssim 0.1\).

As an aside, note that probability estimation using stabilizer decompositions generated by strong or weak simulators requires calculating inner products, which are most efficiently performed in the worst-case by algorithms that scale with respect to the overall corresponding multipicative error desired~\cite{Bravyi16_1}. 
We did not use such algorithms to calculate the inner products~\cite{Bravyi16_1, Seddon20} and instead calculated them exactly, at the additional expense of squaring the overall scaling cost of every method (i.e. \((2+\sqrt{2})(\xi_t \delta^{-1}) \rightarrow (2+\sqrt{2})^2\xi^2_t \delta^{-2}\)). Using this less efficient method does not affect comparisons of the weak and strong algorithms' relative performance, which is our primary goal. However, this means that the absolute performances of the algorithms shown in Figures~\ref{fig:numerics2}-\ref{fig:crosssection} should not be taken as representative of their respective worst-case performance when implemented as efficiently as possible. 

\section{Conclusion}
\label{conclusion}

This result points to the need to further lower known optimal stabilizer decompositions (i.e. stabilizer rank) of magic states~\cite{Bravyi16_2,Howard18,Kocia20_2,Qassim21} to render direct strong simulation practically useful.

Further work can be done to extend these results so that Theorem~\ref{th:ensemblebound} is also constructive for magic states consisting of more than one-qubit tensor products. Furthermore, it may be possible to tighten some of the bounds made during the derivation of Theorem~\ref{th:ensemblebound} with more accurate assumptions, and thereby produce a lower overall asymptotic bound. Further work can also be done to increase the reduction from correlated sampling of \(\rho_1\) at non-asymptotic values of \(t\). Regardless, this method of correlated sampling greatly increases the size of quantum devices that can be directly simulated classically. An appealing feature of the correlated algorithm is that it can be easily added to the stabilizer state decompositions used by many contemporary weak simulators~\cite{Bravyi16_1,Howard18,Seddon20,Pashayan21} and reduce their cost, since it only involves supplementing their decompositions.

In summary, we prove that classical simulation of quantum outcomes to additive error \(\delta\) have a leading-order reduction in their sampling cost from \(\le (2+\sqrt{2})\xi_t \delta^{-1}\) to \(\lesssim 0.05 \xi_t \delta^{-1}\), a reduction by a factor of \({\sim}68\). This outperforms current state-of-the-art weak and strong simulators. This reduction is based on supplementing independent \(L_1\) sampling with correlated sampling of maximally mutually ``dissimilar'' states. \(L_1\)-norm based simulation techniques are also widely used in signal processing~\cite{Bobin08,Eldar12}, chemistry~\cite{Sanders12,Gamez16,Wang20}, and condensed matter physics~\cite{Liu21_2}. Remarkably, it appears that this correlated technique has not been applied in these fields and will likely lower to cost of applications such as time propagation and spectroscopy using Gaussian sampling~\cite{Wang01,Miller02}, gate compilation from ensembles~\cite{Patel21}, and randomized benchmarking of quantum devices~\cite{Magesan11,Proctor17}. Since it only involves supplementing existing i.i.d.~samples, the method can be easily added to existing simulators.

\bibliography{biblio}{}

\begin{thebibliography}{10}

\bibitem{Lloyd96}
Seth Lloyd.
\newblock Universal quantum simulators.
\newblock {\em Science}, pages 1073--1078, 1996.

\bibitem{Grover96}
Lov~K Grover.
\newblock A fast quantum mechanical algorithm for database search.
\newblock In {\em Proceedings of the twenty-eighth annual ACM symposium on
  Theory of computing}, pages 212--219, 1996.

\bibitem{Shor99}
Peter~W Shor.
\newblock Polynomial-time algorithms for prime factorization and discrete
  logarithms on a quantum computer.
\newblock {\em SIAM review}, 41(2):303--332, 1999.

\bibitem{Shor96}
Peter~W Shor.
\newblock Fault-tolerant quantum computation.
\newblock In {\em Proceedings of 37th Conference on Foundations of Computer
  Science}, pages 56--65. IEEE, 1996.

\bibitem{Steane99}
Andrew~M Steane.
\newblock Efficient fault-tolerant quantum computing.
\newblock {\em Nature}, 399(6732):124--126, 1999.

\bibitem{Raussendorf07}
Robert Raussendorf and Jim Harrington.
\newblock Fault-tolerant quantum computation with high threshold in two
  dimensions.
\newblock {\em Physical review letters}, 98(19):190504, 2007.

\bibitem{Ried15}
Katja Ried, Megan Agnew, Lydia Vermeyden, Dominik Janzing, Robert~W Spekkens,
  and Kevin~J Resch.
\newblock A quantum advantage for inferring causal structure.
\newblock {\em Nature Physics}, 11(5):414--420, 2015.

\bibitem{Riste17}
Diego Rist{\`e}, Marcus~P Da~Silva, Colm~A Ryan, Andrew~W Cross, Antonio~D
  C{\'o}rcoles, John~A Smolin, Jay~M Gambetta, Jerry~M Chow, and Blake~R
  Johnson.
\newblock Demonstration of quantum advantage in machine learning.
\newblock {\em npj Quantum Information}, 3(1):1--5, 2017.

\bibitem{Bravyi18}
Sergey Bravyi, David Gosset, and Robert K{\"o}nig.
\newblock Quantum advantage with shallow circuits.
\newblock {\em Science}, 362(6412):308--311, 2018.

\bibitem{Arute19}
Frank Arute, Kunal Arya, Ryan Babbush, Dave Bacon, Joseph~C Bardin, Rami
  Barends, Rupak Biswas, Sergio Boixo, Fernando~GSL Brandao, David~A Buell,
  et~al.
\newblock Quantum supremacy using a programmable superconducting processor.
\newblock {\em Nature}, 574(7779):505--510, 2019.

\bibitem{Bravyi20}
Sergey Bravyi, David Gosset, Robert Koenig, and Marco Tomamichel.
\newblock Quantum advantage with noisy shallow circuits.
\newblock {\em Nature Physics}, 16(10):1040--1045, 2020.

\bibitem{Zhong20}
Han-Sen Zhong, Hui Wang, Yu-Hao Deng, Ming-Cheng Chen, Li-Chao Peng, Yi-Han
  Luo, Jian Qin, Dian Wu, Xing Ding, Yi~Hu, et~al.
\newblock Quantum computational advantage using photons.
\newblock {\em Science}, 370(6523):1460--1463, 2020.

\bibitem{Haferkamp20}
Jonas Haferkamp, Dominik Hangleiter, Adam Bouland, Bill Fefferman, Jens Eisert,
  and Juani Bermejo-Vega.
\newblock Closing gaps of a quantum advantage with short-time hamiltonian
  dynamics.
\newblock {\em Physical Review Letters}, 125(25):250501, 2020.

\bibitem{Maslov21}
Dmitri Maslov, Jin-Sung Kim, Sergey Bravyi, Theodore~J Yoder, and Sarah
  Sheldon.
\newblock Quantum advantage for computations with limited space.
\newblock {\em Nature Physics}, 17(8):894--897, 2021.

\bibitem{Wu21_2}
Yulin Wu, Wan-Su Bao, Sirui Cao, Fusheng Chen, Ming-Cheng Chen, Xiawei Chen,
  Tung-Hsun Chung, Hui Deng, Yajie Du, Daojin Fan, et~al.
\newblock Strong quantum computational advantage using a superconducting
  quantum processor.
\newblock {\em Physical review letters}, 127(18):180501, 2021.

\bibitem{Harrow17}
Aram~W Harrow and Ashley Montanaro.
\newblock Quantum computational supremacy.
\newblock {\em Nature}, 549(7671):203--209, 2017.

\bibitem{Terhal18}
Barbara~M Terhal.
\newblock Quantum supremacy, here we come.
\newblock {\em Nature Physics}, 14(6):530--531, 2018.

\bibitem{Preskill18}
John Preskill.
\newblock Quantum computing in the nisq era and beyond.
\newblock {\em Quantum}, 2:79, 2018.

\bibitem{Lloyd95}
Seth Lloyd.
\newblock Almost any quantum logic gate is universal.
\newblock {\em Physical review letters}, 75(2):346, 1995.

\bibitem{Divincenzo95}
David~P DiVincenzo.
\newblock Two-bit gates are universal for quantum computation.
\newblock {\em Physical Review A}, 51(2):1015, 1995.

\bibitem{Knill04}
Emanuel Knill.
\newblock Fault-tolerant postselected quantum computation: Schemes.
\newblock {\em arXiv preprint quant-ph/0402171}, 2004.

\bibitem{Bravyi05}
Sergey Bravyi and Alexei Kitaev.
\newblock Universal quantum computation with ideal clifford gates and noisy
  ancillas.
\newblock {\em Physical Review A}, 71(2):022316, 2005.

\bibitem{Markov08}
Igor~L Markov and Yaoyun Shi.
\newblock Simulating quantum computation by contracting tensor networks.
\newblock {\em SIAM Journal on Computing}, 38(3):963--981, 2008.

\bibitem{Pednault19}
Edwin Pednault, John~A Gunnels, Giacomo Nannicini, Lior Horesh, and Robert
  Wisnieff.
\newblock Leveraging secondary storage to simulate deep 54-qubit sycamore
  circuits.
\newblock {\em arXiv preprint arXiv:1910.09534}, 2019.

\bibitem{Huang20_3}
Cupjin Huang, Fang Zhang, Michael Newman, Junjie Cai, Xun Gao, Zhengxiong Tian,
  Junyin Wu, Haihong Xu, Huanjun Yu, Bo~Yuan, et~al.
\newblock Classical simulation of quantum supremacy circuits.
\newblock {\em arXiv preprint arXiv:2005.06787}, 2020.

\bibitem{Pan20}
Feng Pan, Pengfei Zhou, Sujie Li, and Pan Zhang.
\newblock Contracting arbitrary tensor networks: general approximate algorithm
  and applications in graphical models and quantum circuit simulations.
\newblock {\em Physical Review Letters}, 125(6):060503, 2020.

\bibitem{Zhou20}
Yiqing Zhou, E~Miles Stoudenmire, and Xavier Waintal.
\newblock What limits the simulation of quantum computers?
\newblock {\em Physical Review X}, 10(4):041038, 2020.

\bibitem{Gray21}
Johnnie Gray and Stefanos Kourtis.
\newblock Hyper-optimized tensor network contraction.
\newblock {\em Quantum}, 5:410, 2021.

\bibitem{Pan21}
Feng Pan, Keyang Chen, and Pan Zhang.
\newblock Solving the sampling problem of the sycamore quantum supremacy
  circuits.
\newblock {\em arXiv preprint arXiv:2111.03011}, 2021.

\bibitem{Liu21}
Yong Liu, Xin Liu, Fang Li, Haohuan Fu, Yuling Yang, Jiawei Song, Pengpeng
  Zhao, Zhen Wang, Dajia Peng, Huarong Chen, et~al.
\newblock Closing the "quantum supremacy" gap: achieving real-time simulation
  of a random quantum circuit using a new sunway supercomputer.
\newblock In {\em Proceedings of the International Conference for High
  Performance Computing, Networking, Storage and Analysis}, pages 1--12, 2021.

\bibitem{Feng22}
Feng Pan and Pan Zhang.
\newblock Simulation of quantum circuits using the big-batch tensor network
  method.
\newblock {\em Phys. Rev. Lett.}, 128:030501, Jan 2022.

\bibitem{Bravyi16_1}
Sergey Bravyi and David Gosset.
\newblock Improved classical simulation of quantum circuits dominated by
  clifford gates.
\newblock {\em Physical review letters}, 116(25):250501, 2016.

\bibitem{Howard18}
Sergey Bravyi, Dan Browne, Padraic Calpin, Earl Campbell, David Gosset, and
  Mark Howard.
\newblock Simulation of quantum circuits by low-rank stabilizer decompositions.
\newblock {\em Quantum}, 3:181, 2019.

\bibitem{Seddon20}
James~R Seddon, Bartosz Regula, Hakop Pashayan, Yingkai Ouyang, and Earl~T
  Campbell.
\newblock Quantifying quantum speedups: Improved classical simulation from
  tighter magic monotones.
\newblock {\em PRX Quantum}, 2(1):010345, 2021.

\bibitem{Pashayan21}
Hakop Pashayan, Oliver Reardon-Smith, Kamil Korzekwa, and Stephen~D Bartlett.
\newblock Fast estimation of outcome probabilities for quantum circuits.
\newblock {\em arXiv preprint arXiv:2101.12223}, 2021.

\bibitem{Raussendorf01}
Robert Raussendorf and Hans~J Briegel.
\newblock A one-way quantum computer.
\newblock {\em Physical Review Letters}, 86(22):5188, 2001.

\bibitem{Trout15}
Colin~J Trout and Kenneth~R Brown.
\newblock Magic state distillation and gate compilation in quantum algorithms
  for quantum chemistry.
\newblock {\em International Journal of Quantum Chemistry}, 115(19):1296--1304,
  2015.

\bibitem{Campbell17}
Earl~T Campbell and Mark Howard.
\newblock Unified framework for magic state distillation and multiqubit gate
  synthesis with reduced resource cost.
\newblock {\em Physical Review A}, 95(2):022316, 2017.

\bibitem{Aaronson04}
Scott Aaronson and Daniel Gottesman.
\newblock Improved simulation of stabilizer circuits.
\newblock {\em Physical Review A}, 70(5):052328, 2004.

\bibitem{Bravyi16_2}
Sergey Bravyi, Graeme Smith, and John~A Smolin.
\newblock Trading classical and quantum computational resources.
\newblock {\em Physical Review X}, 6(2):021043, 2016.

\bibitem{Aaronson11}
Scott Aaronson and Alex Arkhipov.
\newblock The computational complexity of linear optics.
\newblock In {\em Proceedings of the forty-third annual ACM symposium on Theory
  of computing}, pages 333--342, 2011.

\bibitem{Aaronson14}
Scott Aaronson.
\newblock The equivalence of sampling and searching.
\newblock {\em Theory of Computing Systems}, 55(2):281--298, 2014.

\bibitem{Fefferman15}
Bill Fefferman and Chris Umans.
\newblock The power of quantum fourier sampling.
\newblock {\em arXiv preprint arXiv:1507.05592}, 2015.

\bibitem{Morimae17}
Tomoyuki Morimae, Keisuke Fujii, and Harumichi Nishimura.
\newblock Power of one nonclean qubit.
\newblock {\em Physical Review A}, 95(4):042336, 2017.

\bibitem{Pashayan20}
Hakop Pashayan, Stephen~D Bartlett, and David Gross.
\newblock From estimation of quantum probabilities to simulation of quantum
  circuits.
\newblock {\em Quantum}, 4:223, 2020.

\bibitem{Natarajan95}
Balas~Kausik Natarajan.
\newblock Sparse approximate solutions to linear systems.
\newblock {\em SIAM journal on computing}, 24(2):227--234, 1995.

\bibitem{Ge11}
Dongdong Ge, Xiaoye Jiang, and Yinyu Ye.
\newblock A note on the complexity of l p minimization.
\newblock {\em Mathematical programming}, 129(2):285--299, 2011.

\bibitem{Gottesman98}
D.~{Gottesman}.
\newblock {The Heisenberg Representation of Quantum Computers}.
\newblock {\em eprint arXiv:quant-ph/9807006}, July 1998.

\bibitem{Heimendahl21}
Arne Heimendahl, Felipe Montealegre-Mora, Frank Vallentin, and David Gross.
\newblock Stabilizer extent is not multiplicative.
\newblock {\em Quantum}, 5:400, 2021.

\bibitem{Seddon19}
James~R Seddon and Earl~T Campbell.
\newblock Quantifying magic for multi-qubit operations.
\newblock {\em Proceedings of the Royal Society A}, 475(2227):20190251, 2019.

\bibitem{Qassim21}
Hammam Qassim, Hakop Pashayan, and David Gosset.
\newblock Improved upper bounds on the stabilizer rank of magic states.
\newblock {\em arXiv preprint arXiv:2106.07740}, 2021.

\bibitem{Kocia20_2}
Lucas Kocia.
\newblock Improved strong simulation of universal quantum circuits.
\newblock {\em arXiv preprint arXiv:2012.11739}, 2020.

\bibitem{Bobin08}
J{\'e}r{\^o}me Bobin, Jean-Luc Starck, and Roland Ottensamer.
\newblock Compressed sensing in astronomy.
\newblock {\em IEEE Journal of Selected Topics in Signal Processing},
  2(5):718--726, 2008.

\bibitem{Eldar12}
Yonina~C Eldar and Gitta Kutyniok.
\newblock {\em Compressed sensing: theory and applications}.
\newblock Cambridge university press, 2012.

\bibitem{Sanders12}
Jacob~N Sanders, Semion~K Saikin, Sarah Mostame, Xavier Andrade, Julia~R Widom,
  Andrew~H Marcus, and Al{\'a}n Aspuru-Guzik.
\newblock Compressed sensing for multidimensional spectroscopy experiments.
\newblock {\em The journal of physical chemistry letters}, 3(18):2697--2702,
  2012.

\bibitem{Gamez16}
Gerardo Gamez.
\newblock Compressed sensing in spectroscopy for chemical analysis.
\newblock {\em Journal of Analytical Atomic Spectrometry}, 31(11):2165--2174,
  2016.

\bibitem{Wang20}
Zhengjun Wang, Shiwen Lei, Khadga~Jung Karki, Andreas Jakobsson, and T{\"o}nu
  Pullerits.
\newblock Compressed sensing for reconstructing coherent multidimensional
  spectra.
\newblock {\em The Journal of Physical Chemistry A}, 124(9):1861--1866, 2020.

\bibitem{Liu21_2}
Jianghui Liu, Guohua Cao, Zizhen Zhou, and Huijun Liu.
\newblock Screening potential topological insulators in half-heusler compounds
  via compressed-sensing.
\newblock {\em Journal of Physics: Condensed Matter}, 33(32):325501, 2021.

\bibitem{Wang01}
Haobin Wang, David~E Manolopoulos, and William~H Miller.
\newblock Generalized filinov transformation of the semiclassical initial value
  representation.
\newblock {\em The Journal of Chemical Physics}, 115(14):6317--6326, 2001.

\bibitem{Miller02}
William~H Miller.
\newblock An alternate derivation of the herman—kluk (coherent state)
  semiclassical initial value representation of the time evolution operator.
\newblock {\em Molecular Physics}, 100(4):397--400, 2002.

\bibitem{Patel21}
Tirthak Patel, Ed~Younis, Costin Iancu, Wibe de~Jong, and Devesh Tiwari.
\newblock Robust and resource-efficient quantum circuit approximation.
\newblock {\em arXiv preprint arXiv:2108.12714}, 2021.

\bibitem{Magesan11}
Easwar Magesan, Jay~M Gambetta, and Joseph Emerson.
\newblock Scalable and robust randomized benchmarking of quantum processes.
\newblock {\em Physical review letters}, 106(18):180504, 2011.

\bibitem{Proctor17}
Timothy Proctor, Kenneth Rudinger, Kevin Young, Mohan Sarovar, and Robin
  Blume-Kohout.
\newblock What randomized benchmarking actually measures.
\newblock {\em Physical review letters}, 119(13):130502, 2017.

\bibitem{Bravyi21}
Sergey Bravyi, David Gosset, and Yinchen Liu.
\newblock How to simulate quantum measurement without computing marginals.
\newblock {\em arXiv preprint arXiv:2112.08499}, 2021.

\end{thebibliography}
\bibliographystyle{unsrt}

This research was supported in part by an appointment to the National Nuclear Security Administration (NNSA) Minority Serving Institutions Internship Program (MSIIP) sponsored by the NNSA and administered by the Oak Ridge Institute for Science and Education (ORISE).

This material is based upon work supported by the U.S. Department of Energy, Office of Science, Office of Advanced Scientific Computing Research, under the Accelerated Research in Quantum Computing program. Sandia National Laboratories is a multimission laboratory managed and operated by National Technology \& Engineering Solutions of Sandia, LLC, a wholly owned subsidiary of Honeywell International Inc., for the U.S. Department of Energy’s National Nuclear Security Administration under contract DE-NA0003525. This paper describes objective technical results and analysis. Any subjective views or opinions that might be expressed in the paper do not necessarily represent the views of the U.S. Department of Energy or the United States Government.

\acknowledgments{L.K. and G.T. gratefully acknowledge Mohan Sarovar's helpful correspondence throughout this project and L.K thanks Andrew Baczewski for help with the manuscript.}

\section*{Publishing Data Sharing Policy}

The code that support the findings of this study is openly available at `https://s3miclassical.com/gitweb/', in the repository `weak\_simulation\_extent.git'.

\clearpage

\appendix

\section{Linear Scaling of \(f_t\)}
\label{app:linear_scaling}

\setcounter{lemma}{0}
\begin{lemma}[\(2t-1\) Generation]
Given a \(t\)-bitstring, there exist \(2t-1\) additional bitstrings that mutually differ from each other and the given bitstring by at least \(t/2\) bitflips and can be found in subquadratic time.
\end{lemma}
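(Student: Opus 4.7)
The plan is to construct the $2t-1$ bitstrings as codewords of a first-order Reed--Muller code, realized concretely via the Sylvester-type Hadamard matrix. For $t = 2^k$, define recursively $H_1 = (1)$ and
\begin{equation}
H_{2n} = \begin{pmatrix} H_n & H_n \\ H_n & -H_n \end{pmatrix}.
\end{equation}
The rows of $H_t$ are pairwise orthogonal $\pm 1$ vectors of length $t$. Under the encoding $+1 \mapsto 0$, $-1 \mapsto 1$, orthogonality of two distinct rows is equivalent to the statement that they agree in exactly $t/2$ positions and disagree in exactly $t/2$ positions. Hence any two distinct rows, viewed as bitstrings, are at Hamming distance exactly $t/2$.

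Next I would enlarge this code by appending the bitwise complements of all $t$ rows, yielding a set $S$ of $2t$ bitstrings. A codeword and its complement differ in $t$ positions, and a codeword and a non-matching-complement row differ in $t/2$ positions (the complement of a row is the row itself XORed with the all-ones string, and adding the all-ones vector changes Hamming distances by $\pm t$ modulo shifts that preserve the lower bound $t/2$). So every pair of distinct elements of $S$ has Hamming distance at least $t/2$.

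Since Hamming distance is invariant under bitwise XOR, for any prescribed $t$-bitstring $g$ the translated set $g \oplus S := \{g \oplus s : s \in S\}$ also has pairwise distances $\ge t/2$. The all-zero row (or, equivalently, the appropriate image of it) lets us arrange for $g \in g \oplus S$; discarding $g$ leaves $2t - 1$ additional bitstrings, each at Hamming distance $\ge t/2$ from $g$ and from one another, as required. This is exactly what Algorithm~\ref{alg:2tbitflips} produces: its doubling step ``clone and complement the lower/upper half'' is the two block rows $(H_n\ H_n)$ and $(H_n\ -H_n)$ of the Sylvester recursion.

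For the runtime, the Sylvester recursion has $\log_2 t$ doubling levels; at level $j$ the code has $2^j$ strings of length $t$, and constructing the next level costs $\mathcal{O}(2^j \cdot t)$ bit operations. Summing the geometric series gives total cost $\mathcal{O}(t^2)$ to emit all $2t-1$ strings, i.e. $\mathcal{O}(t)$ per string, which is subquadratic per string as claimed. The main obstacle I anticipate is purely the bookkeeping: verifying that the distance remains $\ge t/2$ (rather than dropping to $t/2 - 1$) after adjoining the complements, and checking the off-by-one count ($2t - 1$ additional strings, matched to the $2t$ rows of $H_t$ plus complements with $g$ removed). Both follow cleanly from the orthogonality and translation-invariance arguments above.
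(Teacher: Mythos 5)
Your construction is correct and in fact produces exactly the same set of bitstrings as the paper's Algorithm~\ref{alg:2tbitflips}: for $t=4$ the seven strings in the paper's table ($1010$, $0101$, $1001$, $0110$, $0000$, $0011$, $1100$) are precisely the codewords of the first-order Reed--Muller code of length $4$ with the given word $1111$ deleted, and the clone/half-complement recursion is the Sylvester doubling $\left(\begin{smallmatrix} H_n & H_n \\ H_n & -H_n \end{smallmatrix}\right)$ together with global complementation. Where you genuinely differ is the verification. The paper proves the distance property by induction on $k$ with $t=2^k$, separately bounding the quantities it writes as $xx\cdot y_iy_i$, $xx\cdot y_iy_j$, $y_iy_i\cdot y_jy_j$, and $y_iy_i\cdot y_iy_j$ after each cloning/complementation step; you obtain it in one stroke from orthogonality of distinct Hadamard rows (Hamming distance exactly $t/2$), the identity $d(a,\bar b)=t-d(a,b)$ for the adjoined complements, and XOR-invariance of Hamming distance to transport the code onto an arbitrary given bitstring $g$ (the paper handles this last point the same way, via XOR masks). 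Your route is shorter, identifies the code as a standard biorthogonal code, and yields the slightly stronger conclusion that every pairwise distance is exactly $t/2$ or $t$ rather than merely $\ge t/2$; the paper's induction is self-contained and mirrors the implemented algorithm line by line. Your runtime accounting ($\mathcal O(t^2)$ total, $\mathcal O(t)$ per emitted string) is at least as tight as the paper's $\mathcal O(t^2\log t)$ total. One small slip that does not affect the conclusion: at level $j$ of the recursion the strings have length $2^j$, not $t$, though the geometric sum still gives $\mathcal O(t^2)$ overall.
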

\begin{proof}
Let $\alpha = \{10\}$, $\beta = \{01\}$, $\gamma = \{00\}$, $\delta = \{11\}$.
WLOG we consider $x$ as the bitstring of all $1$'s since we can always consider supplementing any other given bitstring by XORing the complements of the generated additional bitstrings.

We will prove this by induction for \(t\) powers of \(2\). We will show the \(t=2\) case is true, assume that the \(t=2^k\) is true, and then prove that \(t=2^{k+1}\) case is true. We refer to additional bitstrings as \(y\).

When $t=2$ the $y$'s are $\alpha$, $\beta$ and $\gamma$. These satisfy the previous inequalities.
  
We assume true for $t=2^k$, for some $k>1$. In other words, for a given bitstring \(x\) of length $2^k$, we assume we generate $(2(2^k)-1)$ $y_i$'s $\rightarrow (2^{k+1}-1)$ $y_i$'s. 

We now consider $t=2^{k+1}$.
From the prior assumption step, we are given $x\cdot y_i\leq \frac{1}{\sqrt{2}^{2^k}} = \frac{1}{\sqrt{2}^{t/2}}$ and $y_i\cdot y_j\leq \frac{1}{\sqrt{2}^{2^{k+1}}} = \frac{1}{\sqrt{2}^t}$ for $i\ne j$. 
For a bitstring $x x$ with of length $2^{k+1}$, we want $(2(2(2^k))-1)$ $y_i$'s $= 2(2^{k+1})-1$ $y_i$' s.

We define the operation of cloning. Cloning $x$ allows us to get $xx$ where $x\rightarrow xx$ taking our bitstring length $2^k \rightarrow 2(2^k) = 2^{k+1}$ and similarly, cloning $y_i$ allow us to get $y_iy_i$ where $y_i\rightarrow y_i y_i$ taking our bitstring length $2^k\rightarrow 2(2^k)=2^{k+1}$.

After cloning we get
\begin{equation}
  xx\cdot y_iy_i \leq \frac{1}{\sqrt{2}^{2^{k}}} = \frac{1}{\sqrt{2}^{t/2}},
\end{equation}
and
\begin{equation}
  y_iy_i\cdot y_jy_j \leq \frac{1}{\sqrt{2}^{2^{k+1}}} = \frac{1}{\sqrt{2}^t}, i \ne j.
\end{equation}

We define the operation of complementation, which takes $y_i\rightarrow \overline{y_i} = y_j$, where $j\ne i$:
for \(t=2\) bitstrings, $\overline{\alpha}= \beta$, $\overline{\beta} =\alpha$, $\overline{\gamma}= \delta$, and $\overline{\delta} = \gamma$, and higher \(t=2^k\) transform correspondingly. This operation flips all the bits.

We now consider the $t$ additional bitstrings produced from all cloned bitstrings, which additionally have none or one of their components complemented. From our assumption, there are $2(2^k)-1$ bitstrings produced with no complemented components. Complementation of the lower bits produces $2(2^k)-1$ additional bitstrings. This brings us to a total of $2 (2 (2^k) - 1) = 2 (2^{k+1}) - 2$. The final additional bitstring we consider is of all $\delta$'s with the lower bits complemented, bringing the total to $2 (2^{k+1}) - 1 = 2 t - 1$ additional prospective bitstrings. We now proceed to show that these are all valid.

$xx\equiv x\cdot 2^{2^k} +x$.  Similarly,  $y_iy_i\equiv y_i\cdot 2^{2^k} +y_i. 
\\xx\cdot y_iy_i\rightarrow (x\cdot2^{2^k}+x)\cdot (y_i\cdot 2^{2^k}+y_i) = (x\cdot 2^{2^k})\cdot(y_i\cdot 2^{2^k})\times(x\cdot y_i)$. We know that $(x\cdot 2^{2^k})\cdot (y_i\cdot 2^{2^k})\leq \frac{1}{\sqrt{2}^{2^{k-1}}}$  and $(x\cdot y_i)\leq \frac{1}{\sqrt{2}^{2^{k-1}}}$. Therefore $\frac{1}{\sqrt{2}^{2^{k-1}}}\times \frac{1}{\sqrt{2}^{2^{k-1}}}  = \frac{1}{\sqrt{2}^{2^k}} = \frac{1}{\sqrt{2}^{t/2}}.$
$xx\equiv x\cdot 2^{2^k}+x.$ Similarly, $y_i y_j\equiv y_i\cdot 2^{2^k} + y_j$.
$xx\cdot y_iy_j\rightarrow (x\cdot 2^{2^k}+x)\cdot (y_i\cdot 2^{2^k}+y_j) = (x\cdot 2^{2^k})\cdot (y_i\cdot 2^{2^k})\times (x\cdot y_j).$
We know that  $(x\cdot 2^{2^k})\cdot (y_i\cdot 2^{2^k})\leq \frac{1}{\sqrt{2}^{2^{k-1}}}$  and $(x\cdot y_j)\leq \frac{1}{\sqrt{2}^{2^{k-1}}}.$ 
Therefore $\frac{1}{\sqrt{2}^{2^{k-1}}} \times \frac{1}{\sqrt{2}^{2^{k-1}}} = \frac{1}{\sqrt{2}^{2^k}} 
\\y_iy_i \equiv y_i\cdot 2^{2^k}+ y_i.$  Similarly, $ y_j y_j\equiv y_j\cdot 2^{2^k} + y_j.
\\y_iy_i\cdot y_jy_j \rightarrow (y_i\cdot 2^{2^k} + y_i)\cdot (y_j\cdot 2^{2^k} +y_j) = (y_i\cdot 2^{2^k})\cdot (y_j\cdot 2^{2^k}) \times (y_i\cdot y_j).$ 
We know that $(y_i\cdot 2^{2^k})\cdot (y_j\cdot 2^{2^k}) = \frac{1}{\sqrt{2}^{2^{k}}}$ and $(y_i\cdot y_j) \leq \frac{1}{\sqrt{2}^{2^{k}}}.$ 
Therefore $\frac{1}{\sqrt{2}^{2^{k}}} \times\frac{1}{\sqrt{2}^{2^{k}}} = \frac{1}{\sqrt{2}^{2^{k+1}}} =\frac{1}{\sqrt{2}^{t}}$.

From the same reasoning, it is clear that \(y_i y_i \cdot y_i y_j = 1 \times\frac{1}{\sqrt{2}^{2^{k}}} = \frac{1}{\sqrt{2}^{2^{k}}} =\frac{1}{\sqrt{2}^{t/2}}\).

These are all the combinations formed at \(t=2^{k+1}\) (see Table~\ref{tab:binarytree} for examples) and so this completes the proof.\qed
\begin{table}[H]
\begin{tabular}{c|c|c|c}
A.\\
\(t\) & given bitstring & depth \(i\) & \(t\) additional bitstrings\\
\hline
\(2\) & \(11\) & \(0\) & \(\alpha\), \(\beta\) \\
\hline
\multirow{2}{*}{\(4\)} & \multirow{2}{*}{\(1111\)} & \(0\) & \(\alpha \alpha\), \(\beta \beta\)\\
\cline{3-4} & & \(1\) & {\(\alpha \beta\), \(\beta \alpha\)}\\
\hline
\multirow{3}{*}{\(8\)} & \multirow{3}{*}{\(11111111\)} & \(0\) & \(\alpha \alpha \alpha \alpha\), \(\beta \beta \beta \beta\)\\
  \cline{3-4}
  & & \(1\) & \(\alpha \alpha \beta \beta\), \(\beta \beta \alpha \alpha\) \\
  \cline{3-4}
  & & \(2\) & \(\alpha \beta \beta \alpha\), \(\beta \alpha \alpha \beta\), \(\alpha \beta \alpha \beta\), \(\beta \alpha \beta \alpha\)\\
\hline
\multirow{8}{*}{\(16\)} & \multirow{8}{*}{\(1111111111111111\)} & \(0\) & \(\alpha \alpha \alpha \alpha \alpha \alpha \alpha \alpha\), \(\beta \beta \beta \beta \beta \beta \beta \beta\)\\
  \cline{3-4}
      & & \(1\) & \(\alpha \alpha \alpha \alpha \beta \beta \beta \beta\), \(\beta \beta \beta \beta \alpha \alpha \alpha \alpha\)\\
  \cline{3-4}
      & & \multirow{2}{*}{\(2\)} & \(\alpha \alpha \beta \beta \alpha \alpha \beta \beta\), \(\beta \beta \alpha \alpha \beta \beta \alpha \alpha\),\\
      & & & \(\alpha \alpha \beta \beta \beta \beta \alpha \alpha\), \(\beta \beta \alpha \alpha \alpha \alpha \beta \beta\)\\
  \cline{3-4}
      & & \multirow{4}{*}{\(3\)} & \(\alpha \beta \alpha \beta \alpha \beta \alpha \beta\), \(\alpha \beta \alpha \beta \beta \alpha \beta \alpha\),\\
      & & & \(\beta \alpha \beta \alpha \alpha \beta \alpha \beta\), \(\beta \alpha \beta \alpha \beta \alpha \beta \alpha\),\\
      & & & \(\alpha \beta \beta \alpha \alpha \beta \beta \alpha\), \(\beta \alpha \alpha \beta \beta \alpha \alpha \beta\),\\
      & & & \(\alpha \beta \beta \alpha \beta \alpha \alpha \beta\), \(\beta \alpha \alpha \beta \alpha \beta \beta \alpha\)\\
\hline\hline
B.\\
\(t\) & given bitstring & depth \(i\) & \(t\) additional bitstrings\\
\hline
\(2\) & \(11\) & \(0\) & \(\gamma\) \\
\hline
\multirow{2}{*}{\(4\)} & \multirow{2}{*}{\(1111\)} & \(0\) & \(\gamma \gamma\)\\
\cline{3-4} & & \(1\) & {\(\gamma \delta\), \(\delta \gamma\)}\\
\hline
\multirow{3}{*}{\(8\)} & \multirow{3}{*}{\(11111111\)} & \(0\) & \(\gamma \gamma \gamma \gamma\)\\
  \cline{3-4}
  & & \(1\) & \(\gamma \gamma \delta \delta\), \(\delta \delta \gamma \gamma\) \\
  \cline{3-4}
  & & \(2\) & \(\gamma \delta \delta \gamma\), \(\delta \gamma \gamma \delta\), \(\gamma \delta \gamma \delta\), \(\delta \gamma \delta \gamma\)\\
\hline
\multirow{8}{*}{\(16\)} & \multirow{8}{*}{\(1111111111111111\)} & \(0\) &\(\gamma \gamma \gamma \gamma \gamma \gamma \gamma \gamma\)\\
  \cline{3-4}
      & & \(1\) & \(\gamma \gamma \gamma \gamma \delta \delta \delta \delta\), \(\delta \delta \delta \delta \gamma \gamma \gamma \gamma\)\\
  \cline{3-4}
      & & \multirow{2}{*}{\(2\)} & \(\gamma \gamma \delta \delta \gamma \gamma \delta \delta\), \(\delta \delta \gamma \gamma \delta \delta \gamma \gamma\),\\
      & & & \(\gamma \gamma \delta \delta \delta \delta \gamma \gamma\), \(\delta \delta \gamma \gamma \gamma \gamma \delta \delta\)\\
  \cline{3-4}
      & & \multirow{4}{*}{\(3\)} & \(\gamma \delta \gamma \delta \gamma \delta \gamma \delta\), \(\gamma \delta \gamma \delta \delta \gamma \delta \gamma\),\\
      & & & \(\delta \gamma \delta \gamma \gamma \delta \gamma \delta\), \(\delta \gamma \delta \gamma \delta \gamma \delta \gamma\),\\
      & & & \(\gamma \delta \delta \gamma \gamma \delta \delta \gamma\), \(\delta \gamma \gamma \delta \delta \gamma \gamma \delta\),\\
      & & & \(\gamma \delta \delta \gamma \delta \gamma \gamma \delta\), \(\delta \gamma \gamma \delta \gamma \delta \delta \gamma\)\\
\end{tabular}
\label{tab:binarytree}
\caption{\(\alpha \equiv 10\), \(\beta \equiv 01\), \(\gamma \equiv 00\), and \(\delta \equiv 11\). The complements of the additional bitstrings can be used as XOR masks to generate the appropriate supplemental bitstrings for a given bitstring other than all \(1\)s.}
\end{table}
\end{proof}

\section{R Code for \(2t-1\) Supplemental State Generation}
\label{app:linear_scaling_code}

{\small
\begin{verbatim}
#!/usr/bin/env Rscript

args = commandArgs(trailingOnly=TRUE)
t = as.numeric(args[1])

alpha <- c(1,0)
beta <-c(0,1)
gamma <-c(0,0)
delta <-c(1,1)

cloning <- function(vec)
{
  return <- array(c(vec,vec))
}
comp <-function(vec)
{
  for(i in 1:length(vec))
  {
    vec[i]= (vec[i]+ 1) %% 2
  }
  return <- vec
}
comp2 <-function(vec)
{
  for(i in (length(vec)/2 +1):length(vec))
  {
    vec[i]= (vec[i]+ 1) %% 2
  }
  return <- vec
}

yis <-list(alpha, beta, gamma)

for(k in 2:(log(t)/log(2)))
{
  newyis <-list()
 
  for(i in 1:length(yis))
  {
    newyis[[i]]<-(cloning(yis[[i]]))
  }
  for(i in 1:length(yis))
  {
    newyis[[i+length(yis)]]<-(comp2(cloning(yis[[i]])))
  }
 
  newyis[[2*length(yis)+1]]<-comp(comp2(newyis[[3]]))
  yis<- newyis
}

print(newyis)
\end{verbatim}
}

\section{Super-Linear Scaling of \(f_t\)}
\label{app:superlinear_scaling}

\begin{lemma}
\label{le:general_generation}
Given a \(t\)-bitstring for \(t\) a power of \(2\), the \(2t-1\) supplemental bitstrings of Lemma~\ref{le:linear_generation} that mutually differ by at least \(t/2\) bitstrings, can be increased to \(\alpha(2t-1)\) by a classical algorithm with additive \(\alpha t \log (\alpha t)\) cost per additional bitstring, which is asymptotically negligible. However, any subsequent calculations with these supplemental bitsrings requires an added overall multiplicative \(\mathcal O(\alpha^3)\) cost in the worst-case.
\end{lemma}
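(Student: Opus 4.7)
The plan is to inflate the problem to a $(\alpha t)$-qubit one by padding each tensored one-qubit magic state with ancilla qubits in a fixed stabilizer reference, invoke Lemma~\ref{le:linear_generation} at the enlarged size, and finally marginalize (trace) over the ancilla qubits in the expectation-value calculation. First I would take $\alpha t$ to be a power of $2$ (rounding up absorbs only a bounded constant) and apply Lemma~\ref{le:linear_generation} to length-$\alpha t$ bitstrings to produce $2\alpha t - 1 = \alpha(2t) - 1$ supplemental bitstrings mutually differing by at least $\alpha t/2$ bits, each of which labels a stabilizer state in the enlarged Hilbert space that reduces, upon tracing the ancillas, to an effective $t$-qubit supplemental state.

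Next I would verify that the inner-product constraint \(\mathbb{E}(\braket{\tilde x}{\tilde y}) \in o(\xi_t^{-1})\) survives the embedding. Because the supplemental strings differ by \(\alpha t / 2\) bits in the enlarged space, the associated one-qubit-tensored stabilizer inner products at size \(\alpha t\) are bounded by \(|\braket{\tilde 0}{\tilde 1}|^{\alpha t /2}\), which by Eq.~\ref{eq:bitflip_constraint} already lies in \(o(\xi_{\alpha t}^{-1}) \subseteq o(\xi_t^{-1})\); partial tracing of the ancilla qubits in subsequent expectation values can only tighten or preserve this bound, so the constraint holds for the effective \(t\)-qubit supplemental states. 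The additive per-bitstring generation cost follows immediately from Lemma~\ref{le:linear_generation}: the \(\mathcal{O}(n \log n)\) cost for \(n = \alpha t\) becomes \(\mathcal{O}(\alpha t \log(\alpha t))\) per additional bitstring, which is added alongside the exponential-in-\(t\) simulation cost rather than multiplying into it, and hence remains asymptotically negligible.

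The remaining step is to account for the multiplicative \(\mathcal{O}(\alpha^3)\) overhead charged to every subsequent calculation. All downstream stabilizer operations — in particular, inner products and tableau manipulations on the supplemental states — scale cubically with the qubit count (via Gaussian elimination on an \(n \times n\) stabilizer tableau). Inflating from \(n = t\) to \(n = \alpha t\) therefore multiplies the per-evaluation cost by \((\alpha t)^3 / t^3 = \alpha^3\). Summing over every evaluation in the correlated-sampling procedure yields the claimed worst-case multiplicative overhead.

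The main obstacle I anticipate is the middle step: cleanly justifying that partial tracing over ancilla qubits does not enlarge the effective \(t\)-qubit inner products beyond the \(\xi_t^{-1}\) bound and that the residual \(t\)-qubit states remain valid stabilizer states suitable for insertion into the correlated ensemble. It may be necessary to fix the ancillas to a computational-basis fiducial and verify that the bitflip pattern supplied by Lemma~\ref{le:linear_generation} acts nontrivially on a sufficient fraction of the first \(t\) qubits — or, alternatively, to phrase the entire argument so that the full \((\alpha t)\)-qubit decomposition is retained throughout and the ancilla qubits are only traced at the very end of the calculation, exactly as phrased in the main text.
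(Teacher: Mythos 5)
Your proposal matches the paper's proof essentially exactly: the paper also inflates the bitstring length to \(t'\approx\alpha t\) (choosing \(2t'-1=\alpha(2t-1)\)), invokes Lemma~\ref{le:linear_generation} at the enlarged size, traces out the \(t'-t\) extra bits at the end of any calculation, and attributes the \(\mathcal O(\alpha^3)\) multiplicative overhead to the cubic scaling of inner-product computations. Your additional check that the inner-product constraint survives the embedding (since \(\xi_{\alpha t}^{-1}\subseteq o(\xi_t^{-1})\)) is a point the paper leaves implicit, but it does not change the argument.
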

\begin{proof}
  Given a \(t\)-bitstring, and a desired number \(\alpha(2t-1)\) of supplemental bitstrings, use Lemma~\ref{le:linear_generation} to generate the \(2t'-1\) supplemental \(t'\)-bitstrings, where
  \begin{equation}
    2t'-1 = \alpha(2t-1) \leftrightarrow t' = \alpha(t-1/2)+1/2.
  \end{equation}

  As per Lemma~\ref{le:linear_generation}, this will require \(t' \log t' \approx \alpha t \log (\alpha t)\) cost per bitstring.

  The resultant supplemental \(t'\)-bitstrings can be used the same as \(t\)-bistrings in any calculation, taking care that the \(t'-t\) extra bits are not acted upon and are simply traced out at the end of any calculation. This requires calculating inner products that scale as \(\mathcal O(t'^3)\) and so require overall \(\mathcal O(\alpha^3)\) relative additional cost. 
\qed
\end{proof}

We note that Lemma~\ref{le:general_generation} is constructive.

  \begin{corollary}
  \label{co:general_generation}
Given a \(t\)-bitstring, where \(k\) is the lowest non-zero bit of \(t\) written in binary, there exist \(t'=2^{k+1}-1\) additional bitstrings that mutually differ from each other and the given bitstring by at least \(t/2\) bitflips and can be found in subquadratic in \(k\) time. This can be increased to \(\alpha t'\) by a classical algorithm with additive \(\alpha t' \log (\alpha t')\) cost per additional bitstring, which is asymptotically negligible. However, any subsequent calculations with these supplemental bitsrings requires an added overall multiplicative \(\mathcal O(\alpha^3)\) cost in the worst-case.
\end{corollary}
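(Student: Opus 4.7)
The plan is to reduce the general-$t$ case to the power-of-two case already handled by Lemma~\ref{le:linear_generation}, by factoring $t$ into a power-of-two block size and an odd replication factor.

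First, I would write $t = 2^k m$, where $k$ is the position of the lowest nonzero bit in the binary expansion of $t$, so that $m$ is odd and $2^k$ is the largest power of two dividing $t$. Applying Lemma~\ref{le:linear_generation} at block length $2^k$ (invoking its WLOG reduction so that the reference can be treated as the all-ones bitstring) yields $2 \cdot 2^k - 1 = 2^{k+1}-1 = t'$ supplemental $2^k$-bitstrings $y_1, \dots, y_{t'}$ that pairwise, and relative to the reference, differ by at least $2^{k-1}$ bitflips.

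Next, I would extend each $y_j$ to a length-$t$ bitstring $Y_j$ by $m$-fold concatenation $Y_j \equiv y_j \| y_j \| \cdots \| y_j$, and correspondingly expand the reference to the all-ones bitstring of length $t$. Because Hamming distance is additive across concatenated blocks, any pair $Y_j, Y_{j'}$ with $j \neq j'$, and each $Y_j$ versus the reference, differ by at least $m \cdot 2^{k-1} = t/2$ bitflips, which is exactly the required threshold. An arbitrary given bitstring (not all-ones) is accommodated by the same XOR-mask trick used in Lemma~\ref{le:linear_generation}, which preserves all pairwise Hamming distances.

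The runtime claim then follows by plugging in the block size: Algorithm~\ref{alg:2tbitflips} at length $2^k$ supplies the $t'$ supplemental bitstrings at the advertised subquadratic per-bitstring cost, and the $m$-fold concatenation adds only $\mathcal{O}(t)$ per bitstring, which is dominated by the exponential-in-$t$ cost of the downstream weak-simulation pipeline. To increase the count from $t'$ to $\alpha t'$, I would apply Lemma~\ref{le:general_generation} inside the $2^k$-block sub-problem; it directly supplies the additive $\alpha t' \log(\alpha t')$ per-bitstring generation cost together with the multiplicative $\mathcal{O}(\alpha^3)$ overhead on subsequent inner-product calculations, both of which propagate unchanged through the replication step.

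The main thing to verify is that the concatenation construction still satisfies $\mathbb{E}(\braket{\tilde x}{\tilde y(x;j)}) \in \mathcal{O}(\xi_t(\phi)^{-1})$ (Eq.~\ref{eq:constraint}) for the induced stabilizer states. This will follow immediately from multiplicativity of the inner product over tensor factors combined with Eq.~\ref{eq:bitflip_constraint}: each concatenated block contributes $\ge 2^{k-1}$ bitflips, so in total each $Y_j$ differs from every other $Y_{j'}$ (and from the reference) by $\ge t/2$ bits, pushing the product $\braket{\tilde 0}{\tilde 1}^{t/2}$ below $\xi_t^{-1}(\phi)$ as required. All remaining steps are mechanical consequences of Lemmas~\ref{le:linear_generation} and~\ref{le:general_generation}.
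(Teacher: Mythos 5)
Your proof is correct, but note that the paper itself supplies no proof of Corollary~\ref{co:general_generation} at all: it is stated at the end of Appendix~\ref{app:superlinear_scaling} without argument, and the main text dismisses the extension as trivial. Your factorization $t = 2^k m$ with $m$ odd, followed by applying Lemma~\ref{le:linear_generation} at block length $2^k$ and $m$-fold concatenation, is the natural way to make that claim precise, and the key step --- additivity of Hamming distance over concatenated blocks, so that per-block separation $\ge 2^{k-1}$ lifts to total separation $\ge m\cdot 2^{k-1} = t/2$ --- is exactly right and correctly reproduces the count $t' = 2\cdot 2^k - 1 = 2^{k+1}-1$. The XOR-mask reduction to the all-ones reference and the deferral to Lemma~\ref{le:general_generation} for the $\alpha t'$ extension and the $\mathcal O(\alpha^3)$ overhead are also consistent with how the paper uses those lemmas. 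Two minor points you could tighten: first, the corollary's phrase ``subquadratic in $k$ time'' is most charitably read as subquadratic in the block length $2^k$ (your accounting gives $\mathcal O(2^k \log 2^k)$ per bitstring plus $\mathcal O(t)$ for replication, i.e.\ $\mathcal O(t\log t)$ overall per bitstring, which matches Lemma~\ref{le:linear_generation}'s advertised cost, not anything polynomial in $k$ itself); second, when combining the $\alpha t'$ extension with replication you should fix an order of operations (pad the $2^k$-block subproblem to length $\approx \alpha 2^k$ first, then concatenate, then trace out the $m(\lceil\alpha 2^k\rceil - 2^k)$ extra bits) and observe that the traced-out bits still enter the inner products that control the correlation bound, so the $t''/2$ separation guarantee of Lemma~\ref{le:general_generation} survives the replication. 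Neither point is a gap; your argument fills a hole the paper leaves open.
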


\section{Next-Order in \(t\) Sampling Bounds}
  \label{app:nextorderintsamplingbounds}

\setcounter{theorem}{0}
  \begin{theorem}[Correlated Sampling of \(\ket{\Psi}\)]
  Given sets of \(f_t+1\) correlated stabilizer states \(\{\omega_i\}_{i=1}^{f_t}\) that satisfy \(\mathbb E(\braket{\omega_i}{\omega_j}) < \xi_t^{-1}\) and decompose the \(t\)-qubit state \(\ket{\Psi}\) with coefficients \(\|c\|_1^2\) saturating its stabilizer extent \(\xi_t\), a classical algorithm exists that creates a \(\delta\)-approximate stabilizer decomposition \(\ket{\psi}\) of \(\ket{\Psi}\),
  \begin{equation}
    \|\ket{\Psi} - \ket{\psi}\| \le \delta,
  \end{equation}
  with \(\mathcal O((\xi_t{-}f_t) \delta^{-2} )\) states for \(t\) sufficiently large such that \(\delta^2 \gg (\xi_t -f_t)^{-1}\).
\end{theorem}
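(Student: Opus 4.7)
The plan is to start from the fundamental second-moment bound in Eq.~\ref{eq:expval} and feed it the value of $\gamma$ guaranteed by the hypothesis, then convert the resulting bound in expectation into a single-shot guarantee via Markov's inequality. Concretely, I would describe the following algorithm: partition the budget of $k$ stabilizer states into $k/(f_t{+}1)$ i.i.d.~``primary'' samples drawn from the distribution $p_x = |c_x|/\|c\|_1$ induced by the extent-saturating decomposition, then for each primary sample $\ket{x}$ append $f_t$ correlated companion states $\{\ket{y(x;j)}\}_{j=1}^{f_t}$ obtained from the hypothesized construction (supplied by Lemma~\ref{le:linear_generation} and Corollary~\ref{co:general_generation} in the one-qubit magic case), and discard all other states. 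Form $\ket{\psi}$ as the rescaled sum of the retained samples weighted by the appropriate coefficients, normalized so that $\mathbb{E}(\ket{\psi}) = \ket{\Psi}$.

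First I would unpack the second-moment error $\mathbb{E}(\|\ket{\Psi}-\ket{\psi}\|^2)$ into diagonal and off-diagonal contributions. The $k$ diagonal terms contribute the standard $\xi_t/k$ piece already visible in Eq.~\ref{eq:expval}. The off-diagonal cross terms within each correlated block of size $f_t{+}1$ give the quantity $\gamma/k$ of Eq.~\ref{eq:gamma}, while cross terms between distinct blocks vanish because the primary samples are independent and unbiased. The hypothesis $\mathbb{E}(\braket{\omega_i}{\omega_j})<\xi_t^{-1}$ is exactly what makes the third term of Eq.~\ref{eq:gamma} strictly smaller than $1{+}f_t$, so that $\gamma \ge f_t - o(1)$ in the large-$t$ limit. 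Substituting back yields
\begin{equation}
\mathbb{E}(\|\ket{\Psi}-\ket{\psi}\|^2) \le \frac{\xi_t - f_t}{k} + o(k^{-1}).
\end{equation}

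Next, to pass from a bound in expectation to the pointwise statement $\|\ket{\Psi}-\ket{\psi}\| \le \delta$ claimed by the theorem, I would apply Markov's inequality to the non-negative random variable $\|\ket{\Psi}-\ket{\psi}\|^2$: with probability at least $3/4$ it is bounded by $4(\xi_t-f_t)/k$, so choosing $k = 4(\xi_t-f_t)/\delta^2 \in \mathcal{O}((\xi_t-f_t)\delta^{-2})$ suffices, and the success probability can be amplified to any constant by standard rejection sampling (running the procedure $\mathcal{O}(1)$ times and keeping the best output, which is classically verifiable since inner products among stabilizer states are efficiently computable). The regime assumption $\delta^2 \gg (\xi_t-f_t)^{-1}$ guarantees $k \gg 1$ so that the subleading $o(k^{-1})$ corrections from the hypothesis's strict inequality are genuinely negligible and the savings are meaningful.

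The main obstacle I anticipate is the variance inflation warned about just above the theorem: supplementing with correlated companions magnifies fluctuations, so even though the mean bound improves, one must verify that concentration does not degrade so badly as to force $k$ back up to $\xi_t/\delta^2$. My resolution is that the theorem only needs a bound in second moment, not a Chernoff-type concentration, precisely because Markov on the non-negative error squared is enough; the variance enters only at the level of the success probability, which is handled by $\mathcal{O}(1)$ amplification. A secondary technical point is verifying that the cross-block independence really eliminates inter-block cross terms; this follows from the primary samples being i.i.d.~and each companion being a deterministic function of its primary, so the blocks are independent as random vectors in Hilbert space and their centered cross-covariances vanish. With these two points in hand the sample count $\mathcal{O}((\xi_t-f_t)\delta^{-2})$ follows by direct algebra.
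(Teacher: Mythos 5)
Your proposal reproduces the paper's core argument: the same second-moment expansion $\mathbb E(\|\ket{\Psi}-\ket{\psi}\|^2)\le(\xi_t-\gamma)/k$, the same block structure of $k/(f_t{+}1)$ independent primaries each carrying $f_t$ deterministic companions, and the same use of the hypothesis $\mathbb E(\braket{\omega_i}{\omega_j})<\xi_t^{-1}$ to push $\gamma\to f_t$, so the sample count $\mathcal O((\xi_t-f_t)\delta^{-2})$ comes out identically. (Your phrasing that inter-block cross terms ``vanish'' is loose --- they each equal the product of means $\xi_t^{-1}$ and collectively supply the $+1$ in $\gamma=1+f_t-\xi_t f_t\theta$; only the \emph{centered} cross-covariances vanish, as you say later --- but the bookkeeping you describe is the paper's Eq.~\ref{eq:offdiagnorm}.) Where you genuinely diverge is the passage from expectation to a pointwise guarantee. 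The paper adapts the Chernoff-type sparsification tail bound of Howard--Campbell (their Lemma 7), obtaining $\Pr[\|\Psi-\psi\|^2\le\braket{\psi}{\psi}-1+\delta^2]\ge 1-2e^{-\delta^2\xi_t/8+\gamma\delta^2/8}$, which explicitly exhibits the variance penalty $\gamma\delta^2/8$ in the exponent and reduces verification to checking only the norm $\braket{\psi}{\psi}$, approximable in time linear in $k$ via FASTNORM. You instead use Markov plus constant-round amplification, which is simpler and dodges the concentration degradation entirely, but it buys this at the cost of needing to \emph{identify} the best run: your verification step requires evaluating $\|\Psi-\psi\|$, hence $\braket{\Psi}{\psi}$, and for a general $t$-qubit $\ket{\Psi}$ whose extent-saturating decomposition has exponentially many terms this is not obviously cheap (it factorizes into $\mathcal O(kt)$ one-qubit overlaps only for the tensor-product magic states the paper's constructive lemmas actually cover, and even $\braket{\psi}{\psi}$ computed exactly costs $\mathcal O(k^2)$ stabilizer inner products). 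If you either restrict the amplification claim to product states, or post-select on $\braket{\psi}{\psi}$ alone as the paper does and invoke concentration of the overlap term, your route closes cleanly; as written, the ``keep the best output'' step is the one place the argument is weaker than the paper's.
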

\begin{proof}
Following~\cite{Howard18}, we define additive error
\begin{equation}
\|\ket{\Psi} - \ket{\psi}\| \le \delta,
\end{equation}
where \(\|\psi\| \equiv \sqrt{\braket{\psi}{\psi}}\).

\(\ket{\psi}\) is the sparsified \(k\)-term approximation to \(\ket{\Psi}\) given by
\begin{equation}
  \ket \psi = \frac{\|c\|_1}{k} \sum_{i=1}^k \ket{\omega_i},
\end{equation}
where each \(\ket{\omega_i}\) is independently chosen randomly so that it is a normalized stabilizer state \(\ket{\omega_i} = p_i \ket{\varphi_i}\) with probability \(p_i = |c_i|/\|c\|_1\). We define a random variable \(\ket \omega\) that is equal to \(\ket{\omega_i}\) with probability \(p_i\). Then
\begin{equation}
  \mathbb E (\ket{\omega}) = \ket {\psi}/\|c\|_1.
\end{equation}

By construction,
\begin{equation}
\mathbb E(\braket{\psi}{\Psi}) = \mathbb E(\braket{\Psi}{\psi}) = 1.
\end{equation}

The number of stabilizer states in the approximation is \(k\) and
\begin{eqnarray}
  \mathbb E (\| \ket{\Psi} - \ket{\psi}\|^2) &=& \mathbb E( |\braket{\Psi}{\Psi}| ) - \mathbb E( |\braket{\Psi}{\psi}| ) -\nonumber\\
  \label{eq:kdependence}
  && \mathbb E( |\braket{\psi}{\Psi}| ) + \mathbb E( |\braket{\psi}{\psi}| ) \nonumber\\
  &\le& \frac{\xi_t}{k} - \frac{\gamma}{k},
\end{eqnarray}
where we simplified
\begin{eqnarray}
  \mathbb E(\braket{\psi}{\psi} ) &=& \sum_i^k \frac{\|c\|_1^2}{k^2} \mathbb E(\braket{\omega_i}{\omega_i}) + \sum_{i\ne j}^k \frac{\|c\|_1^2}{k^2} \mathbb E(\braket{\omega_i}{\omega_j}) \nonumber\\
  &=& \frac{\|c\|_1^2}{k} \mathbb E(|\braket{\omega}{\omega}|) + \sum_{i\ne j}^k \frac{\|c\|_1^2}{k^2} \mathbb E(\braket{\omega_i}{\omega_j}) \nonumber\\
  &\le& \frac{\|c\|_1^2}{k} + 1 - \frac{\gamma}{k}.
\end{eqnarray}

Eq.~\ref{eq:kdependence} is less than or equal to \(\delta^2\) when \(k = (\xi_\phi^t - \gamma) \delta^{-2}\).

If \(\ket{\omega_i}\) are i.i.d. stabilizer states then \(\sum_{i\ne j}^k \|c\|_1^2 \mathbb E(\|\braket{\omega_i}{\omega_j}) = \sum_{i\ne j}^k | \mathbb E(\bra{\psi}) \mathbb E(\ket{\psi}) | = k (k-1)\) and so \(\gamma = 1\). As a result, since \(1 \ll \xi_t\) as \(t\) increases, it was neglected in previous characterizations~\cite{Bravyi16_1}. 

However, \(\gamma\) can become significant if \(\ket{\omega_i}\) are not i.i.d. and \(\sum_{i\ne j}^k \|c\|_1^2 \mathbb E(\|\braket{\omega_i}{\omega_j}) \ne \sum_{i\ne j}^k | \mathbb E(\bra{\psi}) \mathbb E(\ket{\psi}) |\).

Let us supplement every i.i.d. element of this set with \(f_t\) states whose average mutual inner product is less than or equal to \(\xi_t^{-1}\).

With these supplemental states, we construct \(f_t\) additional ensembles, where the \(i\)th ensemble consists of the \(f_t\) \(i\)th supplemented states. Since the ensemble consisting of i.i.d~sampled \(t\)-bit strings \(\ket{\omega_i}\) satisfies
\begin{equation}
\mathbb E(\braket{\psi}{\Psi}) = \mathbb E(\braket{\Psi}{\psi}) = 1,
\end{equation}
it follows that the \(f_t\) other ensembles also satisfy this property since they also consist of only independent samples. Therefore, the full ensemble produced by adding together these \((f_t+1)\) ensembles satisfies this property too.

However, taken together, these are no longer i.i.d. stabilizer states. In particular, for a given \(\bra{\omega_i}\), there exist at least \(f_t\) \(\ket{\omega_j}\) such that \(\mathbb E |\braket{\omega_i}{\omega_j}| \le \xi_t^{-1}\). Hence,
\begin{eqnarray}
\label{eq:offdiagnorm}
\sum_{i\ne j}^k \|c\|_1^2 \mathbb E(\braket{\omega_i}{\omega_j}) &=& \sum_{i}^k \sum_{\substack{j\\i\ne j}}^{k-1-f_t} \|c\|_1^2 \mathbb E(\bra{\omega_i})\mathbb E(\ket{\omega_j}) \nonumber\\
                                                                 && + \sum_{i}^k \sum_{j}^{f_t} \|c\|_1^2 \mathbb E(\braket{\omega_i}{\omega_{\sigma_i(j)}}) \\
                                                                 &\le& k (k-1-f_t) \nonumber\\
                                                                 && + k \sum_j^{f_t} \xi_t(\phi) \mathbb E(\braket{\omega_i}{\omega_{\sigma_i(j)}}) \\
                                                                 &=& k(k-\gamma),
\end{eqnarray}
where \(\sigma_i(j)\) maps to the indices that index states correlated with \(\omega_i\). \(\|c\|^2_1 = \xi_t\) and so \(\gamma = 1 + f_t - \sum_j^{f_t} \xi_t \mathbb E(\braket{\omega_i}{\omega_{\sigma_i(j)}}\). In the limit \(t \rightarrow \infty\), \(\gamma \rightarrow f_t\).

Therefore, given that at least \(k = (\xi_\phi^t - \gamma) \delta^{-2}\) stabilizer states are necessary to sample this state to \(\delta\) additive error, this procedure creates a \(\delta\)-approximate stabilizer decomposition of \(\ket{\Psi}\) with \(\mathcal O((\xi_t{-} f_t) \delta^{-2} )\) stabilizer states.

This more efficiently approximated state comes at the expense of its convergence probability, or sparsification tail bound. Following the same reasoning as in the proof of Lemma \(7\) of~\cite{Howard18},
\begin{eqnarray}
  &&\Pr \left[ \|T^{\otimes t} - \psi\|^2 \le \braket{\psi}{\psi} - 1 + \delta^2 \right] \nonumber\\
  \label{eq:tailbound}
  &&\ge 1 {-} 2 \exp \left( -\frac{\delta^2 \xi_t}{8 } {+} \frac{\gamma \delta^2}{8} \right) \\
  \label{eq:tailbound_final}
  &&= 1 {-} 2 e^{-\frac{\delta^2 \xi_t}{8 } {+} (1 + f_t - \sum_j^{f_t} \xi_t \mathbb E(\braket{\omega_i}{\omega_{\sigma_i(j)}})) \frac{\delta^2}{8}}.
\end{eqnarray}

Therefore, given that \(\delta^2 \gg (\xi_t -f_t)^{-1}\), if post-selection is performed to discard samples that produce \(\braket{\psi}{\psi} - 1 \gg \delta^2\) (a rare event if this first condition is met) or \(\braket{\psi}{\psi}\) is approximated to multiplicative error using the FASTNORM algorithm~\cite{Howard18} (which scales linearly with \(k\)), then the states \(\psi\) are generated with \(\mathbb E (\| \ket{\Psi} - \ket{\psi}\|^2) < \delta^2 \) and consist of \(\mathcal O((\xi_t {-} t) \delta^{-2})\) stabilizer states.
\qed
\end{proof}

\section{Numerics of Next-Order in \(t\) Reduction}
  \label{app:numericsofnextorderint}
  
  The numerical runtime of correlated sampling of \(\ket{T}^{\otimes t}\) is plotted in Figure~\ref{fig:normscaling}. A decrease in runtime is observed for correlated sampling that is lower bounded by proportionality to the fewer number of stabilizer states \(k = (\xi_t(\frac{\pi}{4}) - \gamma) \delta^{-2}\) it generates.

\begin{figure}[h]
\input{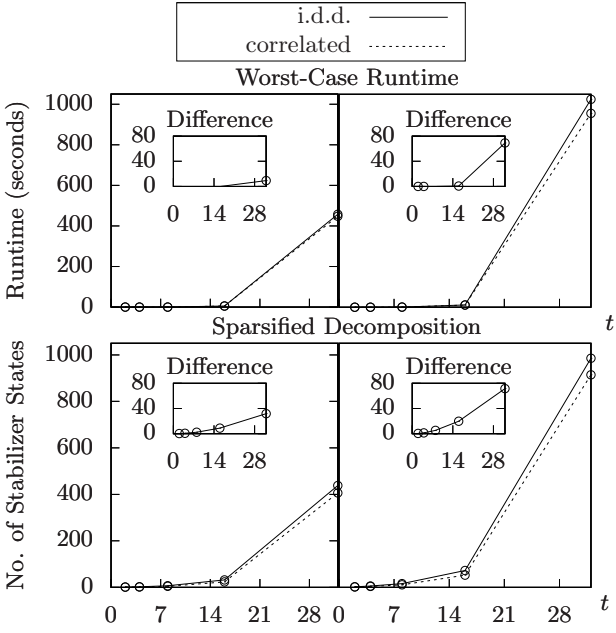}
\caption{Plots are for additive error (left) \(\delta = 0.6\) and (right) \(\delta = 0.4\) over \(100\) runs. The worst-case runtime of calculating the norm of \(\psi\) where \(|\psi {-} T^{\otimes t}| \le \delta^2\) is plotted at the top and the corresponding number of stabilizer states sampled in the sparsified decomposition with i.i.d. (solid curve) and correlated sampling (dashed curve) is plotted at the bottom. \(\psi\) is generated using the SPARSIFY algorithm and the norm is calculated using the FASTNORM algorithm of~\cite{Bravyi16_1} (using \(1000\) random stabilizer states to calculate the multiplicative error). [Insets: The difference between the i.i.d. sampling and the correlated sampling curves.]}
\label{fig:normscaling}
\end{figure}

The statistical distribution of sparsified decompositions from independent sampling and correlated sampling are compared over \(1000\) numerical runs in Figure~\ref{fig:meansandstdevs}. The expected value of the state generated by correlated sampling is closer to the desired state (middle of Figure~\ref{fig:meansandstdevs}). As a result, the standard deviation of the norm of the states generated by correlated sampling is larger (bottom of Figure~\ref{fig:meansandstdevs}) denoting poorer convergence, as expected. Hence, it is advantageous to use correlated sampling when \(\delta^2 \gg (\xi_t(\frac{\pi}{4}) -f_t)^{-1}\) to obtain the same convergence probability as independent sampling does at \(\delta^2 \gg \xi_{t}(\frac{\pi}{4})^{-1}\). 

At small \(t\) a small-number effect occurs since the number of stabilizer states used in correlated sampling is set to the nearest multiple of \(t\) greater than or equal to \((\xi_\phi^t-\gamma)\delta^{-2}\) in practice. As a result, at small \(t\) more states are sampled than required and this produces a lower expectation value and standard deviation than expected. The factor of \(\sum_j^{f_t} \xi_t(\frac{\pi}{4}) \mathbb E(\braket{\omega_i}{\omega_{\sigma_i(j)}})) \frac{\delta^2}{8}= -1/2^{\sim 0.02 t}\) in Eq.~\ref{eq:tailbound_final} also reduces the standard deviation at low \(t\).

\begin{figure}[h]
\input{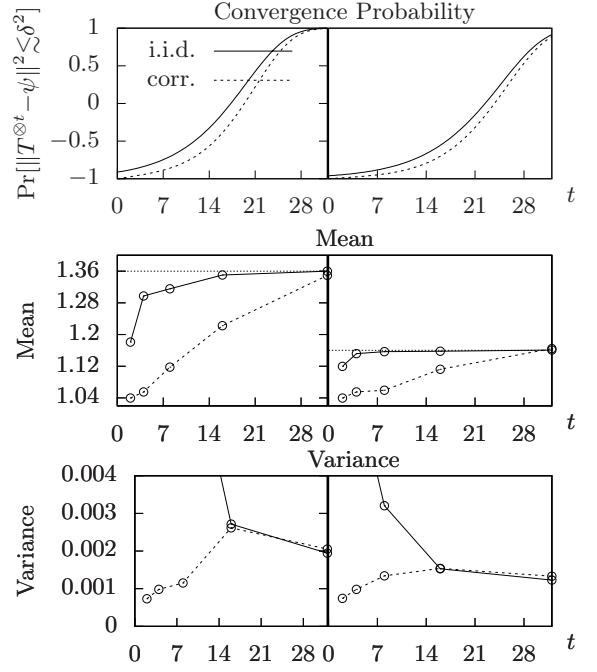}
\caption{Plots are for additive error (left) \(\delta = 0.6\) and (right) \(\delta = 0.4\) over \(1000\) runs. At the top is plotted the convergence probability lower bound (or sparsification tail bound) given by Eq.~\ref{eq:tailbound_final} for i.i.d. sampling (solid curve) and correlated sampling (dashed curve). In the middle is plotted the mean of \(|\braket{\psi}{\psi}|^2\) for both i.i.d. and correlated sampling to \(\delta\) error (\(1+\delta^2\) is denoted by the dotted horizontal line). At the bottom is plotted the standard deviation of \(|\braket{\psi}{\psi}|^2\) of the sparsified samples, which can be interpreted as a measure of the convergence probability. The standard deviation converges more slowly for large \(t\) under correlated sampling than under independent sampling. However, if the variance is sufficiengtly small already, then this effect is negligible. This requires that \(\delta^2 \gg \xi^{-t}\) and \(\delta^2 \gg(\xi^t - t)^{-1}\) for i.i.d. and correlated sampling. If satisfied, they exhibit \(\mathcal O(\xi^t \delta^{-2})\) and \(\mathcal O((\xi^t{-}t) \delta^{-2} )\) scaling, respectively, with the same probability.
  A small-\(t\) effect can be seen where the mean and standard deviation of the correlated samples is lower than that of the i.i.d. samples at \(t \lesssim 16\) as explained in the main text. The number of i.i.d. samples and correlated samples generated at particular \(t\)-values is shown in Fig.~\(2\).}
\label{fig:meansandstdevs}
\end{figure}

\section{Ensemble Sampling Bounds}
  \label{app:ensemblesamplingbounds}

\setcounter{lemma}{1}
  \begin{theorem}
    \label{th:ensemblebound_general}
    Given \(f_t\) correlated states that mutually satisfy \(\mathbb E(\braket{\tilde x_i}{\tilde y_{\sigma_i(j)}}) < \xi_t^{-1}\), then \(\lesssim 0.05 \xi_t \delta^{-1}\) states are sufficient in the sparsification of \(\ket{\psi}\) in order for \(\|\mathbb E\left[\frac{\ketbra{\psi}{\psi}}{\braket{\psi}{\psi}}\right] - \ketbra{\Psi}{\Psi}\| \le \delta\).
  \end{theorem}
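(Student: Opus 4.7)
The plan is to lift the renormalized-ensemble analysis of Seddon--Campbell from i.i.d.~to correlated samples. First I would package the $k$ sparsified stabilizer states into $k/(f_t{+}1)$ independent groups, each consisting of one i.i.d.~``seed'' state $\ket{\tilde x_i}$ together with its $f_t$ correlated partners $\{\ket{\tilde y_{\sigma_i(j)}}\}_{j=1}^{f_t}$ produced by Lemma~\ref{le:linear_generation} (or~\ref{le:general_generation}). As in Theorem~\ref{th:sparsifyupperbound}, write $\ket{\psi} = (\|c\|_1/k)\sum_{\alpha} \ket{\omega_\alpha}$ where $\alpha$ ranges over all samples, and expand
\begin{equation}
\frac{\ketbra{\psi}{\psi}}{\braket{\psi}{\psi}} = \frac{\sum_{\alpha,\beta}\ket{\omega_\alpha}\bra{\omega_\beta}}{\sum_{\alpha,\beta}\braket{\omega_\alpha}{\omega_\beta}}
\end{equation}
and split both sums into (i) diagonal, (ii) intra-group off-diagonal (correlated pairs), and (iii) inter-group off-diagonal (independent pairs).

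Second, I would exploit the property that inter-group pairs are independent with $\mathbb E[\ket{\omega_\alpha}\bra{\omega_\beta}] = \ket{\Psi}\bra{\Psi}/\xi_t$, so their empirical mean concentrates on $\ket{\Psi}\bra{\Psi}$ in both numerator and denominator and cancels in the ratio up to matrix-Bernstein fluctuations. For intra-group pairs, the bound $\mathbb E\braket{\tilde x_i}{\tilde y_{\sigma_i(j)}} < \xi_t^{-1}$ from Eq.~\ref{eq:constraint} means that their contribution to the denominator is $o(1)$, while their contribution to the numerator is a coherent correction to $\ket{\Psi}\bra{\Psi}$ that shares its sign structure with the analogous correction appearing in the denominator expansion (this is precisely the coherent-error constructive-addition phenomenon flagged in the text before the theorem). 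Taking the ratio then cancels the leading-order coherent error instead of bounding it by the triangle inequality, which is the source of the constant-factor improvement over $(2+\sqrt{2})$.

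Third, I would apply a matrix Bernstein / operator-Chernoff bound to the resulting centered sum of $k/(f_t{+}1)$ i.i.d.~group-estimators $X_i = (\sum_{\alpha\in G_i}\ket{\omega_\alpha})(\sum_{\beta\in G_i}\bra{\omega_\beta})/\braket{\psi}{\psi}$. The per-group operator norm is $O((f_t{+}1)^2/k)$ and the matrix variance collects the surviving intra-group cross terms, yielding a deviation bound $\|\rho_1 - \ket{\Psi}\bra{\Psi}\| \le \delta$ whenever $k \gtrsim C\,\xi_t/\delta$ with an explicit constant $C$. Optimizing the trade-off between the suppressed off-diagonal bias, the renormalization cancellation, and the matrix-Bernstein variance term should pin $C$ down to the stated $\lesssim 0.05$.

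The hard part will be step three: tracking the constant through the matrix-Bernstein bound honestly, because the correlated intra-group terms contribute simultaneously to the mean, the variance proxy, and the denominator $\braket{\psi}{\psi}$, and the cancellation between numerator and denominator has to be made rigorous rather than heuristic. In particular, replacing $\braket{\psi}{\psi}$ by its expectation introduces an $O(\delta)$ correction that must be shown not to reinflate the prefactor, analogously to Seddon--Campbell's FASTNORM-based post-selection argument; I expect this is where the sharp numerical constant $0.05$ is fixed, and any looseness here would degrade the $68$-fold advertised improvement.
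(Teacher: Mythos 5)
Your high-level framing --- lift the Seddon--Campbell renormalized-ensemble analysis to correlated samples, split the double sum into diagonal, intra-group and inter-group pieces, and win by making the correlated terms help rather than hurt --- matches the paper's strategy. But your step three misidentifies the probabilistic structure of the claim, and that is where the proof actually lives. The theorem bounds \(\|\mathbb E[\ketbra{\psi}{\psi}/\braket{\psi}{\psi}] - \ketbra{\Psi}{\Psi}\|\), a statement about an expectation, not a tail bound on a random matrix, so matrix Bernstein is the wrong tool; moreover your group estimators \(X_i\) neither sum to the quantity of interest (you drop all inter-group outer products \(\ketbra{\omega_\alpha}{\omega_\beta}\)) nor are independent (they share the global denominator \(\braket{\psi}{\psi}\)). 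The paper instead inserts the intermediate object \(\rho_2 = \mathbb E(\ketbra{\psi}{\psi})/\mathbb E(\braket{\psi}{\psi})\), bounds \(\|\rho_2 - \ketbra{\Psi}{\Psi}\|_1\) directly (tightening the i.i.d.\ prefactor from \(2+\sqrt 2\) to \(\sqrt 2\) along the way), and controls \(\|\rho_1 - \rho_2\| \le \sqrt{\mathrm{Var}\braket{\psi}{\psi}}\), reducing everything to a scalar second-moment computation of \(B = \sum_{\alpha\ne\beta}\braket{\omega_\alpha}{\omega_\beta}\) carried out by exhaustive case analysis of index coincidences with Cauchy--Schwarz bounds on the correlated cross terms. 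This route avoids any dimension-dependent (\(\log 2^t = t\)) factor that an operator concentration inequality would drag in and which would spoil the constant-factor claim.

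The second gap is quantitative and is the crux of ``leading-order'': you never specify how \(f_t\) must scale. The paper's variance bound comes out as roughly \(4k^{-1}(C-1-f_t) + 2k^{-2}\|c\|_1^4 + k^{-3}(4f_t^3+\cdots)\), and the \(68\)-fold improvement only appears when \(f_t\) is taken proportional to \(\xi_t\delta^{-1}\) itself: the paper sets \(f_t = \beta\xi_t/\delta\), solves the resulting cubic inequality in \(k\), numerically finds the optimum at \(\beta \approx 10\delta^2\) (so \(f_t \approx 10\,\delta\,\xi_t\)), and takes \(\delta\to 0\) to land on \(k \ge (\sqrt{402}-20)\,\xi_t\delta^{-1} \approx 0.05\,\xi_t\delta^{-1}\). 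Without committing to \(f_t = \Theta(\delta\,\xi_t)\) your construction only delivers the next-order-in-\(t\) reduction of Theorem~\ref{th:sparsifyupperbound}; ``optimizing the trade-off'' is exactly the step that cannot be left as a black box, not least because it forces you to confront whether that many mutually dissimilar supplemental states can actually be generated, which is what Lemma~\ref{le:general_generation}, Corollary~\ref{co:general_generation}, and Figure~\ref{fig:supplemental_generation} exist to address.
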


  \begin{proof}

    Let \(\rho_1 = \mathbb E\left[\frac{\ketbra{\psi}{\psi}}{\braket{\psi}{\psi}}\right]\) and \(\rho_2 = \frac{\mathbb E (\ketbra{\psi}{\psi})}{\mathbb E (\braket{\psi}{\psi})}\).
    
    \begin{equation}
      \label{eq:rho1error_triangleineq}
    \|\rho_1 - \ketbra{\Psi}{\Psi} \|_1 \le \| \rho_1 - \rho_2 \| + \| \rho_2 - \ketbra{\Psi}{\Psi} \|_1.
  \end{equation}

  \begin{align}
    \mu \rho_2 =& \frac{\|c\|_1^2}{k^2} \left[ \sum_{\alpha \ne \beta} \mathbb E (\ketbra{\omega_\alpha}{\omega_\beta}) + \sum_\alpha^k \mathbb E ( \ketbra{\omega_\alpha}{\omega_\alpha}) \right] \\
    =& \frac{\|c\|_1^2}{k^2} \left[ \sum_\alpha^k \sum_{\beta \ne \alpha}^{k-1-f_t} \mathbb E(\ketbra{\omega_\alpha}{\omega_\beta}) \right.\\
                & \left. + \sum_{\alpha}^k \sum_{\beta \ne \alpha}^{f_t} \mathbb E (\ketbra{\omega_{\alpha}}{\omega_{f_\alpha(\beta)}}) + \sum_\alpha^k \mathbb E(\ketbra{\omega_\alpha}{\omega_\alpha})\right] \nonumber\\
    \equiv& \frac{\|c\|_1^2}{k^2} \left[ \sum_\alpha^k \sum_{\beta \ne \alpha}^{k-1-f_t} \frac{\ketbra{\Psi}{\Psi}}{\|c\|_1^2} + \sum_{\alpha}^k \sum_{\beta \ne \alpha}^{f_t} \tilde \theta + \sum_\alpha^k \sigma \right] \\
    =& \frac{\|c\|_1^2}{k^2} \left[ \frac{k (k-1-f_t)}{\|c\|_1^2} \frac{\ketbra{\Psi}{\Psi}}{\|c\|_1^2} + k f_t \tilde \theta + k \sigma \right] \\
    =& (1 - k^{-1}(1+f_t)) \ketbra{\Psi}{\Psi} + k^{-1} \|c\|_1^2 f_t \tilde \theta\\
                & + k^{-1} \|c\|_1^2 \sigma. \nonumber
  \end{align}

  \begin{align}
    \implies \| \rho_2 - \ketbra{\Psi}{\Psi} =& \mu^{-1} \| \mu \rho_2 - \mu \ketbra{\Psi}{\Psi} \|_1\\
    =& \mu^{-1} \| (1 - k^{-1}(1 + f_t) - \mu) \ketbra{\Psi}{\Psi} \nonumber\\
                                              & + k^{-1} \|c\|_1^2 f_t \tilde \theta + \|c\|_1^2 k^{-1} \sigma \|_1\\
    =& \mu^{-1} \| [1 - k^{-1}(1 + f_t) - \|c\|_1^2 k^{-1} \nonumber\\
                                              & - 1 + (1 + f_t) k^{-1}] \ketbra{\Psi}{\Psi}\\
                                              & + k^{-1} \|c\|_1^2 f_t \tilde \theta + \|c\|_1^2 k^{-1} \sigma \|_1 \nonumber\\
    \le& \mu^{-1} \|k^{-1} \|c\|_1^2 f_t \tilde \theta + \|c\|_1^2 k^{-1} \sigma \\
                                              & - \|c\|_1^2 k^{-1} \ketbra{\Psi}{\Psi}\|_1 \nonumber\\
    \le& \mu^{-1} \|c\|^2_1 \|1 - \ketbra{\Psi}{\Psi}\|_1 k^{-1} \nonumber\\
    \label{eq:tighterstateoftheartprefactor}
    \le& 1.
  \end{align}
  for \(t \gg 1\). We used \(\mu = \|c\|_1^2k^{-1} + 1 - (1+f_t)k^{-1}\) in the second step, the triangle inequality, \(\|\tilde \theta\| \le \mathbb E | \Tr \ketbra{\omega_\alpha}{\omega_{f_\alpha(\beta)}} | < \|c\|_1^{-2}\) in the second-last step.

  In the original proof~\cite{Seddon20}, the property \( \mu^{-1} \le 1\) and \(\|\ketbra{\Psi}{\Psi}\|_1 \le 1\) is used, along with the triangle inequality, to simplify the second-last equation to \(\le 2 \|c\|_1^2 k^{-1}\). However, more accurately \(\mu \ge 2\|c\|_1^2 k^{-1} \Leftrightarrow \mu^{-1} \le (2 \|c\|_1^2 k^{-1})^{-1} \) and so the final equation is more tightly lowered bounded by \(\le 1\). This tightens the state-of-the-art asymptotic prefactor from \((2+\sqrt{2})\) to \(\sqrt{2}\) even with only i.i.d. sampling.

  \begin{align}
    \|\rho_1 - \rho_2\|_1 \le \sqrt{\text{Var} \braket{\psi}{\psi}}.
  \end{align}

  \begin{align}
    \mu = \mathbb E (\braket{\psi}{\psi}) = \|c\|_1^2 k^{-1} + \|c\|_1^2 \mathbb E (B) k^{-2},
  \end{align}
  where \(B = \sum_{\alpha \ne \beta} \braket{\omega_\alpha}{\omega_\beta}\).

  \begin{align}
    \implies& 1 - \gamma k^{-1} = \|c\|_1^2 \mathbb E(B) k^{-2}\\
    \Leftrightarrow& \quad \mathbb E(B) = k^2 \|c\|_1^{-2}(1 - \gamma k^{-1}).
  \end{align}

  \begin{align}
    \text{Var} \braket{\psi}{\psi} =& \|c\|_1^4 k^{-4}( \mathbb E(B^2) - \mathbb E(B)^2 )\\
    =& \|c\|_1^4 k^{-4} ( \mathbb E(B^2) - k^4 \|c\|_1^{-4} (1 - \gamma k^{-1})^2 )\\
    =& \|c\|_1^4 k^{-4} \mathbb E(B^2) - 1 + 2 \gamma k^{-1} - \gamma^2 k^{-2}.
  \end{align}

  \begin{align}
    B^2 =& \sum_{\alpha \ne \beta} \braket{\omega_\alpha}{\omega_\beta} \sum_{\gamma \ne \delta} \braket{\omega_\gamma}{\omega_\delta}\\
    =& \sum_{\substack{(\alpha, \beta, \gamma, \delta)\\\text{all distinct}}} \braket{\omega_\alpha}{\omega_\beta}\braket{\omega_\gamma}{\omega_\delta} + B'.
  \end{align}

  \begin{align}
    &\sum_{\substack{(\alpha, \beta, \gamma, \delta)\\\text{all distinct}}} \braket{\omega_\alpha}{\omega_\beta}\braket{\omega_\gamma}{\omega_\delta} \nonumber\\
    =& \sum_{\alpha = 1}^k \sum_{\beta \ne \alpha}^{k-f_t-1}\sum_{\gamma \ne \alpha, \beta}^{k-2f_t-2}\sum_{\delta \ne \alpha, \beta, \gamma}^{k-3f_t-3} \mathbb E (\bra{\omega_\alpha}) \mathbb E (\ket{\omega_\beta}) \mathbb E (\bra{\omega_\gamma}) \mathbb E(\ket{\omega_\delta}) \nonumber\\
    & + \sum_{\alpha=1}^k \sum_{\beta \ne \alpha}^{f_t} \sum_{\gamma \ne \alpha, \beta}^{k-f_t-1} \sum_{\delta \ne \alpha, \beta, \gamma}^{k-2f_t-2} \mathbb E(\braket{\omega_\alpha}{\omega_{f_\alpha(\beta)}}) \mathbb E(\bra{\omega_\gamma}) \mathbb E(\ket{\omega_\delta}) \nonumber\\
    & + \sum_{\alpha=1}^k \sum_{\beta \ne \alpha}^{k-f_t-1} \sum_{\gamma\ne\alpha,\beta}^{f_t} \sum_{\delta\ne\alpha,\beta,\gamma}^{k-2f_t-2} \mathbb E(\bra{\omega_\alpha}) \mathbb E(\ketbra{\omega_\beta}{\omega_{f_\beta(\gamma)}}) \mathbb E(\ket{\omega_\delta}) \nonumber\\
    & + \sum_{\alpha=1}^k \sum_{\beta\ne\alpha}^{k-f_t-1} \sum_{\gamma\ne\alpha,\beta}^{k-2f_t-2} \sum_{\delta\ne\alpha,\beta,\gamma}^{f_t} \mathbb E(\bra{\omega_\alpha}) \mathbb E(\ket{\omega_\beta}) \mathbb E(\braket{\omega_\gamma}{\omega_{f_\gamma(\delta)}}) \nonumber\\
    & + \sum_{\alpha=1}^k \sum_{\beta\ne\alpha}^{f_t} \sum_{\gamma\ne\alpha,\beta}^{k-f_t-2} \sum_{\delta\ne\alpha,\beta,\gamma}^{f_t} \mathbb E(\braket{\omega_\alpha}{\omega_{f_\alpha(\beta)}}) \mathbb E(\braket{\omega_\gamma}{\omega_{f_\gamma(\delta)}}) \nonumber\\
    & + \sum_{\alpha=1}^k \sum_{\beta\ne\alpha}^{k-f_t-1} \sum_{\gamma \ne \alpha,\beta}^{f_t} \sum_{\delta \ne \alpha, \beta, \gamma}^{f_t} \mathbb E (\braket{\omega_\alpha}{\omega_\beta} \braket{\omega_{f_\alpha(\delta)}}{\omega_{f_\beta(\delta)}}) \nonumber\\
    & + \sum_{\alpha=1}^k \sum_{\beta\ne\alpha}^{k-f_t-1} \sum_{\gamma\ne\alpha,\beta}^{f_t} \sum_{\delta\ne\alpha,\beta,\gamma}^{f_t} \mathbb E( \braket{\omega_\alpha}{\omega_\beta} \braket{\omega_{f_\beta(\gamma)}}{\omega_{f_\alpha(\delta)}}) \nonumber\\
    & + \sum_{\alpha=1}^k \sum_{\beta\ne\alpha}^{f_t} \sum_{\gamma\ne\alpha,\beta}^{f_t-1} \sum_{\delta\ne\alpha,\beta,\gamma}^{k-f_t-1} \mathbb E(\braket{\omega_\alpha}{\omega_{f_\alpha(\beta)}} \bra{\omega_{f_\beta(\gamma)}}) \mathbb E(\ket{\omega_\delta}) \nonumber\\
    & + \sum_{\alpha=1}^k \sum_{\beta\ne\alpha}^{f_t} \sum_{\gamma\ne\alpha,\beta}^{k-f_t-1} \sum_{\delta\ne\alpha,\beta,\gamma}^{f_t-1} \mathbb E (\braket{\omega_\alpha}{\omega_{f_\alpha(\beta)}} \mathbb E(\bra{\omega_\gamma}) \ket{\omega_{f_\beta(\delta)}} ) \nonumber\\
    & + \sum_{\alpha=1}^k \sum_{\beta\ne\alpha}^{k-f_t-1} \sum_{\gamma\ne\alpha,\beta}^{f_t} \sum_{\delta\ne\alpha,\beta,\gamma}^{f_t-1} \mathbb E (\bra{\omega_\alpha} \mathbb E(\ket{\omega_\beta}) \braket{\omega_{f_\alpha(\gamma)}}{\omega_{f_\gamma(\delta)}}) \nonumber\\
    & + \sum_{\alpha=1}^k \sum_{\beta\ne\alpha}^{k-f_t-1} \sum_{\gamma\ne\alpha,\beta}^{f_t} \sum_{\delta\ne\alpha,\beta,\gamma}^{f_t-1} \mathbb E(\bra{\omega_\alpha}) \mathbb E(\ket{\omega_\beta}\braket{\omega_{f_\beta(\gamma)}}{\omega_{f_\gamma(\delta)}}) \nonumber\\
    & + \sum_{\alpha=1}^k \sum_{\beta\ne\alpha}^{f_t} \sum_{\gamma\ne\alpha,\beta}^{f_t-1} \sum_{\delta\ne\alpha,\beta,\gamma}^{f_t-2} \mathbb E (\braket{\omega_\alpha}{\omega_{f_\alpha(\beta)}} \braket{\omega_{f_\beta(\gamma)}}{\omega_{f_\gamma(\delta)}})\\
    \le& \|c\|_1^{-4} k (k-1-f_t) (k-2-2f_t)(k-3-3f_t) \nonumber\\
    & + k f_t \theta \|c\|_1^{-2} (k-1-f_t)(k-2-2f_t) \nonumber\\
    & + k (k-1-f_t)(k-2-2f_t) \expval{\Psi}{\tilde \theta}{\Psi} \nonumber\\
    & + \|c\|_1^{-2} k(k-1-f_t)(k-2-2f_t)f_t\theta \nonumber\\
    & + k f_t \theta(k-2-2f_t)f_t\theta \nonumber\\
    & + \|c\|_1^{-4} k(k-1-f_t) f_t^2 \nonumber\\
    & + \|c\|_1^{-4} k(k-1-f_t) f_t^2 \nonumber\\
    & + k f_t |\theta|^2 (f_t-1)(k-1-f_t) \nonumber\\
    & + k f_t |\theta|^2 (f_t-1)(k-1-f_t) \nonumber\\
    & + \|c\|_1^{-2} k(k-1-f_t)f_t(f_t-1)|\theta| \nonumber\\
    & + \|c\|_1^{-2} k(k-1-f_t)f_t(f_t-1)|\theta| \nonumber\\
    & + k f_t |\theta|^2(f_t-1)(f_t-2),
  \end{align}
  \begin{align}
    &\sum_{\substack{(\alpha, \beta, \gamma, \delta)\\\text{all distinct}}} \braket{\omega_\alpha}{\omega_\beta}\braket{\omega_\gamma}{\omega_\delta} \nonumber\\
    \le& k^4 \|c\|_1^{-4}\\
    & + k^3 (-6 \|c\|_1^{-4} -6 \|c\|_1^{-4} f_t + f_t \theta \|c\|_1^{-2} + f_t \|c\|_1^{-2} |\theta| \nonumber\\
    & \quad \qquad + \|c\|_1^{-2} f_t \theta) \nonumber\\
    & + k^2 (-3 f_t \theta \|c\|_1^{-2} - 3 f_t^2\theta\|c\|_1^{-2}-3 f_t \theta \|c\|_1^{-2} - 3 f_t^2\theta\|c\|_1^{-2} \nonumber\\
    & \quad \qquad -3 f_t \theta \|c\|_1^{-2} - 3 f_t^2\theta\|c\|_1^{-2} +  f_t^2 \theta^2 + \|c\|_1^{-4}  f_t^2 \nonumber\\
    & \quad \qquad + \|c\|_1^{-4}  f_t^2 + f_t (f_t-1) |\theta|^2 + f_t(f_t-1)|\theta|^2 \nonumber\\
    & \quad \qquad + \|c\|_1^{-2} f_t(f_t-1) |\theta| + \|c\|_1^{-2} f_t(f_t-1) |\theta|) \nonumber\\
    & + k ( -\|c\|^{-4} (1+f_t)(2+2f_t)(3+3f_t) \nonumber\\
    & \quad \qquad + f_t \theta \|c\|_1^{-2} (1+f_t)(2+2f_t) \nonumber\\
    & \quad \qquad + (1+f_t)(2+2f_t)f_t\|c\|_1^{-2}|\theta| \nonumber\\
    & \quad \qquad + \|c\|_1^{-2} f_t \theta (1+f_t)(2+2f_t) \nonumber\\
    & \quad \qquad + f_t \theta (-2 -2f_t) f_t \theta \nonumber\\
    & \quad \qquad + \|c\|_1^{-4} (-1-f_t) f_t^2 \nonumber\\
    & \quad \qquad + \|c\|_1^{-4} (-1-f_t) f_t^2 \nonumber\\
    & \quad \qquad + f_t |\theta| (f_t-1)(-1-f_t)|\theta| \nonumber\\
    & \quad \qquad + f_t |\theta| (f_t-1)(-1-f_t)|\theta| \nonumber\\
    & \quad \qquad + \|c\|^{-2} (-1-f_t)f_t(f_t-1)|\theta| \nonumber\\
    & \quad \qquad + \|c\|^{-2} (-1-f_t)f_t(f_t-1)|\theta| \nonumber\\
    & \quad \qquad + f_t |\theta| (f_t-1)(f_t-2) |\theta|), \nonumber
  \end{align}
  where
  \begin{equation}
    \theta \equiv \mathbb E(\braket{\omega_\alpha}{\omega_{f_\alpha(\beta)}}),
  \end{equation}
  \begin{equation}
    \sigma \equiv \mathbb E (\ketbra{\omega_\alpha}{\omega_\alpha}),
  \end{equation}
  and
  \begin{equation}
    \tilde \sigma \equiv \sum_\delta^{f_t} \mathbb E (\ketbra{\omega_\gamma}{\omega_\gamma(\delta)}).
  \end{equation}

  We used the bounds
  \begin{align}
    |\mathbb E_{\delta \ne \gamma} (\braket{\omega_\gamma}{\omega_\delta(\epsilon)})| \le& \frac{|\mathbb E( \braket{\omega_\delta}{\omega_{f_\delta(\epsilon)}})|}{|\mathbb E( \braket{\omega_\delta}{\omega_\gamma})|}\\
    \le& |\mathbb E(\braket{\omega_\delta}{\omega_{f_\delta(\epsilon)}})|\\
    =& \theta,
  \end{align}
  and from the Cauchy-Schwartz inequality \(|\mathbb E (XY)|^2 {\le} |\mathbb E (X^2)| |\mathbb E(Y^2)| {\le} \mathbb E(|X|)^2 \mathbb E(|Y|)^2\),
  \begin{align}
    & |\mathbb E_{\alpha\ne\beta\ne\gamma\ne\delta}(\braket{\omega_\alpha}{\omega_\beta}\braket{\omega_{f_\alpha(\gamma)}}{\omega_{f_\beta(\delta)}})| \nonumber\\
    \le& \sqrt{\mathbb E|\braket{\omega_\alpha}{\omega_\beta}|^2 \mathbb E |\braket{\omega_{f_\alpha(\gamma)}}{\omega_{f_\beta(\delta)}}|^2}\\
    \le& \sqrt{\|c\|_1^{-4}\|c\|_1^{-4}}\\
    =& \|c\|_1^{-4},
  \end{align}
  \begin{align}
    & |\mathbb E(\braket{\omega_\alpha}{\omega_{f_\alpha(\beta)}}\bra{\omega_{f_\beta(\gamma)}}) \mathbb E(\ket{\omega_\delta})| \nonumber\\
    \le& \sqrt{\mathbb E|\braket{\omega_\alpha}{\omega_{f_\alpha(\beta)}}|^2 \mathbb E|\braket{\omega_{f_\beta(\gamma)}}{\omega_\delta}|^2}\\
    =& \sqrt{|\theta|^2|\theta|^2}\\
    =& |\theta|^2,
  \end{align}
  \begin{align}
    |\mathbb E(\braket{\omega_{f_\alpha(\beta)}}{\omega_{f_\beta(\gamma)}})| = |\theta|,
  \end{align}
  \begin{align}
    & |\mathbb E(\braket{\omega_\alpha}{\omega_\beta}\braket{\omega_{f_\alpha(\gamma)}}{\omega_{f_\gamma(\delta)}})| \nonumber\\
    \le& \sqrt{\mathbb E|\braket{\omega_\alpha}{\omega_\beta}|^2 \mathbb E |\braket{\omega_{f_\alpha(\gamma)}}{\omega_{f_\gamma(\delta)}}|^2 }\\
    =& \sqrt{ \|c\|_1^{-4} |\theta|^2} \\
    =& \|c\|_1^{-2} |\theta|,
  \end{align}
  \begin{align}
    & |\mathbb E\braket{\omega_\alpha}{\omega_{f_\alpha(\beta)}}\braket{\omega_{f_\beta(\gamma)}}{\omega_{f_\gamma(\delta)}}| \nonumber\\
    \le& \sqrt{|\mathbb E (\braket{\omega_\alpha}{\omega_{f_\alpha(\beta)}}) |^2 |\mathbb E (\braket{\omega_{f_\beta(\gamma)}}{\omega_{f_\gamma(\delta)}}|^2}\\
    =& |\theta|^2.
  \end{align}

  Note that
  \begin{equation}
    \expval{\Psi}{\tilde \sigma}{\Psi} \le f_t \|c\|_1^{-2} |\theta|.
  \end{equation}

  \begin{align}
    B' =& B_{\lambda=\alpha} + B_{\mu=\alpha} + B_{\lambda=\beta} + B_{\mu=\beta}\\
        & + B_{\lambda=\alpha; \mu=\beta} + B_{\mu=\alpha; \lambda=\beta}\nonumber\\
    =& 2 \Re [ \mathbb E(B_{\lambda=\alpha}) + \mathbb E(B_{\mu=\beta})] \\
        & + \mathbb E(B_{\lambda=\alpha; \mu=\beta} + B_{\mu=\alpha; \lambda=\beta}) \nonumber.
  \end{align}

  \begin{align}
    \mathbb E(B_{\lambda=\beta}) =& \sum_{\alpha\ne\beta\ne\gamma} \mathbb E(\braket{\omega_\alpha}{\omega_\beta} \braket{\omega_\beta}{\omega_\gamma})\\
    =& \sum_{\alpha=1}^k \sum_{\beta\ne\alpha}^{k-f_t-1} \sum_{\gamma\ne\alpha\ne\beta}^{k-f_t-2} \mathbb E(\bra{\omega_\alpha})\mathbb E(\ketbra{\omega_\beta}{\omega_\beta})\mathbb E(\ket{\omega_\gamma}) \nonumber\\
    & + \sum_{\alpha=1}^k \sum_{\beta\ne\alpha}^{f_t} \sum_{\gamma\ne\alpha,\beta}^{k-f_t-1} \mathbb E( \braket{\omega_\alpha}{\omega_{f_\alpha(\beta)}} \bra{\omega_{f_\alpha(\beta)}}) \mathbb E(\ket{\omega_\gamma}) \nonumber\\
    & + \sum_{\alpha=1}^k \sum_{\beta\ne\alpha}^{k-f_t-1} \sum_{\gamma\ne\alpha,\beta}^{f_t} \mathbb E(\bra{\omega_\alpha}) \mathbb E(\ket{\omega_\beta} \expval{\omega_\beta}{\omega_{f_\beta(\gamma)}}) \nonumber\\
                                  & + \sum_{\alpha=1}^k \sum_{\beta\ne\alpha}^{f_t} \sum_{\gamma\ne\alpha,\beta}^{f_t-1} \mathbb E( \braket{\omega_\alpha}{\omega_{f_\alpha(\beta)}} \braket{\omega_{f_\alpha(\beta)}}{\omega_{f_\beta(\gamma)}} )\\
    \le& \|c\|_1^{-2} k(k-1-f_t)(k-2-f_t)\expval{\Psi}{\sigma}{\Psi} \nonumber\\
                                  & + k f_t |\theta|^2 (k-f_t-1) \nonumber\\
                                  & + \|c\|_1^{-2} k (k-f_t-1) f_t |\theta| \nonumber\\
                                  & + k f_t |\theta|^2 (f_t-1).
  \end{align}

  \begin{equation}
    |\mathbb E(B_{\mu=\beta})| = |\mathbb E(B_{\lambda=\beta})|.
  \end{equation}

  \begin{align}
    &|\mathbb E(B_{\lambda=\alpha;\mu=\beta} + B_{\mu=\alpha;\lambda=\beta})| \nonumber\\
    \le& 2 \sum_{\alpha\ne\beta} \mathbb E |\braket{\omega_\alpha}{\omega_\beta}\braket{\omega_\beta}{\omega_\alpha}| \\
    =& 2 \sum_{\alpha=1}^k \sum_{\beta\ne\alpha}^{k-1-f_t} \mathbb E |\braket{\omega_\alpha}{\omega_\beta}| \mathbb E |\braket{\omega_\beta}{\omega_\alpha}|\\
    & + 2 \sum_{\alpha=1}^k \sum_{\beta\ne\alpha}^{f_t} \mathbb E |\braket{\omega_\alpha}{\omega_{f_\alpha(\beta)}} \braket{\omega_{f_\alpha(\beta)}}{\omega_\alpha}|\\
    \le& 2k(k-1-f_t)\Tr (\sigma^2) + 2 k f_t |\theta|^2\\
    \le& 2k(k-1-f_t) + 2 k f_t |\theta|^2.
  \end{align}

  \begin{align}
    \implies \mathbb E (B') \le& 4 k^3 \|c\|_1^{-2} \expval{\Psi}{\sigma}{\Psi}\\
                               & + 4 k^2 ( -3 \|c\|_1^{-2} \expval{\Psi}{\sigma}{\Psi} - 2 \|c\|_1^{-2} f_t \expval{\Psi}{\sigma}{\Psi} \nonumber\\
                               & \quad \qquad f_t |\sigma|^2 + \|c\|^{-2} f_t |\theta| + 1/2) \nonumber\\
                               & + 4k (\|c\|_1^{-2}(1+f_t)(2+2f_t)\expval{\psi}{\sigma}{\psi} \nonumber\\
                               & \quad \qquad + |\theta|^2 (-f_t-1) + \|c\|_1^{-2} (-1-f_t)f_t |\theta| \nonumber\\
                               & \quad \qquad + f_t |\theta| (f_t-1) |\theta| + (-1-f_t)/2 \nonumber\\
                               & \quad \qquad + f_t |\theta|^2/2). \nonumber
  \end{align}

  Hence,
  \begin{align}
    \text{Var}(\braket{\psi}{\psi}) =& \|c\|_1^4 k^{-4} \mathbb E(B^2) - 1 + 2 \gamma k^{-1} - \gamma^2 k^{-2}\\
    \le& \|c\|_1^4 \{ \|c\|_1^{-4} + k^{-1} [4 \|c\|_1^{-2} \expval{\Psi}{\sigma}{\Psi} \nonumber\\
                                         & \quad \quad - 6 \|c\|_1^{-4} (f_t+1) + 3\|c\|_1^{-2}f_t|\theta| ] \nonumber\\
                                         & + k^{-2} [-\|c\|_1^{-2} (3+2f_t)\expval{\Psi}{\sigma}{\Psi} + f_t|\theta|^2 \nonumber\\
                                         & \quad \qquad + \|c\|_1^{-2} f_t |\theta| + 2 - 9\|c\|_1^{-2}f_t\theta \nonumber\\
                                         & \quad \qquad - 9  f_t^2 \|c\|_1^{-2} \theta +  f_t^2 \theta^2 + 2 \|c\|_1^{-4}  f_t^2 \nonumber\\
                                         & \quad \qquad + 2 f_t(f_t-1)|\theta|^2 \nonumber\\
                                         & \quad \qquad + 2\|c\|_1^{-2}f_t(f_t-1)|\theta|]\} \nonumber\\
                                     & - 1 + 2 k^{-1} \gamma - k^{-2} \gamma^2 \nonumber\\
                                     &+ k^{-3} \|c\|_1^4 [\|c\|_1^{-4} (1+f_t)(2+2f_t)(3+3f_t) \nonumber\\
                                          & \quad \qquad -\|c\|_1^{-4}f_t^2(1+f_t) \nonumber\\
                                            & \quad \qquad -\|c\|_1^{-4}f_t^2(1+f_t) \nonumber\\
                                            & \quad \qquad + 4\|c\|_1^{-2}(1+f_t)(2+2f_t)\expval{\psi}{\sigma}{\psi} \nonumber\\
                                            & \quad \qquad - 2(1+f_t)] \end{align}
  \begin{equation}
    \gamma = 1 + f_t - \|c\|_1^2 \theta f_t.
  \end{equation}

  \begin{align}
    \implies \text{Var}(\braket{\psi}{\psi}) \le& \|c\|_1^4\{ k^{-1} [4 \|c\|_1^{-2} \expval{\Psi}{\sigma}{\Psi} \nonumber\\
                                                    & - 6\|c\|_1^{-4}(f_t+1) + 3 \|c\|_1^{-2} f_t |\theta|] \nonumber\\
                                                    & + k^{-2} [ -\|c\|_1^{-2}(3+2f_t)\expval{\Psi}{\sigma}{\Psi} \nonumber\\
                                                    & + f_t|\theta|^2 + \|c\|_1^{-2} f_t|\theta| + 2 - 9 \|c\|_1^{-2}f_t\theta \nonumber\\
                                                    & - 9 f_t^2\|c\|_1^{-2}\theta +  f_t^2 \theta^2 + 2\|c\|_1^{-4} f_t^2 \nonumber\\
                                                    & + 2f_t(f_t-1)|\theta|^2 + 2\|c\|_1^{-2}f_t(f_t-1)|\theta|]\} \nonumber\\
                                                    & + 2 k^{-1}(1+f_t-\|c\|_1^2\theta f_t) \nonumber\\
                                                    & - k^{-2}(1 + 2f_t +  f_t^2 + \|c\|_1^4 \theta^2  f_t^2 \nonumber\\
                                                    & - 2\|c\|_1^2\theta f_t - 2\|c\|_1^2\theta  f_t^2) \nonumber\\
                                                & + k^{-3} \|c\|_1^4 [\|c\|_1^{-4} (1+f_t)(2+2f_t)(3+3f_t) \nonumber\\
                                          & \quad \qquad -\|c\|_1^{-4}f_t^2(1+f_t) \nonumber\\
                                            & \quad \qquad -\|c\|_1^{-4}f_t^2(1+f_t) \nonumber\\
                                            & \quad \qquad + 4\|c\|_1^{-2}(1+f_t)(2+2f_t)\expval{\psi}{\sigma}{\psi} \nonumber\\
                                                & \quad \qquad - 2(1+f_t)]
  \end{align}

  We now take every factor and impose the bound, \(0 \le f_t \theta \le \|c\|_1^2\theta \ll 1\), as \(t \rightarrow \infty\), since they converge to \(0\) exponentially w.r.t. \(t\). We also assume \(f_t \gg 1\).

  \begin{align}
    \implies \text{Var}(\braket{\psi}{\psi}) \le& \|c\|_1^4\{k^{-1}[4\|c\|_1^{-2}\expval{\Psi}{\sigma}{\Psi} \nonumber\\
                                                    & \quad \quad - 6 \|c\|_1^{-4} (f_t+1) + 3\|c\|_1^{-2}f_t|\theta|] \nonumber\\
                                                    & +k^{-2} [-\|c\|_1^{-2}(3+2f_t)\expval{\Psi}{\sigma}{\Psi}+f_t|\theta|^2\nonumber\\
                                                    & -8\|c\|_1^{-2}f_t|\theta| + 2 - 9 f_t^2\|c\|_1^{-2}\theta +  f_t^2 \theta^2 \nonumber\\
                                                    & + 2\|c\|_1^{-4} f_t^2 + 2  f_t^2|\theta|^2 + 2  f_t^2\|c\|_1^{-2}\theta] \nonumber\\
                                                    & + 2k^{-1}(1+ f_t) \nonumber\\
                                                & - k^{-2}( f_t^2 + \|c\|_1^4\theta^2 f_t^2) \nonumber\\
                                                & + k^{-3} [\|c\|_1^{-4} 6 f_t^3-2 \|c\|_1^{-4}f_t^3 \nonumber\\
                                            & \quad \qquad -2 \|c\|_1^{-4}f_t^2 + 4\|c\|_1^{-2}2f^2_t\expval{\psi}{\sigma}{\psi} \nonumber\\
                                                & \quad \qquad - 2 f_t]\}\\
    \le& \|c\|_1^4\{k^{-1}[4\|c\|_1^{-2}\expval{\Psi}{\sigma}{\Psi} - 4 \|c\|_1^{-4} \nonumber\\
                                                    & \qquad \qquad -4\|c\|_1^{-4}f_t + 3\|c\|_1^{-2}f_t|\theta|] \nonumber\\
                                                & \qquad + k^{-2}[2 + 2\|c\|_1^{-4}f_t^2] \nonumber\\
                                                & \quad + k^{-3} (4 \|c\|_1^{-4} f_t^3 + 8 \|c\|_1^{-2} f_t^2 \expval{\psi}{\sigma}{\psi} \nonumber\\
                                                & \qquad \qquad- 2 \|c\|_1^{-4}f_t^2) \}\\
    \le& k^{-1}[4 \|c\|_1^2\expval{\Psi}{\sigma}{\Psi} - 4 - 4f_t + 3\|c\|_1^2f_t|\theta|] \nonumber\\
                                                & \quad + k^{-2}[2 \|c\|_1^4 + 2 f_t^2] \nonumber\\
                                                & \quad + k^{-3}[4 f_t^3 + 8\|c\|_1^{2} f_t^2 \expval{\psi}{\sigma}{\psi} - 2 f_t^2] \\
    \le& k^{-1} 4(C-1-f_t) + k^{-2} 2 \|c\|_1^4 \nonumber\\
    \label{eq:variance_final}
                                                & + k^{-3}(4 f_t^3 + 2(4 C- 1)f_t^2),
  \end{align}
  for \(C \equiv \|c\|_1^2 \expval{\Psi}{\sigma}{\Psi}\).

  Therefore, plugging Eq.~\ref{eq:tighterstateoftheartprefactor} and Eq.~\ref{eq:variance_final} into Eq.~\ref{eq:rho1error_triangleineq}, we find
  \begin{align}
    \|\rho_1 - \ketbra{\Psi}{\Psi}\|_1 \le& \sqrt{4k^{-1}(C-1-f_t)+2k^{-2}\|c\|_1^4} \nonumber\\
                                          & \quad \overline{+ k^{-3}(4 f_t^3 + 2(4 C- 1)f_t^2)} + 1.
  \end{align}

  For Clifford magic states, \(C=1\)~\cite{Seddon20}. In general, \(C\) is a low exponential with \(t\). Therefore, we make the assumption that \(f_t \gg C\) as \(t \rightarrow \infty\) and so \(f_t^3 \gg C f_t^2\). We also can drop the factor of \(1\) as \(t \rightarrow \infty\).
  \begin{align}
    \implies& \|\rho_1 - \ketbra{\Psi}{\Psi}\|_1 \substack{\le\\t \rightarrow \infty} \nonumber\\
    &\sqrt{-4k^{-1}f_t+k^{-2}(2\|c\|_1^4+f_t^2) + k^{-3}f_t^3}.
  \end{align}

  To get \(\|\rho_1 - \ketbra{\Psi}{\Psi}\|_1 \le \delta\) it follows that
  \begin{widetext}
  \begin{align}
    \label{eq:kscaling}
    k \ge& \frac{2^{2/3}}{6 \delta^2} \sqrt[3]{27 \delta ^4 f_t^3-36 \delta ^2 f_t^3-128 f_t^3+\sqrt{\left(\left(27 \delta ^4-36 \delta ^2-128\right) f_t^3-72 \delta ^2 f_t \xi ^{2 t}\right)^2-4 \left(\left(3 \delta ^2+16\right) f_t^2+6 \delta ^2 \xi ^{2 t}\right)^3}-72 \delta ^2 f_t \xi ^{2 t}} \nonumber\\
         & +\frac{1}{6 \delta ^2}\frac{2 \sqrt[3]{2} \left(\left(3 \delta ^2+16\right) f_t^2+6 \delta ^2 \xi ^{2 t}\right)}{\sqrt[3]{27 \delta ^4 f_t^3-36 \delta ^2 f_t^3-128 f_t^3+\sqrt{\left(\left(27 \delta ^4-36 \delta ^2-128\right) f_t^3-72 \delta ^2 f_t \xi ^{2 t}\right)^2-4 \left(\left(3 \delta ^2+16\right) f_t^2+6 \delta ^2 \xi ^{2 t}\right)^3}-72 \delta ^2 f_t \xi ^{2 t}}}\nonumber\\
         & -\frac{4}{3} \frac{f_t}{\delta^2}.
  \end{align}

  Plugging in \(f_t = \beta \xi^t / \delta\) produces
  \begin{align}
    \label{eq:kscaling_2}
    k \ge& \frac{2^{2/3} \sqrt[3]{\frac{\beta  \left(\beta ^2 \left(27 \delta ^4-36 \delta ^2-128\right)-72 \delta ^4\right) \xi ^{3 t}}{\delta ^3}+3 \sqrt{3} \sqrt{-\frac{\left(\beta ^6 \left(-27 \delta ^4+76 \delta ^2+272\right)+8 \beta ^4 \left(21 \delta ^2+8\right) \delta ^2+16 \beta ^2 \left(3 \delta ^2+4\right) \delta ^4+32 \delta ^8\right) \xi ^{6 t}}{\delta ^2}}}}{6 \delta ^2}\\
    & +\frac{\frac{2 \sqrt[3]{2} \left(\beta ^2 \left(3 \delta ^2+16\right)+6 \delta ^4\right) \xi ^{2 t}}{\delta ^2 \sqrt[3]{\frac{\beta  \left(\beta ^2 \left(27 \delta ^4-36 \delta ^2-128\right)-72 \delta ^4\right) \xi ^{3 t}}{\delta ^3}+3 \sqrt{3} \sqrt{-\frac{\left(\beta ^6 \left(-27 \delta ^4+76 \delta ^2+272\right)+8 \beta ^4 \left(21 \delta ^2+8\right) \delta ^2+16 \beta ^2 \left(3 \delta ^2+4\right) \delta ^4+32 \delta ^8\right) \xi ^{6 t}}{\delta ^2}}}}-\frac{8 \beta  \xi ^t}{\delta }}{6 \delta ^2}. \nonumber
  \end{align}

Numerical evaluation shows that for \(\delta<0.1\) the optimal \(\beta \approx 10 \delta^2\) (see Figure~\ref{fig:optimalbetanumericalruns}).

\begin{figure}[H]
\includegraphics[scale=0.38]{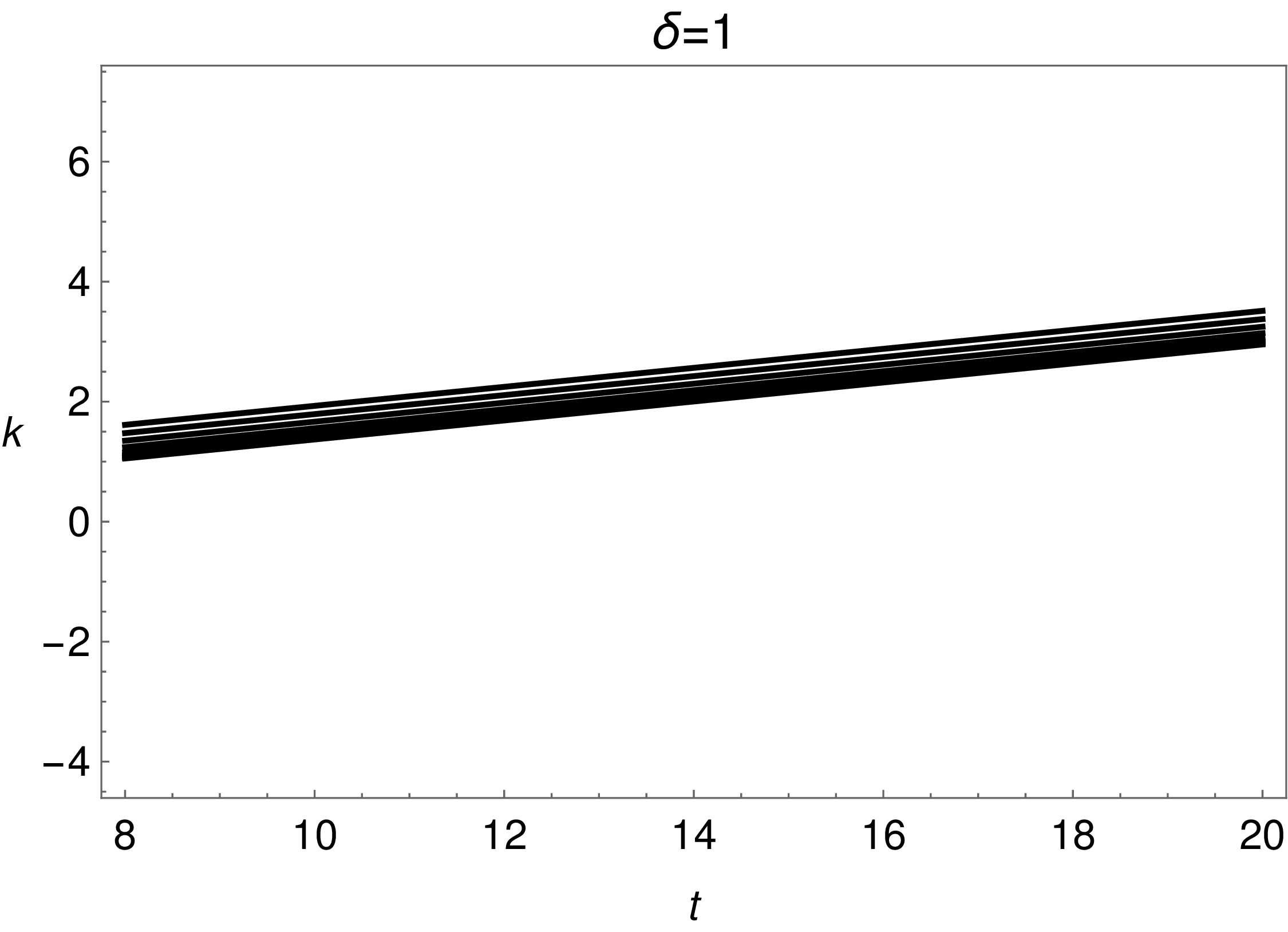}
\includegraphics[scale=0.38]{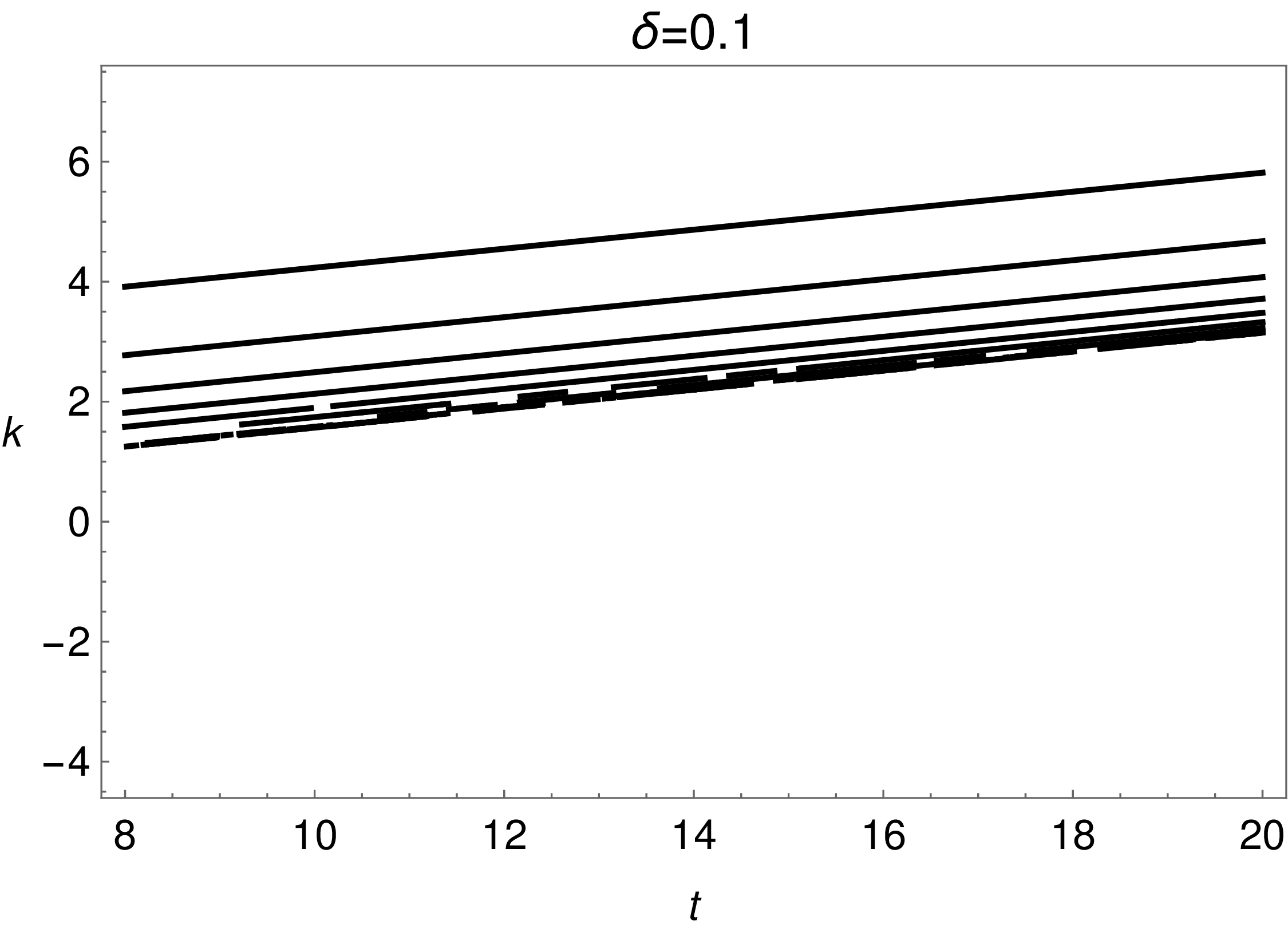}\\
\includegraphics[scale=0.38]{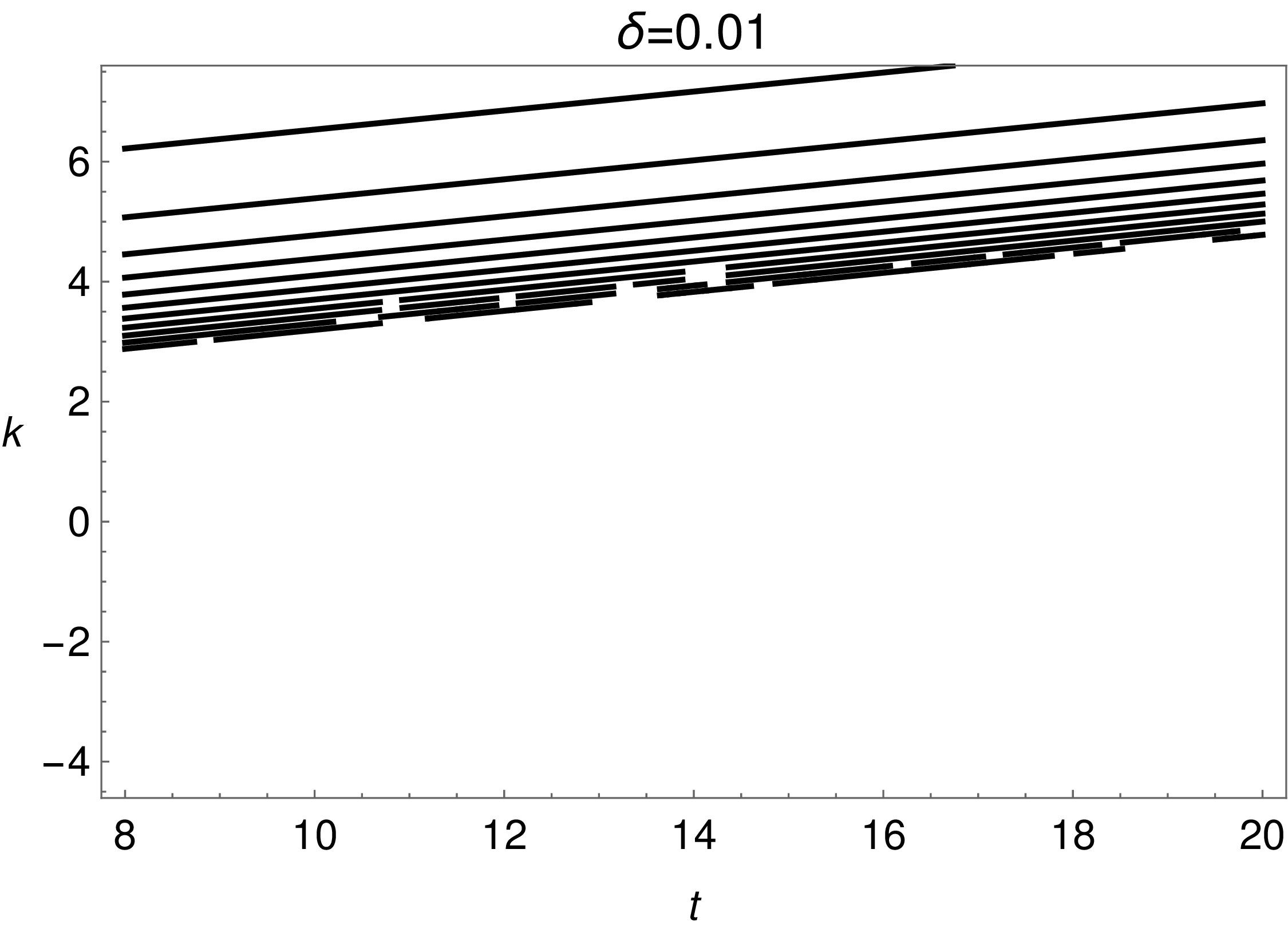}
\includegraphics[scale=0.38]{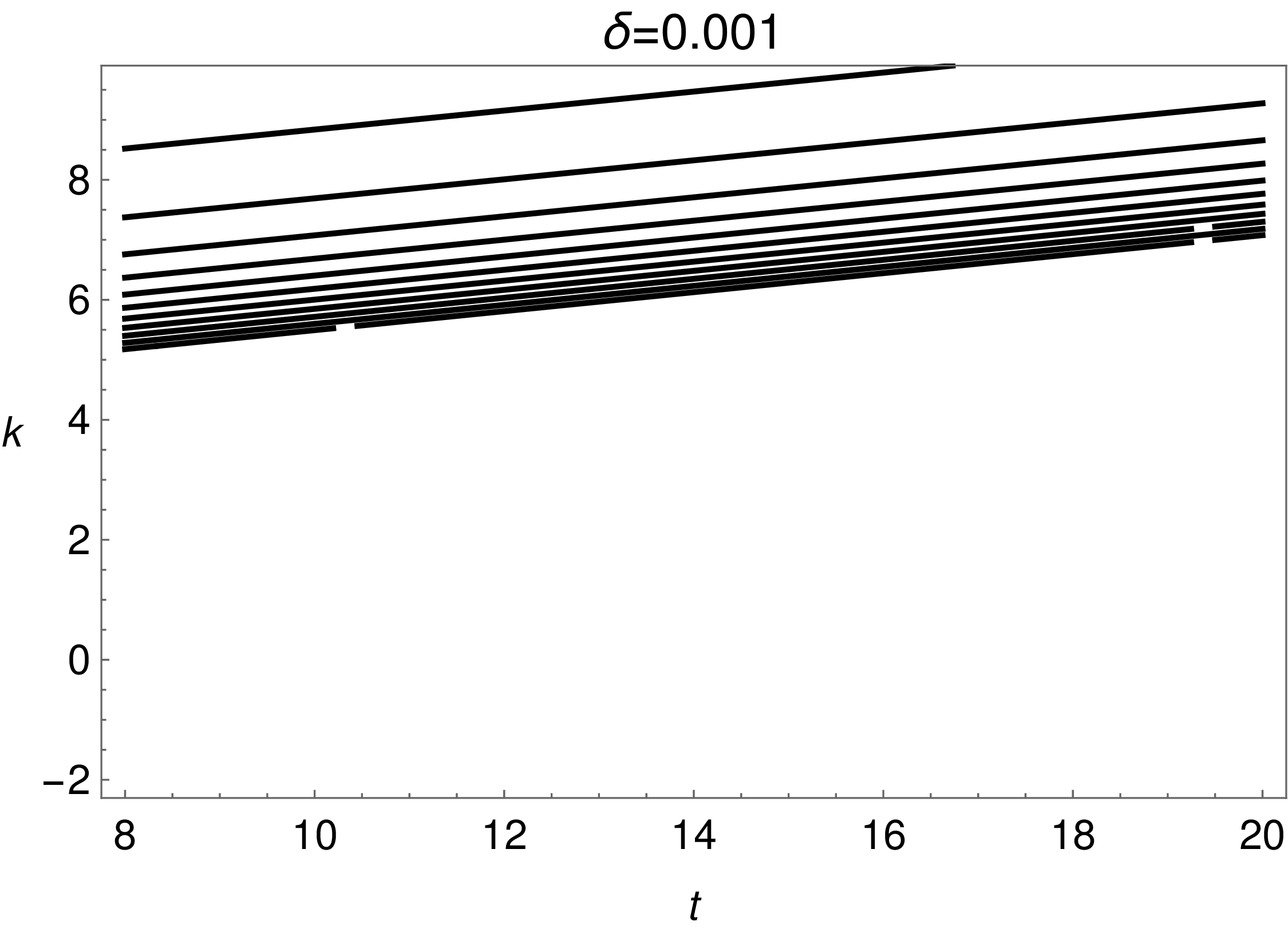}
\caption{\(k\) vs \(t\) for \(\delta=1,\,0.1,\,0.01,\,0.001\) and \(0 \le \beta \le 10 \delta^2\) in steps of \(\delta^2\). The lower the curve, the larger the \(\beta\). An decrease of \(\lesssim 0.05\) is observed for \(\delta < 0.1\) and \(t\ge 8\).}
\label{fig:optimalbetanumericalruns}
\end{figure}
\end{widetext}

Plugging this into Eq.~\ref{eq:kscaling_2} and taking the limit \(\delta \rightarrow 0\) produces

\begin{equation}
  \label{eq:kscaling_withoptimalbeta}
  k \ge (\sqrt{402}-20) \xi_t \delta^{-1} \approx 0.05 \xi^t \delta^{-1}.
\end{equation}

  We compare Eq.~\ref{eq:kscaling_withoptimalbeta} to the \(\beta=0\) leading-order term by dividing by \((\sqrt{2})\delta^{-1} \xi^t\),
  \begin{equation}
    \frac{(\sqrt{402}-20)\xi_t \delta^{-1}}{(\sqrt{2}) \xi_t \delta^{-1}} \approx 28.28.
  \end{equation}

  We compare Eq.~\ref{eq:kscaling_withoptimalbeta} to the prior state-of-the-art by dividing by \((2 + \sqrt{2})\delta^{-1} \xi^t\),
  \begin{equation}
    \frac{(\sqrt{402}-20)\xi_t \delta^{-1}}{(2+\sqrt{2}) \xi_t \delta^{-1}} \approx 68.28.
  \end{equation}

\end{proof}

\clearpage



\section{Other Comparisons}
\label{app:otherstateoftheart}

For mixed state magic states, the current state-of-the-art only scales as \(\mathcal O(\xi_t \delta^{-1})\) on average, where \(\xi_t\) is appropriately generalized to mixed states (see~\cite{Seddon20} for the algorithm on how to do this). This is because \(C \equiv \|c\|_1^2 \expval{\Psi}{\sigma}{\Psi} \ne 1\) in the proof of Theorem~\ref{th:ensemblebound_general} for non-Clifford magic states. However, for our method, this is no longer the case since the full term in the proof is \(C - 1 - f_t\) and \(f_t\) can be increased such that \(C - 1 < f_t\). This means that the correlated method not only lowers the prefactor of~\cite{Seddon20}, it also scales more favorably than the prior state-of-the-art average-case with its worst-case scaling.

Since the current state-of-the-art~\cite{Seddon20} was published, we have found two subsequent declarations in the literature of newer algorithms that outperform~\cite{Seddon20} in some regimes, though not in the worst-case.

~\cite{Pashayan21} claims to exhibit state-of-the-art weak simulation in certain regimes compared to~\cite{Seddon20}. Their favorable regime seems to be large \(p\) and low numbers of marginal probability calculations compared to the number of magic states. Unfortunately, they do not disclose the prefactors of their scaling. 

The second newer algorithm~\cite{Bravyi21} shows polynomial factor reduction in transforming approximate stabilizer decompositions to calculating outcome probabilities or outcomes in the case of shallow circuits. These improvements apply to lower the polynomial factors in our method too, and so are transferrable to improve the correlated sampling method in this regime too. Shared polynomial prefactors do not affect our relative comparisons as they get cancelled out in the ratio expressed in Eq.~\ref{eq:strongsimlowercost}. (Performing marginal calculations adds prefactors of \(t^3 w^3\) for \(t\) magic states and \(w\) marginals due to decomposing probability calculation to chain-rule marginals as described in Eq.~\(116\) of~\cite{Seddon20}. There are also factors of inverse multiplicative error squared, \(\epsilon^{-2}\), that enter when calculating outcome probabilities that are shared across weak and strong simulation.)

Factors of multiplicative error do play a role when comparing strong simulation with exact simulation. Strong simulation does not incur such an additional penalty since it already uses a full decomposition of the magic state (i.e. it has zero multiplicative error). This is the reason for the factor of \(12\) in Eq.~\ref{eq:exactsimlowercost} for strong simulation, which is obtained by budgeting sufficient additive error for the multiplicative error needed for outcome estimation~\cite{Seddon20}.

\end{document}